\newtheoremstyle{obs}% name
  {3pt}%      Space above
  {3pt}%      Space below
  {}%         Body font
  {}%         Indent amount (empty = no indent, \parindent = para indent)
  {\bfseries}% Thm head font
  {.}%        Punctuation after thm head
  {.5em}%     Space after thm head: " " = normal interword space;
\theoremstyle{obs}
\newtheorem{example}[theorem]{Example}
\newcommand{\R}{\mathbb{R}}
\newcommand{\lcf}{\lbrack\! \lbrack}
\newcommand{\rcf}{\rbrack\! \rbrack}
\newcommand{\lvec}[1]{\overleftarrow{#1}}
\newcommand{\rvec}[1]{\overrightarrow{#1}}
\newcommand{\st}{\;\ifnum\currentgrouptype=16 \middle\fi|\;}
\begin{document}
\title{Variational order for forced Lagrangian systems}
\author{ D. Mart\'{\i}n de Diego, R. Sato Mart{\'\i}n de Almagro
\\[2mm]
{\small  Instituto de Ciencias Matem\'aticas (CSIC-UAM-UC3M-UCM)} \\
{\small C/Nicol\'as Cabrera 13-15, 28049 Madrid, Spain}}
\date{\today}

\maketitle
\begin{abstract}
We are able to derive the equations of motion for forced mechanical systems in a purely variational setting, both in the context of Lagrangian or Hamiltonian mechanics, by duplicating the variables of the system as introduced by \cite{Galley13, Galley14}. We show that this construction is useful to design high-order integrators for forced Lagrangian systems and, more importantly, we give a characterization of the order of a method applied to a forced system using the corresponding variational order of the duplicated one.
\end{abstract}

\tableofcontents

\section{Introduction}\label{sec:Intro}
In recent years, a new area of numerical analysis has been developed with the idea of designing numerical methods to integrate the evolution of ordinary and partial differential equations preserving, as much as possible, the qualitative features of the original dynamics (see \cite{hairer,serna,blanes}). This area is known as {\sl geometric integration}. 

An important property to preserve for variational Lagrangian and Hamiltonian systems is {\sl symplecticity}. A particularly elegant way to derive symplectic integrators is by discretizing Hamilton's principle, i.e. discretizing the variational principle first instead of the equations of motion. Numerical methods derived using this procedure are called {\sl variational integrators} and automatically preserve symplecticity and momentum and exhibit quasi-energy conservation for exponentially long times (see \cite{marsden-west} and references therein).

One of the difficulties is obtaining high-order methods since usually many of the typical examples derived in the literature are second order schemes, derived from the midpoint or trapezoidal rules essentially, but it is important to study the construction of high-order symplectic schemes (see \cite{hairer}). This is the objective of some recent papers in the last years (see for instance \cite{marsden-west,haleo,leoshin,campos,campos2,boma}).

A key ingredient that considerably lowers the difficulty of proving the order of an integrator is the variational error theorem stated in \cite{marsden-west}. The proof given there was correctly formulated and proved by \cite{PatrickCuell}. In consequence, the equivalent proof for forced systems given in \cite{marsden-west} was an open question. 

In this paper, apart from laying a geometric framework for Galley's theory both in Lagrangian and Hamiltonian sides, we observe that by using the duplication technique introduced by \cite{Galley13, Galley14} we can give an answer to the question of the order of forced Lagrangian systems using the proof for free Lagrangian systems given by \cite{PatrickCuell} (see also \cite{GrFe}).  

In particular, we can derive numerical methods for forced systems using the same methods that were previously developed for standard variational integrators. Preliminary numerical tests of this principle under the name of \emph{slimplectic} integrators can be found in \citep{Tsang_Galley_Stein_Turner15} and our approach can be used to rigorously derive the order of the methods previously developed for forced systems (see, for instance, \cite{OJM,PaLe} and references therein). 

\section{Forced Lagrangian and Hamiltonian systems}\label{sec:forced_systems}
A Lagrangian system is defined by a smooth manifold $Q$, the configuration manifold, and a smooth function $L: TQ\rightarrow {\mathbb R}$, the Lagrangian function, which determines the dynamics. Here $TQ$ denotes the tangent bundle of $Q$ with canonical projection $\tau_Q: TQ\rightarrow Q$. 
Let us consider local coordinates $(q^i)$ on $Q$, $1, \ldots, n=\dim Q$. The corresponding fibered coordinates on $TQ$ will be denoted by $(q^i, v^i)$ where $\tau_Q(q^i, v^i)=(q^i)$. 

Consider curves $c:[a, b]\subseteq {\mathbb R} \rightarrow Q$ of class $C^2$ connecting two fixed points $q_0, q_1\in Q$ and the collection of all these curves
\[
C^2(q_0, q_1, [a,b])=\{ c: [a, b] \rightarrow Q\, | \; c\in C^2([a, b]), c(a)=q_0, c(b)=q_1\}\; .
\]
whose tangent space is
\[
T_c C^2(q_0, q_1, [a,b])=\{ X: [a, b]\rightarrow TQ\, |\; X\in C^1([a, b]), \tau_Q\circ X=c \hbox{   and   } X(a)=X(b)=0\}\; .
\]
Given the Lagrangian function $L: TQ\rightarrow {\mathbb R}$ we define the functional: 
\[
\begin{array}{rrcl}
{\mathcal J}:& C^2(q_0, q_1, [a,b])&\longrightarrow&{\mathbb R}\\
                    & c&\longmapsto& \displaystyle{\int_a^b L(c(t), \dot{c}(t))\, \mathrm{d}t}
\end{array}
\]
\begin{definition}\label{hamp}{\bf (Hamilton's principle) }{ A curve $c\in C^2(q_0, q_1, [a, b])$ is a solution of the Lagrangian system defined by $L: TQ\rightarrow {\mathbb R}$ if and only if $c$ is a critical point of the functional ${\mathcal J}$, i.e. $d{\mathcal J}(c)=0$}
\end{definition} 
 
Using standard techniques from variational calculus, it is easy to show that the curves $c(t)=(q^i(t))$ solutions of the Lagrangian system defined by $L$ are the solutions $(c(t), \dot{c}(t))$, with $\dot{} = \frac{\mathrm{d}}{\mathrm{d}t}$, of the following system of second-order implicit  differential equations:
\[
\frac{\mathrm{d}}{\mathrm{d}t}\left( \frac{\partial L}{\partial v^i}\right)-\frac{\partial L}{\partial {q}^i}=0\; .
\]
which are the well-known Euler-Lagrange equations (see \cite{AM87}).

Let us denote by
$
S = \mathrm{d}q^i \otimes \frac{\partial}{\partial v^i}
$
and
$
\Delta = v^i \, \frac{\partial}{\partial v^i}
$
the vertical endomorphism and the Liouville vector field on $TQ$ (see \cite{ManuelMecanica} for intrinsic definitions).

The Poincar\'e-Cartan 2-form is defined by
$
\omega_L = - \mathrm{d}\theta_L \; , \, \theta_L = S^*(\mathrm{d}L)
$
and the energy function
$
E_L = \Delta(L)-L
$
which in local coordinates read as
\begin{eqnarray*}
\theta_L &=& \frac{\partial L}{\partial v^i} \, \mathrm{d} q^i \\
\omega_L &=& \mathrm{d}q^i \wedge \mathrm{d}\left( \frac{\partial L}{\partial v^i}\right)\\
E_L &=& v^i \frac{\partial L}{\partial v^i} - L (q, v)\; .
\end{eqnarray*}
 Here $S^*$ denotes the adjoint operator of $S$.

The Lagrangian $L$ is said to be regular if the Hessian matrix whose entries are
$$
W_{ij} = \frac{\partial^2 L}{\partial v^i \partial v^j}
$$
is regular, and in this case, $\omega_L$ is a symplectic form on $TQ$. We will assume in the sequel that $L$ is regular. 

The Euler-Lagrange equations are geometrically encoded as the equations for the flow of the vector field $X_{E_L}$: 
\[
\imath_{X_{E_L}}\omega_L = \mathrm{d}E_L\; .
\]

An {\sl external force} is expressed as a map $F: TQ\rightarrow T^*Q$ such that $\pi_Q \circ F=\tau_Q$ where $\tau_Q: TQ \rightarrow Q$ is the canonical projection. In coordinates $F(q^i, v^i)=(q^i, F_i(q, v))$. 
\[
\xymatrix{
 TQ \ar[rr]^{F}\ar[rd]_{\tau_Q}&& T^*Q\ar[ld]^{\pi_Q}\\\
&Q&
}
\]
where $T^*Q$ is the cotangent bundle with canonical projection $\pi_Q: T^*Q\rightarrow Q$.

Given a force we can construct a semibasic 1-form $\mu_F\in \Lambda^1(TQ)$ by $\langle \mu_F(u_q) , W_{u_q}\rangle=\langle F(u_q), T\tau_Q (W_{u_q})\rangle$, for all $W_{u_q}\in T_{u_q}TQ$. In coordinates
\[
\mu_F=F_i(q^i, v^i)\, \mathrm{d}q^i\; .
\]
Construct the vertical vector field $Z^v_F$ given by $\imath_{ Z^v_F}\omega_L=-\mu_F$. In coordinates
\[
Z^v_F=W^{ij} F_j\frac{\partial}{\partial v^i}\; ,
\]
where $W^{ij}$ denote the components of the inverse of the Hessian matrix $W$.
The dynamics of the forced Lagrangian system is determined by the vector field $X_{E_L}+Z^v_F$: 
\[
\imath_{X_{E_L}+Z^v_F}\omega_L = \mathrm{d}E_L -\mu_F\; .
\]
The integral curves of $X_{E_L}+Z^v_F$ are the solutions of the forced Euler-Lagrange equations: 
\[
\frac{\mathrm{d}}{\mathrm{d}t}\left( \frac{\partial L}{\partial v^i}\right)-\frac{\partial L}{\partial {q}^i}=F_i\; .
\]

Now, we move on to the Hamiltonian description of systems subjected to external forces. The cotangent bundle $T^*Q$ of a differentiable manifold $Q$ is equipped with a canonical exact symplectic structure $\omega_Q=d\theta_Q$, where $\theta_Q$ is the canonical 1-form on $T^*Q$ defined point-wise by
\[
(\theta_Q)_{\alpha_q}(X_{\alpha_q})=\langle \alpha_q, T_{\alpha_q}\pi_Q(X_{\alpha_q})\rangle
\]
where $X_{\alpha_q}\in T_{\alpha_q}T^*Q$, $\alpha_q\in T_q^*Q$.

In canonical bundle coordinates $(q^i, p_i)$ on $T^*Q$ the projection reads as $\pi_Q(q^i, p_i)=(q^i)$, and 
\[
\theta_Q= p_i\, \mathrm{d}q^i\; ,\ 
\omega_Q= \mathrm{d}q^i\wedge \mathrm{d}p_i\; .
\]

Given a Hamiltonian function $H: T^*Q\rightarrow {\mathbb R}$ we define the Hamiltonian vector field
\[
\imath_{X_H}\omega_Q=\mathrm{d}H\; 
\]
which can be written locally as
\[
X_H=\frac{\partial H}{\partial p_i}\frac{\partial}{\partial q^i}- \frac{\partial H}{\partial q^i}\frac{\partial}{\partial p_i}\; .
\]
 Its integral curves are determined by Hamilton's equations: 
\begin{eqnarray*}
\frac{\mathrm{d}q^i}{\mathrm{d}t}&=&\frac{\partial H}{\partial p_i}\; ,\\
\frac{\mathrm{d}p_i}{\mathrm{d}t}&=&-\frac{\partial H}{\partial q^i}\; .
\end{eqnarray*}

Let $TQ$ be the tangent bundle of $Q$ with canonical projection $\tau_Q: TQ\rightarrow Q$ and canonical coordinates $(q^i, v^i)$ where $\tau_Q(q^i, v^i)=(q^i)$. 

Given a function $H: T^*Q\rightarrow {\mathbb R}$, a Hamiltonian function, we may construct the transformation $\mathbb{F}H: T^*Q\rightarrow TQ$ where $\langle \mathbb{F}H(\alpha_q), \beta_q\rangle={\frac{\mathrm{d}}{\mathrm{d}t}\big |_{ t=0}H(\alpha_q+t\beta_q)}$.
In coordinates, $\mathbb{F}H(q^i, p_i)=(q^i, \frac{\partial H}{\partial p_i}(q,p))$. 
We say that the Hamiltonian is regular if $\mathbb{F}H$ is a local diffeomorphism, which in local coordinates is equivalent to the regularity of the Hessian matrix whose entries are:
\begin{equation*}
M^{i j} = \frac{\partial^2 H}{\partial p_i\partial p_j}.
\end{equation*}
Consider now the external forcing previously defined in the Lagrangian description and denote $F^H = \mathbb{F}H^* F: T^*Q\rightarrow T^*Q$. 
\[
\xymatrix{
 T^*Q \ar[rr]^{F^H}\ar[rd]_{\pi_Q}&& T^*Q\ar[ld]^{\pi_Q}\\\
&Q&
}
\]
It is possible to modify Hamiltonian vector field $X_H$ to obtain the forced Hamilton's equations as the integral curves of the vector field $X_H+Y^v_F$ where the vector field
$Y^v_F\in {\mathfrak X}(T^*Q)$ is defined by
\[
Y^v_F(\alpha_q)=\frac{\mathrm{d}}{\mathrm{d}t}\Big|_{t=0}(\alpha_q+t F^H(\alpha_q))\; .
\]
We will say the the forced Hamiltonian system is determined by the pair $(H, F^H)$.

Locally, 
\[
Y^v_F=F_i\left(q^j, \frac{\partial H}{\partial p_j}(q, p)\right)\frac{\partial}{\partial p_i}=F^H_i(q, p)\frac{\partial}{\partial p_i}
\]
modifying Hamilton's equations as follows:
 \begin{eqnarray}\label{qwe}
\frac{\mathrm{d}q^i}{\mathrm{d}t}&=&\frac{\partial H}{\partial p_i}(q,p)\; ,\\
\frac{\mathrm{d}p_i}{\mathrm{d}t}&=&-\frac{\partial H}{\partial q^i}(q,p)+F^H_i(q,p)\; .
\end{eqnarray}

\section{Forced Hamiltonian dynamics as free dynamics obtained by duplication}  
In this section we define a new unforced Hamiltonian system whose dynamical equations are related with the forced system $(H,F^H)$. In a final appendix we will expand on the interesting geometry (that of a cotangent groupoid) that arises with the duplication of variables. 

The idea is to duplicate variables in such a way that system (\ref{qwe}) transforms into a different free Hamiltonian system. This opens up the opportunity to analyze the original system using standard Hamiltonian techniques. 

Consider the Cartesian product $Q \times Q$ and its cotangent bundle $T^*(Q\times Q) \equiv T^*Q \times T^*Q$ with canonical projections $\mathrm{pr}_1: T^*Q \times T^*Q\rightarrow T^*Q$ and $\mathrm{pr}_2: T^*Q\times T^*Q\rightarrow T^*Q$ given respectively by $\mathrm{pr}_1(\alpha_q, \beta_{q'})=\alpha_q$ and $\mathrm{pr}_2(\alpha_q, \beta_{q'})=\beta_{q'}$ where $\alpha_q\in T^*_qQ$ and $\beta_{q'}\in T^*Q_{q'}$. Also in this space we have the \emph{inversion map} $\iota: T^*Q \times T^*Q \to T^*Q \times T^*Q$ (cf. section \ref{sec:geometry}), which in local coordinates acts as $\iota(\alpha_q, \beta_{q'}) = (\beta_{q'}, \alpha_q)$.

Let us endow $T^*Q \times T^*Q$ with the following symplectic structure:
\[
\Omega_{Q\times Q} =\mathrm{pr}_2^*\omega_Q-\mathrm{pr}_1^*\omega_Q\; .
\]
Observe that the mapping 
\[
\begin{array}{rrcc}
\Psi:& T^*Q\times T^*Q&\longrightarrow& T^*(Q\times Q)\\
     & (\alpha_q, \beta_{q'})&\longmapsto&  (-\alpha_q, \beta_{q'})
\end{array}
\]
is a symplectomorphism, i.e. $\Psi^*(\Omega_{Q\times Q})=\omega_{Q\times Q}$ where $\omega_{Q\times Q}$ is the standard symplectic form on $T^*(Q\times Q)$.

Consider the Hamiltonian $\mathbf{H}: T^*Q\times T^*Q\rightarrow {\mathbb R}$ defined by
\[
\mathbf{H}(\alpha_q, \beta_{q'}) = \left(H \circ \mathrm{pr}_2 - H \circ \mathrm{pr}_1\right)(\alpha_q, \beta_{q'}) = H(\beta_{q'})- H(\alpha_q)\, ,
\] 
satisfying $\mathbf{H} \circ \iota = - \mathbf{H}$, and the {\sl identity map} (cf. section \ref{sec:geometry})
\[
\begin{array}{rrcc}
\epsilon:& T^*Q &\rightarrow&T^*Q\times T^*Q\\
              &\alpha_q&\longmapsto& (\alpha_q, \alpha_q)
\end{array}
\]
Observe that the spaces $T^*Q$ and $\epsilon(T^*Q)$ are canonically diffeomorphic. We have the following 
 
\begin{lemma}\label{lemma-1}
The Hamiltonian vector field $X_\mathbf{H}$ given by
\[
\imath_{X_\mathbf{H}}\Omega_{Q\times Q}=\mathrm{d}\mathbf{H}
\]
 verifies that:
\begin{enumerate}[label=\roman*)]
\item \label{itm:lemma-1-itm1} ${X_\mathbf{H}}$ is tangent to $\epsilon(T^*Q)$;
\item \label{itm:lemma-1-itm2} $\left.X_\mathbf{H}\right\vert_{\epsilon(T^*Q)}=\epsilon_* (X_H)$.
\end{enumerate}
\end{lemma}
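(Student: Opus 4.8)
The plan is to exhibit the Hamiltonian vector field $X_\mathbf{H}$ explicitly as the ``diagonal product'' of $X_H$ with itself and then to read off both items by restricting to $\epsilon(T^*Q)$. First I would introduce the vector field $X$ on $T^*Q\times T^*Q$ that is simultaneously $\mathrm{pr}_1$-related and $\mathrm{pr}_2$-related to $X_H$ — in Darboux coordinates $(q^i,p_i,\bar q^i,\bar p_i)$ this is simply $X_H(q,p)$ on the first copy plus $X_H(\bar q,\bar p)$ on the second — and claim $X=X_\mathbf{H}$.

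To verify this I would compute $\imath_X\Omega_{Q\times Q}$ intrinsically. Since $X$ is $\mathrm{pr}_a$-related to $X_H$, for any tangent vector $v$ one has $(\imath_X\mathrm{pr}_a^*\omega_Q)(v)=\omega_Q(T\mathrm{pr}_a\, X,T\mathrm{pr}_a\, v)=(\imath_{X_H}\omega_Q)(T\mathrm{pr}_a\, v)=\mathrm{d}H(T\mathrm{pr}_a\, v)=\mathrm{d}(H\circ\mathrm{pr}_a)(v)$, using the defining equation of $X_H$; hence $\imath_X\mathrm{pr}_a^*\omega_Q=\mathrm{d}(H\circ\mathrm{pr}_a)$ for $a=1,2$, and therefore
\[
\imath_X\Omega_{Q\times Q}=\imath_X\mathrm{pr}_2^*\omega_Q-\imath_X\mathrm{pr}_1^*\omega_Q=\mathrm{d}(H\circ\mathrm{pr}_2)-\mathrm{d}(H\circ\mathrm{pr}_1)=\mathrm{d}\mathbf{H}.
\]
Because $\Psi$ is a symplectomorphism onto the standard symplectic manifold $T^*(Q\times Q)$, the $2$-form $\Omega_{Q\times Q}$ is nondegenerate, so this identifies $X$ with $X_\mathbf{H}$. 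The same conclusion also falls out of a direct one-line computation in the coordinates above.

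For the two conclusions I would use $\mathrm{pr}_1\circ\epsilon=\mathrm{pr}_2\circ\epsilon=\mathrm{id}_{T^*Q}$: evaluating at $\epsilon(\alpha_q)$ yields $X_\mathbf{H}(\epsilon(\alpha_q))=(X_H(\alpha_q),X_H(\alpha_q))$ under the identification $T_{(\alpha_q,\alpha_q)}(T^*Q\times T^*Q)\cong T_{\alpha_q}T^*Q\times T_{\alpha_q}T^*Q$, whereas $T\epsilon$ acts as $w\mapsto(w,w)$ in that same identification, so $(\epsilon_*X_H)(\epsilon(\alpha_q))=T\epsilon\,(X_H(\alpha_q))=(X_H(\alpha_q),X_H(\alpha_q))$ as well. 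The two agree, which is precisely \ref{itm:lemma-1-itm2}; and since this common value lies in $\mathrm{Im}\,T_{\alpha_q}\epsilon=T_{\epsilon(\alpha_q)}\epsilon(T^*Q)$, the vector field $X_\mathbf{H}$ is tangent to $\epsilon(T^*Q)$, which is \ref{itm:lemma-1-itm1}.

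I do not anticipate a genuine obstacle; the one thing to get right is the bookkeeping of the two minus signs — the one in $\mathbf{H}=H\circ\mathrm{pr}_2-H\circ\mathrm{pr}_1$ and the one in front of $\mathrm{pr}_1^*\omega_Q$ in $\Omega_{Q\times Q}$ — whose cancellation on the first factor is exactly what allows the naive product vector field to solve the defining equation. The remaining ingredients are the naturality of the interior product with pullback forms and the fact that $\epsilon$ is an embedding onto the diagonal.
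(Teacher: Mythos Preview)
Your argument is correct. It is essentially an intrinsic rendering of the direct coordinate computation the paper gives immediately after the lemma, where $X_\mathbf{H}$ is written out explicitly as two independent copies of $X_H$ and both items are read off on the constraints $Q^i-q^i=0$, $P_i-p_i=0$.

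The paper's \emph{formal} proof, however, takes a different route: it defers to a general result about symplectic groupoids (Proposition~\ref{prop:tangency_to_identities}), which uses only the anti-symmetry $\mathbf{H}\circ\iota=-\mathbf{H}$ together with $\iota^*\Omega_{Q\times Q}=-\Omega_{Q\times Q}$ to deduce $\iota_*X_\mathbf{H}=X_\mathbf{H}$, and then an elementary groupoid fact (Proposition~\ref{cft}) forces any $\iota$-invariant vector field to be tangent to the identity section. That argument buys generality --- it applies verbatim to $\mathbf{H}_{K_F}$ and indeed to any Hamiltonian on a symplectic groupoid with the inversion anti-symmetry --- whereas your product-vector-field construction exploits the specific splitting $\mathbf{H}=H\circ\mathrm{pr}_2-H\circ\mathrm{pr}_1$ and would not directly handle a coupling term like $K_F$. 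On the other hand, your approach delivers item~\ref{itm:lemma-1-itm2} for free, while the groupoid argument yields only tangency and still needs the explicit identification of the restricted field.
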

\begin{proof}
See Proposition \ref{prop:tangency_to_identities}.
\end{proof}
 
Observe that the proof of Lemma \ref{lemma-1} is quite straightforward using coordinates. In fact, if we take adapted coordinates $(q^i, p_i; Q^i, P_i)$ on $T^*Q \times T^* Q$ then 
     \[
     \mathbf{H}(q^i,  p_i; Q^i, P_i) = H(Q^i, P_i)- H(q^i, p_i)\; ,
     \]                  
and so                         
  \[
  X_\mathbf{H}= \frac{\partial H}{\partial p_i}(q,p)\frac{\partial}{\partial q^i}- \frac{\partial H}{\partial q^i}(q,p)\frac{\partial}{\partial p_i}+  \frac{\partial H}{\partial P_i}(Q,P)\frac{\partial}{\partial Q^i}- \frac{\partial H}{\partial Q^i}(Q,P)\frac{\partial}{\partial P_i}\; .                   
\]
which is obviously tangent to $\epsilon(T^*Q)$, this last space is locally given by the vanishing of the $2n$-constraints $Q^i-q^i=0$ and $P_i-p_i=0$ and moreover \ref{lemma-1}\ref{itm:lemma-1-itm2} follows immediately since $\epsilon (q^i, p_i)=(q^i, p_i; q^i, p_i)$.

 Define  $\mathbb{F}\mathbf{H}^{\times}(\alpha_q, \beta_{q'})=(-\mathbb{F}H(\alpha_q), \mathbb{F}H(\beta_{q'}))$ with Hessian matrix: 
  \[
\left(
\begin{array}{cc}
\displaystyle-\frac{\partial^2 H}{\partial p_i\partial p_j} (q,p)&0_{n\times n}\\
0_{n\times n}&\displaystyle \frac{\partial^2 H}{\partial P_i\partial P_j} (Q,P)
\end{array}
\right)\; .
\]

The following lemma is trivial but it will be interesting for us when going to the Lagrangian side
\begin{lemma}\label{lemma-2}
The transformation $\mathbb{F}\mathbf{H}^{\times}: T^*Q\times T^*Q\rightarrow TQ\times TQ$ is a local diffeomorphism if and only if $\mathbb{F}H: T^*Q\rightarrow TQ$ is a local diffeomorphism \end{lemma}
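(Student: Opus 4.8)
The plan is to prove the equivalence by exhibiting the block-diagonal structure of $\mathbb{F}\mathbf{H}^{\times}$ and reducing the question to a statement about each factor. Since $\mathbb{F}\mathbf{H}^{\times}: T^*Q\times T^*Q\to TQ\times TQ$ is built componentwise from $\mathbb{F}H$ on the second factor and from $-\mathbb{F}H\circ(\text{sign flip})$ — more precisely, the map $\alpha_q\mapsto -\mathbb{F}H(\alpha_q)$ — on the first factor, its differential at a point $(\alpha_q,\beta_{q'})$ splits as a direct sum of the differentials of these two maps. First I would compute this differential in the adapted coordinates $(q^i,p_i;Q^i,P_i)$: the Jacobian is exactly the block matrix already displayed in the excerpt, with upper-left block $-\partial^2 H/\partial p_i\partial p_j\,(q,p)$ and lower-right block $\partial^2 H/\partial P_i\partial P_j\,(Q,P)$, and zero off-diagonal blocks.

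Next I would invoke the inverse function theorem: $\mathbb{F}\mathbf{H}^{\times}$ is a local diffeomorphism near $(\alpha_q,\beta_{q'})$ if and only if this Jacobian is invertible there, which (because it is block diagonal) happens if and only if both diagonal blocks are invertible, i.e. if and only if the Hessian matrix $M^{ij}=\partial^2 H/\partial p_i\partial p_j$ of $H$ is nonsingular at $\alpha_q$ and at $\beta_{q'}$. On the other hand, $\mathbb{F}H$ being a local diffeomorphism is, as recalled in the excerpt, equivalent to the regularity of this same Hessian $M^{ij}$. Sign flips and the automorphism $\alpha_q\mapsto -\alpha_q$ are linear isomorphisms on the fibers and therefore do not affect invertibility of the relevant Jacobian block, so the minus sign in the first component is harmless.

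Putting these together: if $\mathbb{F}H$ is a local diffeomorphism, then $M^{ij}$ is everywhere regular, so both diagonal blocks are everywhere invertible, so $\mathbb{F}\mathbf{H}^{\times}$ is a local diffeomorphism; conversely, if $\mathbb{F}\mathbf{H}^{\times}$ is a local diffeomorphism, then in particular restricting to the diagonal $\epsilon(T^*Q)$ (or simply reading off the lower-right block at an arbitrary point $\beta_{q'}$ with the first factor held fixed) shows $M^{ij}$ is regular everywhere, hence $\mathbb{F}H$ is a local diffeomorphism. I do not expect any real obstacle here; the only thing to be mildly careful about is the logical quantifier — "local diffeomorphism" means "local diffeomorphism at every point" — so one should make sure that the pointwise equivalence of Jacobian invertibility is promoted correctly to the global statement, using that every point of $T^*Q\times T^*Q$ arises as some pair $(\alpha_q,\beta_{q'})$ and every point of $T^*Q$ arises as such an $\alpha_q$ and as such a $\beta_{q'}$.
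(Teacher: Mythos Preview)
Your proposal is correct and is exactly the reasoning the paper has in mind: the paper displays the block-diagonal Hessian matrix of $\mathbb{F}\mathbf{H}^{\times}$ immediately before the lemma and then declares the result trivial without further proof. You have simply spelled out the argument that the block-diagonal Jacobian is invertible if and only if each Hessian block $M^{ij}$ is, which is precisely what the paper intends.
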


Given a Hamiltonian $\mathbf{H}$, we may want to add a function $K: T^*Q\times T^*Q \rightarrow {\mathbb R}$. In light of the result of lemma \ref{lemma-1}, if this function has the property $K(\beta_{q'}, \alpha_q) = -K(\alpha_q, \beta_{q'})$, then $\mathbf{H}_{K} = \mathbf{H} + K$ will still preserve that property and the trajectories of the resulting dynamics at the identities will remain bound to the identities. This allows a richer behaviour of the original system in $T^*Q$, and for this, functions $K$ satisfying said property are called \emph{generalized potentials} \cite{Galley13}. 

The previous results are only a preparation for our real objective, to find a purely Hamiltonian representation of systems with forces using this duplication of variables. For that, we need to introduce the definition of retraction (see \cite{Absil}): 

\begin{definition} 
A {\bf retraction} on a manifold $Q$ is a smooth mapping $R: TQ\rightarrow Q$ satisfying the following properties. If we denote by $R_q=\left.R\right\vert_{T_qQ}$ the restriction of $R$ to $T_qQ$ then: 
\begin{itemize}
\item $R_q(0_q)=q$, 
\item Identifying $T_{0_q} T_q Q \equiv  T_q Q$ then 
$T_{0_q}R_q =\hbox{id}_{T_qQ}$.
\end{itemize}
\end{definition}

In what follows, it will be interesting to introduce $\sigma: TQ\rightarrow Q\times Q$ defined by
$\sigma(v_q)=(q, R_q(v_q))$. From the previous definition it is easy to show that $\sigma$ is invertible in some neighborhood of $0_q\in T_qQ$ for any $q\in Q$. Denote this local inverse by $\tau: Q\times Q\rightarrow TQ$ which in coordinates will take the form
\[
\tau(q^i, Q^i)=(q^i, \tau^i (q,Q))\; .
\]

\begin{lemma}\label{lemma4}
Consider a chart $(U, \varphi)$ around a point $q\in Q$ then 
the map
$\sigma: TU\rightarrow Q\times Q$ defined by $\sigma(v_q)=(q, R_q(v_q))$ where $q\in U$ verifies that the map
$T_{0_q}\sigma$ in coordinates $(TU, T\varphi)$ and $(U\times U, \varphi\times \varphi)$ is 
\[
\left(
\begin{array}{cc}
I&0\\
I&I
\end{array}
\right)
\]
and, in consequence, the map $T_{(q,q)}\tau$ is represented by the matrix
\[
\left(
\begin{array}{cc}
I&0\\
-I&I
\end{array}
\right)
\]
\end{lemma}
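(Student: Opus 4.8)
The plan is to carry out the computation directly in the two charts named in the statement, reading off the Jacobian of $\sigma$ at $0_q$ block by block from the two defining properties of a retraction, and then obtaining the expression for $T_{(q,q)}\tau$ by inverting the resulting lower block–triangular matrix via the inverse function theorem.

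Concretely, write the induced coordinates on $TU$ coming from $T\varphi$ as $(q^i,v^i)$ and those on $U\times U$ coming from $\varphi\times\varphi$ as $(q^i,Q^i)$, and let $R^i(q,v)$ denote the components of $R$ read through $\varphi$, so that $\sigma(q^i,v^i)=(q^i,R^i(q,v))$. Since $R_q(0_q)=q$ we have $\sigma(q^i,0)=(q^i,q^i)$, so $0_q$ is sent to the diagonal point $(q,q)$, and the Jacobian of $\sigma$ at $0_q$ has the block form
\[
T_{0_q}\sigma=\left(\begin{array}{cc} I & 0\\[4pt] \partial R^i/\partial q^j\,(q,0) & \partial R^i/\partial v^j\,(q,0)\end{array}\right),
\]
the top row being immediate because the first component of $\sigma$ is literally $q^i$. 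It then remains only to identify the two bottom blocks.

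For the lower-left block, differentiate the identity $R^i(q,0)=q^i$ (which is the first retraction axiom $R_q(0_q)=q$ written in coordinates) with respect to $q^j$, obtaining $\partial R^i/\partial q^j\,(q,0)=\delta^i_j$. For the lower-right block, observe that restricting $\sigma$ to the fibre $T_qU$ gives exactly $R_q$, and that in the linear chart on $T_qU\cong\mathbb{R}^n$ the canonical identification $T_{0_q}T_qQ\equiv T_qQ$ is realised by sending $w^i\,\partial/\partial v^i$ to $w$; hence the second retraction axiom $T_{0_q}R_q=\mathrm{id}_{T_qQ}$ becomes $\partial R^i/\partial v^j\,(q,0)=\delta^i_j$ in these coordinates. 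Substituting the two blocks yields $T_{0_q}\sigma=\left(\begin{smallmatrix} I & 0\\ I & I\end{smallmatrix}\right)$ as claimed. Since this matrix is invertible, the inverse function theorem both confirms that $\sigma$ is a local diffeomorphism near $0_q$ (so the local inverse $\tau$ is well defined) and gives $T_{(q,q)}\tau=(T_{0_q}\sigma)^{-1}$; inverting the lower block–triangular matrix produces $\left(\begin{smallmatrix} I & 0\\ -I & I\end{smallmatrix}\right)$.

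There is no genuine obstacle here: the argument is essentially a Jacobian bookkeeping exercise. The one point deserving care is the canonical identification $T_{0_q}T_qQ\equiv T_qQ$ built into the definition of a retraction, and checking that under the chart $T\varphi$ it is compatible with simply taking the $v$–derivative of $R^i$ at $v=0$; once this is pinned down, the remainder is the one-line matrix inversion above.
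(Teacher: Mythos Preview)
Your proof is correct and follows essentially the same approach as the paper's own proof: both compute the Jacobian of $\sigma$ at $0_q$ block by block, using the first retraction axiom $R_q(0_q)=q$ for the lower-left block and the second axiom $T_{0_q}R_q=\mathrm{id}_{T_qQ}$ for the lower-right block, then invert to obtain $T_{(q,q)}\tau$. Your version is slightly more explicit about the inverse function theorem justification and the identification $T_{0_q}T_qQ\equiv T_qQ$, but there is no substantive difference.
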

\begin{proof}
We have that in the chosen coordinates $\sigma(q^i, v^i)=(F^i(q, v), G^i(q,v))=(q, R_q(v))$. Therefore, in coordinates, 
\[
\frac{\partial F^i}{\partial q^j}(q, 0)=\delta^i_j, \quad \frac{\partial F^i}{\partial v^j}(q, 0)=0,
\]
where $\delta^i_j$ is the Kronecker delta, and observe that
\[
\displaystyle \frac{\partial G^i}{\partial q^j}(q, 0)=\delta^i_j
\]
since $R_{q'}(0)=q'$ for $q'\in U$ and 
\[
\displaystyle \frac{\partial G^i}{\partial v^j}(q, 0)=\delta^i_j
\]
since $T_{0_q}R_q =\hbox{id}_{T_qQ}$.
\end{proof}

Typically, we can induce this kind of mappings using an auxiliary Riemannian metric $g$ on $Q$ with associated geodesic spray $\Gamma_g$ (see \cite{Carmo}). The associated exponential for a small enough neighborhood $U\subset TQ$ of $0_q$.
\[
\begin{array}{rrcl}
\hbox{exp}^{\Gamma_g}:&U\subset TQ& \longrightarrow& Q\times Q\\
                                            &v_q&\longmapsto& (q, \gamma_{v_q}(1))
\end{array}
\]
where $t\rightarrow \gamma_{v_q}(t)$ is the unique geodesic such that $\gamma'_{v_q}(0)=v_q$. 
For instance when $Q={\mathbb R}^n$ and we take the Euclidean metric, we induce the map
\[
\tau(q, q')=\left(q, q'-q\right)\quad \hbox{and}\quad \sigma(q, v)=(q, q + v)\; .
\]

Given a forced Hamiltonian system $(H, F^H)$ we can construct the function $K_F: T^*Q\times T^*Q\rightarrow {\mathbb R}$ as: 
\[
K_F(\alpha_q, \beta_{q'})= \frac{1}{2}\langle F^H(\beta_{q'}), \tau(q', q)\rangle - \frac{1}{2}\langle F^H(\alpha_q), \tau(q, q')\rangle
\]
Observe that this function satisfies the important property $K_F(\beta_{q'}, \alpha_q)=-K_F(\alpha_q, \beta_{q'})$, and thus is a generalized potential.
Consider the Hamiltonian $\mathbf{H}_{K_F}: T^*Q\times T^*Q\rightarrow {\mathbb R}$ defined by
 \[
 \mathbf{H}_{K_F}(\alpha_q, \beta_{q'}) = \mathbf{H}(\alpha_q, \beta_{q'}) + {K_F}(\alpha_q, \beta_{q'})\, ,
 \] 
which also satisfies $\mathbf{H}_{K_F}(\beta_{q'}, \alpha_q) = -\mathbf{H}_{K_F}(\alpha_q, \beta_{q'})$ by construction.
\begin{theorem}\label{theorem-1}
 The Hamiltonian vector field $X_{\mathbf{H}_{K_F}}$ given by
 \[
 \imath_{X_{\mathbf{H}_{K_F}}}\Omega_{Q\times Q}=\mathrm{d}\mathbf{H}_{K_F}
 \]
 verifies that:
\begin{enumerate}[label=\roman*)]
\item \label{itm:theorem-1-itm1} $X_{\mathbf{H}_{K_F}}$ is tangent to $\epsilon(T^*Q)$;
\item \label{itm:theorem-1-itm2} $\left.X_{\mathbf{H}_{K_F}}\right\vert_{\epsilon(T^*Q)}=\epsilon_* (X_H+Y^v_F)$.
\end{enumerate}
 \end{theorem}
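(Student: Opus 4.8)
The plan is to split $X_{\mathbf{H}_{K_F}}=X_{\mathbf{H}}+X_{K_F}$ — legitimate since $f\mapsto X_f$, with $\imath_{X_f}\Omega_{Q\times Q}=\mathrm{d}f$, is $\mathbb{R}$-linear — and then treat the two assertions by different means: part \ref{itm:theorem-1-itm1} will follow formally from the antisymmetry $\mathbf{H}_{K_F}\circ\iota=-\mathbf{H}_{K_F}$, while part \ref{itm:theorem-1-itm2} will reduce, via Lemma \ref{lemma-1}\ref{itm:lemma-1-itm2}, to identifying $X_{K_F}$ along $\epsilon(T^*Q)$.

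For \ref{itm:theorem-1-itm1}, observe that $\iota$ is an \emph{anti}-symplectomorphism, $\iota^*\Omega_{Q\times Q}=-\Omega_{Q\times Q}$ (it interchanges $\mathrm{pr}_1$ and $\mathrm{pr}_2$), that $\iota\circ\iota=\mathrm{id}$, and that the fixed-point set of $\iota$ is exactly $\epsilon(T^*Q)$. Pulling back $\imath_{X_{\mathbf{H}_{K_F}}}\Omega_{Q\times Q}=\mathrm{d}\mathbf{H}_{K_F}$ by $\iota$ and using $\iota^*\mathrm{d}\mathbf{H}_{K_F}=\mathrm{d}(\mathbf{H}_{K_F}\circ\iota)=-\mathrm{d}\mathbf{H}_{K_F}$ gives $\imath_{\iota_*X_{\mathbf{H}_{K_F}}}\Omega_{Q\times Q}=\imath_{X_{\mathbf{H}_{K_F}}}\Omega_{Q\times Q}$, hence $\iota_*X_{\mathbf{H}_{K_F}}=X_{\mathbf{H}_{K_F}}$ by nondegeneracy; an $\iota$-invariant vector field is tangent to $\mathrm{Fix}(\iota)=\epsilon(T^*Q)$. (Equivalently, in the adapted coordinates $(q^i,p_i;Q^i,P_i)$ of the remark after Lemma \ref{lemma-1} one checks directly that $X_{\mathbf{H}_{K_F}}(Q^i-q^i)$ and $X_{\mathbf{H}_{K_F}}(P_i-p_i)$ vanish on $\epsilon(T^*Q)$, which is precisely the generalized-potential property recalled before the statement, now applied to $\mathbf{H}_{K_F}$.)

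For \ref{itm:theorem-1-itm2}, Lemma \ref{lemma-1} already gives $X_{\mathbf{H}}\vert_{\epsilon(T^*Q)}=\epsilon_*(X_H)$, so it remains to show $X_{K_F}\vert_{\epsilon(T^*Q)}=\epsilon_*(Y^v_F)$. I would compute in the adapted coordinates, where
\[
K_F(q^i,p_i;Q^i,P_i)=\tfrac12\,F^H_i(Q,P)\,\tau^i(Q,q)-\tfrac12\,F^H_i(q,p)\,\tau^i(q,Q)
\]
and, since $\Omega_{Q\times Q}=\mathrm{d}Q^i\wedge\mathrm{d}P_i-\mathrm{d}q^i\wedge\mathrm{d}p_i$,
\[
X_{K_F}=-\frac{\partial K_F}{\partial p_i}\frac{\partial}{\partial q^i}+\frac{\partial K_F}{\partial q^i}\frac{\partial}{\partial p_i}+\frac{\partial K_F}{\partial P_i}\frac{\partial}{\partial Q^i}-\frac{\partial K_F}{\partial Q^i}\frac{\partial}{\partial P_i}\; .
\]
The only inputs needed are $\tau^i(q,q)=0$ — because $\sigma(0_q)=(q,R_q(0_q))=(q,q)$, so $\tau(q,q)=0_q$ — and the derivative matrix of Lemma \ref{lemma4}, i.e. that on the diagonal the derivative of $\tau^i$ in its base-point slot is $-\delta^i_j$ and in its target slot is $+\delta^i_j$. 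Since each summand of $K_F$ carries a factor $\tau^i$ vanishing on the diagonal, $\partial K_F/\partial p_i$ and $\partial K_F/\partial P_i$ vanish on $\epsilon(T^*Q)$; differentiating in $q^j$, the two surviving contributions (one from the target slot of $\tau^i(Q,q)$, one from the base-point slot of $\tau^i(q,Q)$) each equal $\tfrac12 F^H_j(q,p)$ there, so $\partial K_F/\partial q^j\vert_{\epsilon(T^*Q)}=F^H_j(q,p)$, and symmetrically $\partial K_F/\partial Q^j\vert_{\epsilon(T^*Q)}=-F^H_j(q,p)$. Substituting into the formula for $X_{K_F}$ yields $X_{K_F}\vert_{\epsilon(T^*Q)}=F^H_i(q,p)\big(\partial/\partial p_i+\partial/\partial P_i\big)=\epsilon_*\big(F^H_i(q,p)\,\partial/\partial p_i\big)=\epsilon_*(Y^v_F)$. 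Adding the two pieces gives $X_{\mathbf{H}_{K_F}}\vert_{\epsilon(T^*Q)}=\epsilon_*(X_H+Y^v_F)$; combined with \ref{itm:theorem-1-itm1} this shows every integral curve of $X_{\mathbf{H}_{K_F}}$ meeting $\epsilon(T^*Q)$ is of the form $\epsilon\circ c$ with $c$ a solution of the forced Hamilton equations (\ref{qwe}).

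The one genuine difficulty is the bookkeeping in this last step: one must keep straight which argument of $\tau$ is the base point and which is the target in each of the two summands of $K_F$, and track the signs supplied by Lemma \ref{lemma4}. The symmetric factors $\tfrac12$ are exactly what makes the two "cross" contributions — those coming from differentiating the $\tau^i$ factor rather than the force — combine to $F^H_j$ without spurious constants. Everything else (the linearity split, the $\iota$-invariance argument for tangency, and the appeal to Lemma \ref{lemma-1}) is formal.
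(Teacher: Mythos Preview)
Your proof is correct and follows essentially the same approach as the paper: for part~\ref{itm:theorem-1-itm1} the paper invokes Proposition~\ref{prop:tangency_to_identities}, whose content is precisely your $\iota$-invariance argument, and for part~\ref{itm:theorem-1-itm2} the paper carries out the same coordinate computation using Lemma~\ref{lemma4}. The only organizational difference is that you split $X_{\mathbf{H}_{K_F}}=X_{\mathbf{H}}+X_{K_F}$ by linearity and cite Lemma~\ref{lemma-1} for the first summand, whereas the paper writes out the full vector field $X_{\mathbf{H}_{K_F}}$ before restricting---your version is slightly leaner but the ingredients and logic are identical.
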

 \begin{proof}
Part \ref{itm:theorem-1-itm1} is again a direct consequence of proposition \ref{prop:tangency_to_identities}.

To deduce part \ref{itm:theorem-1-itm2} observe that 
 \begin{eqnarray*}
  X_{\mathbf{H}_{K_F}}&=& \left(\frac{\partial H}{\partial p_i}(q,p)+\frac{1}{2}\frac{\partial F_j^H}{\partial p_i}(q,p)\tau^j(q, Q) \right)\frac{\partial}{\partial q^i}\\
  &&-\left( \frac{\partial H}{\partial q^i}(q,p)+\frac{1}{2}\frac{\partial F_j^H}{\partial q_i}(q,p)\tau^j(q, Q)\right.\\
  &&\left.+\frac{1}{2} F_j^H(q,p)\frac{\partial \tau^j}{\partial q^i}(q, Q)-\frac{1}{2} F_j^H(Q,P)\frac{\partial \tau^j}{\partial q^i} (Q, q) \right)
  \frac{\partial}{\partial p_i}\\
  &&+ \left(\frac{\partial H}{\partial P_i}(Q,P)+\frac{1}{2}\frac{\partial F_j^H}{\partial P_i}(Q,P)\tau^j(Q, q) \right)\frac{\partial}{\partial Q^i}\\
  &&-\left( \frac{\partial H}{\partial Q^i}(Q,P)-\frac{1}{2} F_j^H(q,p)\frac{\partial \tau^j}{\partial Q^i}(q, Q)\right.\\
  &&\left.+\frac{1}{2}\frac{\partial F_j^H}{\partial Q_i}(Q,P)\tau^j(Q, q)+\frac{1}{2} F_j^H(Q,P)\frac{\partial \tau^j}{\partial Q^i} (Q, q) \right)
  \frac{\partial}{\partial P_i}\\            
\end{eqnarray*}

Now using Lemma \ref{lemma4} we have that along the identities $\epsilon(T^*Q)$
 \begin{eqnarray*}
  X_{\mathbf{H}_{K_F}}&=& \frac{\partial H}{\partial p_i}(q,p) \frac{\partial}{\partial q^i}
  -\left( \frac{\partial H}{\partial q^i}(q,p)
   -F_j^H(q,p) \right)
  \frac{\partial}{\partial p_i}\\
  &&+ \frac{\partial H}{\partial p_i}(q,p) \frac{\partial}{\partial Q^i}
  -\left( \frac{\partial H}{\partial q^i}(q,p)
  - F_j^H(q,p) \right)
  \frac{\partial}{\partial P_i}\\            
\end{eqnarray*}
and thus $\left.X_{\mathbf{H}_{K_F}}\right\vert_{\epsilon(T^*Q)}=\epsilon_* (X_H+Y^v_F)$ as we wanted to prove. 
 \end{proof}

 Define the mapping $\mathbb{F}\mathbf{H}^{\times}_{K_F}: T^*Q\times T^*Q\rightarrow TQ\times TQ$ given in local coordinates as
\[
\mathbb{F}\mathbf{H}^{\times}_{K_F}(q^i, p_i, Q^i, P_i)=\left(q^i, -\frac{\partial \mathbf{H}_{K_F}}{\partial p_i},  Q^i, \frac{\partial \mathbf{H}_{K_F}}{\partial P_i}\right).
\]

\begin{proposition}\label{proposition-2}
If $H$ is regular then the transformation $\mathbb{F}\mathbf{H}^{\times}_{K_F}: T^*Q\times T^*Q\rightarrow TQ\times TQ$ is a local diffeomorphism in a neighborhood of $\epsilon(T^*Q)$. 
 \end{proposition}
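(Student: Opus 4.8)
The plan is to argue that $\mathbb{F}\mathbf{H}^{\times}_{K_F}$ has invertible derivative everywhere on $\epsilon(T^*Q)$ and then invoke the inverse function theorem together with compactness/openness to get a diffeomorphism on a full neighborhood of the submanifold. First I would write down the Jacobian of $\mathbb{F}\mathbf{H}^{\times}_{K_F}$ in the adapted coordinates $(q^i, p_i; Q^i, P_i)$. Since the first and third slots of the map are just $q^i$ and $Q^i$, the Jacobian is block lower/upper-triangular relative to the configuration variables, and its invertibility reduces to the invertibility of the $2n\times 2n$ block
\[
\left(
\begin{array}{cc}
\displaystyle -\frac{\partial^2 \mathbf{H}_{K_F}}{\partial p_i \partial p_j} & \displaystyle -\frac{\partial^2 \mathbf{H}_{K_F}}{\partial p_i \partial P_j}\\[2mm]
\displaystyle \frac{\partial^2 \mathbf{H}_{K_F}}{\partial P_i \partial p_j} & \displaystyle \frac{\partial^2 \mathbf{H}_{K_F}}{\partial P_i \partial P_j}
\end{array}
\right)
\]
evaluated along $\epsilon(T^*Q)$, together with the contributions of the momenta to the $q$- and $Q$-rows.

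The key observation is that $\mathbf{H}_{K_F} = H\circ\mathrm{pr}_2 - H\circ\mathrm{pr}_1 + K_F$, and in $K_F = \tfrac12\langle F^H(\beta_{q'}),\tau(q',q)\rangle - \tfrac12\langle F^H(\alpha_q),\tau(q,q')\rangle$ the momentum $p$ enters only through $F^H(\alpha_q) = F^H(q,p)$ and the momentum $P$ only through $F^H(Q,P)$; in particular there are no mixed second derivatives $\partial^2 K_F/\partial p_i\partial P_j$. Hence the off-diagonal blocks of the Hessian above vanish, and along $\epsilon(T^*Q)$ the relevant block becomes
\[
\left(
\begin{array}{cc}
\displaystyle -\frac{\partial^2 H}{\partial p_i\partial p_j}(q,p) + O(\tau) & 0\\[2mm]
0 & \displaystyle \frac{\partial^2 H}{\partial P_i\partial P_j}(Q,P) + O(\tau)
\end{array}
\right).
\]
But on $\epsilon(T^*Q)$ we have $Q=q$, hence $\tau(q,Q)=\tau(q,q)=0_q$ by Lemma \ref{lemma4} (the retraction sends the zero vector to the base point, so $\sigma$ and its inverse $\tau$ vanish on the diagonal), so the $O(\tau)$ corrections in $K_F$ drop out identically along $\epsilon(T^*Q)$. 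The block therefore coincides with the Hessian of $\mathbb{F}\mathbf{H}^{\times}$ from Lemma \ref{lemma-2}, which is regular precisely because $H$ is regular, i.e. $M^{ij} = \partial^2 H/\partial p_i\partial p_j$ is invertible. I would also check that the momentum-dependence in the $q$- and $Q$-rows of the Jacobian (coming from the $\tfrac12\,\partial F_j^H/\partial p_i\,\tau^j$ terms in $X_{\mathbf{H}_{K_F}}$, see the proof of Theorem \ref{theorem-1}) likewise carries a factor $\tau^j(q,Q)$ and hence vanishes on $\epsilon(T^*Q)$; this keeps the Jacobian genuinely block-triangular with the identity in the configuration rows, so its determinant equals $\det(-M)\cdot\det(M)$ up to sign, which is nonzero.

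Having established that $D(\mathbb{F}\mathbf{H}^{\times}_{K_F})$ is an isomorphism at every point of $\epsilon(T^*Q)$, the inverse function theorem gives, for each point of $\epsilon(T^*Q)$, an open neighborhood on which the map restricts to a diffeomorphism. Since $\epsilon$ is a diffeomorphism onto $\epsilon(T^*Q)$ and the latter is a closed embedded submanifold, the union of these neighborhoods is an open set $\mathcal{U} \supseteq \epsilon(T^*Q)$ on which $\mathbb{F}\mathbf{H}^{\times}_{K_F}$ is a local diffeomorphism; shrinking $\mathcal{U}$ if necessary (standard tubular-neighborhood argument) yields a neighborhood on which it is in fact injective, hence a diffeomorphism onto its image. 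The main obstacle is purely bookkeeping: one must be careful that \emph{all} the $K_F$-generated terms in the Jacobian — both in the Hessian block and in the configuration rows — are proportional to $\tau^j(q,Q)$ or $\tau^j(Q,q)$ and therefore vanish on the diagonal, so that the regularity genuinely reduces to that of $H$ alone; no estimate or analytic subtlety beyond the inverse function theorem is needed.
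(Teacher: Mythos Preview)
Your proposal is correct and follows essentially the same approach as the paper: both arguments reduce the invertibility of the Jacobian of $\mathbb{F}\mathbf{H}^{\times}_{K_F}$ along $\epsilon(T^*Q)$ to the vanishing of the momentum-Hessian of $K_F$ there (since every second $p$- or $P$-derivative of $K_F$ carries a factor $\tau^j$, and the mixed $\partial^2 K_F/\partial p_i\partial P_j$ vanish identically), leaving only the block $\mathrm{diag}(-M,M)$ from the free part $\mathbf{H}$. One small over-worry: the configuration components of $\mathbb{F}\mathbf{H}^{\times}_{K_F}$ are literally $q^i$ and $Q^i$, so there is no momentum-dependence in those rows to check; and note that the proposition only claims \emph{local} diffeomorphism, so your final injectivity/tubular-neighborhood step, while not wrong, is not needed.
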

  \begin{proof}

Locally if we take coordinates $(q^i, p_i, Q^i, P_i)$ then from the definition of ${K_F}$ we observe that 
\[
\left.\left( 
\begin{array}{c}
\displaystyle
\frac{\partial^2 {K_F}}{\partial p_i\partial p_j}
\end{array}
\right)\right\vert_{{\epsilon} (T^*Q)}=\left.\left( 
\begin{array}{c}
\displaystyle
\frac{\partial^2 {K_F}}{\partial P_i\partial P_j}
\end{array}
\right)\right\vert_{{\epsilon} (T^*Q)}=\left.\left( 
\begin{array}{c}
\displaystyle\frac{\partial^2 {K_F}}{\partial p_i\partial P_j}
\end{array}
\right)\right\vert_{{\epsilon} (T^*Q)}
=0
\]
therefore from the regularity of $H$ it is trivial to derive the regularity of $\mathbf{H}_K$ on a tubular neighborhood of ${\epsilon}(T^*Q)$.
\end{proof}

%In this case we define 
%\[
%\mathbf{L}_K(\mathbb{F}{\mathcal H}_K(\alpha_q, \beta_{q'})))=\langle (\alpha_q, \beta_{q'}), \mathbb{F}{\mathcal H}_K(\alpha_q, \beta_{q'})\rangle- H_K(\alpha_q, \beta_{q'}) ; .
%\]

\section{Forced Lagrangian dynamics as free dynamics obtained by duplication}
Now we will define a new free Lagrangian system whose dynamical equations are related with the forced system $(L,F)$. This is precisely the path chosen by \cite{Galley13} although we will not enter into details as to how and why he arrived at his formulation.

Consider the Cartesian product $Q \times Q$ and its tangent bundle $T(Q\times Q) \equiv TQ \times TQ$ with canonical projections $\widetilde{\mathrm{pr}}_1: TQ\times TQ\rightarrow TQ$ and $\widetilde{\mathrm{pr}}_2: TQ\times TQ\rightarrow TQ$. In local coordinates we have $\widetilde{\mathrm{pr}}_1(v_q, V_Q) = v_q$ and $\widetilde{\mathrm{pr}}_2(u_q, V_Q) = V_Q$. Consider also the maps $\tilde{\iota}: TQ \times TQ \rightarrow TQ \times TQ$ and $\tilde{\epsilon}: TQ \rightarrow TQ \times TQ$ locally defined as $\tilde{\iota}(v_q,V_Q)=(V_Q, v_q)$ and $\tilde{\epsilon}(v_q)=(v_q, v_q)$, respectively. In this new bundle $TQ \times TQ \equiv T(Q \times Q)$ we also have a vertical endomorphism and a Liouville field, whose local presentation in adapted coordinates $(q^i,v^i; Q^i,V^i)$ would be $S = \mathrm{d}q^i \otimes \frac{\partial}{\partial v^i} + \mathrm{d}Q^i \otimes \frac{\partial}{\partial V^i}$, $\triangle = v^i \partial_{v^i} + V^i \partial_{V^i}$.

Define a new Lagrangian $\mathbf{L}: TQ\times TQ\rightarrow \mathbb{R}$ as:
\[
\mathbf{L}(v_q, V_Q) = \left(L \circ \widetilde{\mathrm{pr}}_2 - L \circ \widetilde{\mathrm{pr}}_1\right)(v_q, V_Q) = L(V_Q) - L(v_q)\, ,
\]
Much like $\mathbf{H}$, this new Lagrangian satisfies that $\mathbf{L} \circ \tilde{\iota} = - \mathbf{L}$.

From this new Lagrangian we can define corresponding Poincar{\'e}-Cartan forms on $TQ \times TQ$, $\theta_{\mathbf{L}}$ and $\omega_{\mathbf{L}}$, as in Section 2. We can also define a new fibre derivative $\mathbb{F}\mathbf{L}^{\times}(v_q, V_Q)=(-\mathbb{F}L(v_q), \mathbb{F}L(V_Q))$. This leads us to state an analogue of lemma \ref{lemma-2}:
\begin{lemma}\label{lemma-5}
The transformation $\mathbb{F}\mathbf{L}^{\times}: TQ \times TQ \rightarrow T^*Q \times T^*Q$ is a local diffeomorphism if and only if $\mathbb{F}L: TQ\rightarrow T^*Q$ is a local diffeomorphism.
\end{lemma}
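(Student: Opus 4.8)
The plan is to reduce everything to a coordinate computation of the fibre derivative $\mathbb{F}\mathbf{L}^{\times}$ and its Jacobian, exactly mirroring the trivial proof of Lemma \ref{lemma-2} on the Hamiltonian side. First I would write out $\mathbb{F}\mathbf{L}^{\times}$ in the adapted coordinates $(q^i,v^i;Q^i,V^i)$ on $TQ\times TQ$: since $\mathbf{L}(v_q,V_Q)=L(V_Q)-L(v_q)$ and the two copies of $TQ$ are genuinely independent, one has $\mathbb{F}\mathbf{L}^{\times}(q^i,v^i;Q^i,V^i)=\bigl(q^i,-\tfrac{\partial L}{\partial v^i}(q,v);Q^i,\tfrac{\partial L}{\partial V^i}(Q,V)\bigr)$, which is precisely $(-\mathbb{F}L(v_q),\mathbb{F}L(V_Q))$ as asserted in the excerpt. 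The key observation is that this map is ``block diagonal'' with respect to the splitting into the first and second factors.

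Next I would compute the Jacobian of $\mathbb{F}\mathbf{L}^{\times}$ and observe it has the block structure
\[
\begin{pmatrix}
I & 0 & 0 & 0\\
-W_{ij}(q,v) & * & 0 & 0\\
0 & 0 & I & 0\\
0 & 0 & W_{ij}(Q,V) & *
\end{pmatrix},
\]
where $W_{ij}=\tfrac{\partial^2 L}{\partial v^i\partial v^j}$ is the Hessian of $L$; more precisely, after permuting coordinates so that the two factors are grouped, the Jacobian is the block-diagonal matrix $\operatorname{diag}\bigl(J(\mathbb{F}L)|_{(q,v)}^{-},\,J(\mathbb{F}L)|_{(Q,V)}\bigr)$ up to an overall sign on one block, where $J(\mathbb{F}L)$ denotes the Jacobian of the single fibre derivative $\mathbb{F}L:TQ\to T^*Q$. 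Since $\mathbb{F}L$ always fixes the base point $q^i$, its Jacobian is invertible at a point iff the Hessian block $W_{ij}$ there is invertible, i.e.\ iff $\mathbb{F}L$ is a local diffeomorphism there (inverse function theorem). A determinant of a block-diagonal (or block-triangular) matrix is the product of the determinants of the diagonal blocks, and the sign flip on one block does not affect invertibility, so $\mathbb{F}\mathbf{L}^{\times}$ is a local diffeomorphism at $(v_q,V_Q)$ iff $\mathbb{F}L$ is a local diffeomorphism at both $v_q$ and $V_Q$; taking both arguments to range over all of $TQ$ gives the stated equivalence.

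For the converse direction I would simply note that restricting $\mathbb{F}\mathbf{L}^{\times}$ to the diagonal-type slice where $(Q,V)$ is held fixed at a regular point recovers $-\mathbb{F}L$ on the first factor (up to the harmless sign and a constant shift), so local invertibility of $\mathbb{F}\mathbf{L}^{\times}$ forces local invertibility of $\mathbb{F}L$; alternatively, invert the block-matrix identity above. I do not expect any real obstacle here: the whole content is that the duplicated Lagrangian decouples the two sets of fibre variables, so its fibre derivative is (up to signs) a product of two copies of $\mathbb{F}L$, and regularity is multiplicative across the product. If one wants to avoid coordinates entirely, the same argument runs by noting $\mathbb{F}\mathbf{L}^{\times}$ is conjugate, via the obvious identification $T(Q\times Q)\cong TQ\times TQ$ and $T^*(Q\times Q)\cong T^*Q\times T^*Q$, to the map $(-\mathbb{F}L)\times\mathbb{F}L$, and a product of maps is a local diffeomorphism iff each factor is. The only mild care needed is to make the sign on the first block explicit and to confirm it is a global diffeomorphism of each fibre (hence contributes an invertible, if negative-definite-flavoured, Jacobian block), which is immediate since $v\mapsto -\mathbb{F}L(v)$ differs from $\mathbb{F}L$ by composition with $-\mathrm{id}$ on the fibres of $T^*Q$.
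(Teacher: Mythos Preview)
Your proposal is correct and follows exactly the approach implicit in the paper: the paper states Lemma~\ref{lemma-5} without proof as the direct analogue of Lemma~\ref{lemma-2}, which it likewise labels ``trivial'' after displaying the block-diagonal Hessian of $\mathbb{F}\mathbf{H}^{\times}$. Your coordinate computation of $\mathbb{F}\mathbf{L}^{\times}=(-\mathbb{F}L,\mathbb{F}L)$ and the resulting block-diagonal Jacobian with blocks $\pm W_{ij}$ is precisely the intended argument.
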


It is easy to check that with this definition of fibre derivative the following diagram commutes:
\[
\xymatrix{
TQ \times TQ \ar[rr]^{\mathbb{F}\mathbf{L}^{\times}} && T^*Q \times T^*Q\\\
TQ \ar[u]^{\tilde{\epsilon}} \ar[rr]^{\mathbb{F}L} && T^*Q \ar[u]^{\epsilon}
}
\]
We can also define the energy of the system as in the usual case, with $E_{\mathbf{L}} = \triangle\left(\mathbf{L}\right) - \mathbf{L}$, but in order to relate this with the Hamiltonian formulation, it will be useful to rewrite it as:
\begin{equation*}
E_\mathbf{L}(v_q, V_Q) = \left\langle \mathbb{F}\mathbf{L}^{\times}(v_q, V_Q), (v_q, V_Q)\right\rangle_{\times} - \mathbf{L}(v_q, V_Q).
\end{equation*}
where $\left\langle \cdot,\cdot \right\rangle_{\times}: (T^*Q \times T^*Q) \times (TQ \times TQ) \to \mathbb{R}$ is the inner product defined as:
\begin{equation}
\left\langle \alpha , v \right\rangle_{\times} = \left\langle \mathrm{pr}_2(\alpha), \widetilde{\mathrm{pr}}_2(v)\right\rangle - \left\langle \mathrm{pr}_1(\alpha), \widetilde{\mathrm{pr}}_1(v)\right\rangle.
\end{equation}

The following results will help us prove the analogue of lemma \ref{lemma-1}.
\begin{proposition}
\label{prop:inversion_commutation}
Let $\mathbf{L}: TQ \times TQ \to \mathbb{R}$ be such that $\mathbf{L} \circ \tilde{\iota} = - \mathbf{L}$ then
the following diagram commutes:
\[
\xymatrix{
TQ \times TQ \ar[d]^{\tilde{\iota}} \ar[r]^{\mathbb{F}\mathbf{L}^{\times}} & T^*Q \times T^*Q \ar[d]^{\iota}\\\
TQ \times TQ \ar[r]^{\mathbb{F}\mathbf{L}^{\times}} & T^*Q \times T^*Q
}
\]
\end{proposition}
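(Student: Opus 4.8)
The plan is to verify the commutation $\iota \circ \mathbb{F}\mathbf{L}^{\times} = \mathbb{F}\mathbf{L}^{\times} \circ \tilde{\iota}$ by a direct computation using the defining formula $\mathbb{F}\mathbf{L}^{\times}(v_q, V_Q) = (-\mathbb{F}L(v_q), \mathbb{F}L(V_Q))$ together with the hypothesis $\mathbf{L} \circ \tilde{\iota} = -\mathbf{L}$. First I would unwind the left-hand side: starting from $(v_q, V_Q)$, applying $\mathbb{F}\mathbf{L}^{\times}$ gives $(-\mathbb{F}L(v_q), \mathbb{F}L(V_Q))$, and then $\iota$ swaps the two factors to produce $(\mathbb{F}L(V_Q), -\mathbb{F}L(v_q))$. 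For the right-hand side, $\tilde{\iota}$ sends $(v_q, V_Q)$ to $(V_Q, v_q)$, and then $\mathbb{F}\mathbf{L}^{\times}$ yields $(-\mathbb{F}L(V_Q), \mathbb{F}L(v_q))$. These two expressions do not match on the nose — they differ by an overall sign in each slot — so the naive reading fails, which tells me the intended proof must use the hypothesis $\mathbf{L}\circ\tilde\iota = -\mathbf{L}$ more seriously rather than only the product structure of $\mathbb{F}\mathbf{L}^{\times}$ as written for the special split Lagrangian.

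The cleaner approach is to work intrinsically from the definition of the fibre derivative. For a general $\mathbf{L}: TQ\times TQ\to\mathbb{R}$, the fibre derivative $\mathbb{F}\mathbf{L}: TQ\times TQ \to T^*(Q\times Q) \cong T^*Q\times T^*Q$ is characterized by $\langle \mathbb{F}\mathbf{L}(v_q,V_Q), (w,W)\rangle = \frac{\mathrm d}{\mathrm dt}\big|_{t=0}\mathbf{L}(v_q + tw, V_Q + tW)$, and $\mathbb{F}\mathbf{L}^{\times}$ is obtained by composing with the sign flip on the first factor (equivalently, it is the fibre derivative taken with respect to the pairing $\langle\cdot,\cdot\rangle_{\times}$ introduced just above). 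I would then compute $\langle \mathbb{F}\mathbf{L}^{\times}(\tilde\iota(v_q,V_Q)), (w,W)\rangle_{\times}$, substitute $\mathbf{L}\circ\tilde\iota = -\mathbf{L}$, and use that $\tilde\iota$ is linear on fibres so it commutes with the directional derivative; the identity $\langle \iota(\alpha), v\rangle_{\times} = -\langle \alpha, \tilde\iota(v)\rangle_{\times}$ (immediate from the definition of $\langle\cdot,\cdot\rangle_{\times}$ and the swap) then lets me match both sides. This makes the proof a two- or three-line manipulation of pairings.

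Alternatively, and perhaps most transparent for the reader, I would simply do it in the adapted coordinates $(q^i, v^i; Q^i, V^i)$: write $\mathbf{L}(q,v,Q,V)$, note $\mathbf{L}\circ\tilde\iota$ means $\mathbf{L}(Q,V,q,v) = -\mathbf{L}(q,v,Q,V)$, differentiate this identity with respect to $v^i$ and with respect to $V^i$ to relate $\partial\mathbf{L}/\partial v^i$ evaluated at the swapped point to $-\partial\mathbf{L}/\partial V^i$ at the original point (and vice versa), and then read off that both composites $\iota\circ\mathbb{F}\mathbf{L}^{\times}$ and $\mathbb{F}\mathbf{L}^{\times}\circ\tilde\iota$ send $(q,v,Q,V)$ to the same point of $T^*Q\times T^*Q$. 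The main (and only real) obstacle is bookkeeping: keeping the two sign conventions straight — the sign flip built into $\mathbb{F}\mathbf{L}^{\times}$ on its first slot versus the antisymmetry sign from $\mathbf{L}\circ\tilde\iota = -\mathbf{L}$ — and making sure the slot-swap in $\iota$ is composed on the correct side. Once the coordinate identities from differentiating $\mathbf{L}(Q,V,q,v)=-\mathbf{L}(q,v,Q,V)$ are in hand, the conclusion is immediate, so I would present that computation as the core of the proof.
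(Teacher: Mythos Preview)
Your proposal is correct, and the coordinate computation you settle on as the ``core of the proof'' is exactly the argument the paper gives: write $\mathbb{F}\mathbf{L}^{\times}(q,v,Q,V) = (q,-\partial\mathbf{L}/\partial v, Q, \partial\mathbf{L}/\partial V)$, compute both composites in coordinates, and then differentiate the identity $\mathbf{L}(Q,V,q,v)=-\mathbf{L}(q,v,Q,V)$ in $v$ and $V$ to match them. Your intrinsic version via the pairing $\langle\cdot,\cdot\rangle_\times$ is a nice alternative packaging of the same computation, and your first paragraph is a useful diagnostic: the discrepancy you find when using the split formula $(-\mathbb{F}L(v_q),\mathbb{F}L(V_Q))$ literally reflects that the operative definition here is the general coordinate one $(q,-\partial\mathbf{L}/\partial v, Q, \partial\mathbf{L}/\partial V)$, which is what the paper's proof actually uses.
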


\begin{proof} We need to show that:
\begin{equation*}
\mathbb{F}\mathbf{L}^{\times} \circ \tilde{\iota} = \iota \circ \mathbb{F}\mathbf{L}^{\times}.
\end{equation*}
On the left-hand side we have
\begin{align*}
\left(\mathbb{F}\mathbf{L}^{\times} \circ \tilde{\iota}\right)(q,v,Q,V) &= \mathbb{F}\mathbf{L}^{\times} (Q,V,q,v)\\
&= \left(Q, -\frac{\partial \mathbf{L}}{\partial V}(Q,V,q,v),q,\frac{\partial \mathbf{L}}{\partial v}(Q,V,q,v)\right),
\end{align*}
while on the right-hand side we have
\begin{align*}
\left(\iota \circ \mathbb{F}\mathbf{L}^{\times}\right)(q,v,Q,V) &= \iota\left(q,-\frac{\partial \mathbf{L}}{\partial v}(q,v,Q,V),Q,\frac{\partial \mathbf{L}}{\partial V}(q,v,Q,V)\right)\\
&= \left(Q,\frac{\partial \mathbf{L}}{\partial V}(q,v,Q,V),q,-\frac{\partial \mathbf{L}}{\partial v}(q,v,Q,V)\right)
\end{align*}

Now, using that $\mathbf{L} \circ \tilde{\iota} = - \mathbf{L}$ we find:
\begin{align*}
\frac{\partial \mathbf{L}}{\partial V}(Q,V,q,v) &= - \frac{\partial \mathbf{L}}{\partial V}(q,v,Q,V),\\
\frac{\partial \mathbf{L}}{\partial v}(Q,V,q,v) &= - \frac{\partial \mathbf{L}}{\partial v}(q,v,Q,V).
\end{align*}

Applying this we immediately arrive at the desired result.
\end{proof}

\begin{lemma}
\label{lem:inner_prod_inversion}
The inner product $\left\langle\cdot,\cdot\right\rangle_\times$ satisfies that
\begin{equation*}
\left\langle \iota(\alpha),\tilde{\iota}(v)\right\rangle_\times = - \left\langle\alpha,v\right\rangle_\times.
\end{equation*}
\end{lemma}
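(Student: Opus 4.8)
The plan is to unwind the definition of $\langle\cdot,\cdot\rangle_\times$ together with the elementary intertwining relations between the inversion maps $\iota$, $\tilde\iota$ and the respective projections. Concretely, from $\iota(\alpha_q,\beta_{q'}) = (\beta_{q'},\alpha_q)$ one reads off $\mathrm{pr}_1\circ\iota = \mathrm{pr}_2$ and $\mathrm{pr}_2\circ\iota = \mathrm{pr}_1$, and similarly from $\tilde\iota(v_q,V_Q)=(V_Q,v_q)$ one gets $\widetilde{\mathrm{pr}}_1\circ\tilde\iota = \widetilde{\mathrm{pr}}_2$ and $\widetilde{\mathrm{pr}}_2\circ\tilde\iota = \widetilde{\mathrm{pr}}_1$. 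These are just the defining properties of the inversion maps, so no work is needed there.

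Next I would substitute these into the defining formula for the pairing. Writing it out,
\[
\left\langle \iota(\alpha),\tilde\iota(v)\right\rangle_\times
= \left\langle \mathrm{pr}_2(\iota(\alpha)),\widetilde{\mathrm{pr}}_2(\tilde\iota(v))\right\rangle
- \left\langle \mathrm{pr}_1(\iota(\alpha)),\widetilde{\mathrm{pr}}_1(\tilde\iota(v))\right\rangle
= \left\langle \mathrm{pr}_1(\alpha),\widetilde{\mathrm{pr}}_1(v)\right\rangle
- \left\langle \mathrm{pr}_2(\alpha),\widetilde{\mathrm{pr}}_2(v)\right\rangle,
\]
where in each term the canonical pairing $\langle\cdot,\cdot\rangle$ between $T^*Q$ and $TQ$ is unchanged because $\iota$ and $\tilde\iota$ act fibrewise over the same base point (they only interchange the two copies). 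The right-hand side is visibly the negative of
$\left\langle \mathrm{pr}_2(\alpha),\widetilde{\mathrm{pr}}_2(v)\right\rangle - \left\langle \mathrm{pr}_1(\alpha),\widetilde{\mathrm{pr}}_1(v)\right\rangle = \left\langle\alpha,v\right\rangle_\times$, which gives the claim. One may alternatively phrase the whole computation in the adapted coordinates $(q^i,v^i;Q^i,V^i)$ and $(q^i,p_i;Q^i,P_i)$, where $\langle\alpha,v\rangle_\times = P_iV^i - p_iv^i$ and the swap sends this to $p_iv^i - P_iV^i$; both routes are equally short.

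There is essentially no obstacle here: the statement is a bookkeeping identity expressing that the defining "second minus first" antisymmetry of $\langle\cdot,\cdot\rangle_\times$ is exchanged into "first minus second" by the simultaneous swaps. The only point to be careful about is to apply the swap consistently on \emph{both} slots of the pairing, so that it is the asymmetric sign structure of the definition — and not the individual canonical pairings — that produces the overall minus sign. This lemma will then feed, together with Proposition \ref{prop:inversion_commutation}, into the proof of the analogue of Lemma \ref{lemma-1} for the duplicated Lagrangian.
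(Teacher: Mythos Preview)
Your proposal is correct and follows essentially the same approach as the paper: both proofs note the intertwining relations $\mathrm{pr}_i\circ\iota=\mathrm{pr}_{3-i}$ and $\widetilde{\mathrm{pr}}_i\circ\tilde\iota=\widetilde{\mathrm{pr}}_{3-i}$ and then substitute directly into the definition of $\langle\cdot,\cdot\rangle_\times$. Your optional coordinate check is a harmless addition but not needed.
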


\begin{proof}
First note that $\mathrm{pr}_1 \circ \iota = \mathrm{pr}_2$ and $\widetilde{\mathrm{pr}}_1 \circ \tilde{\iota} = \widetilde{\mathrm{pr}}_2$, and that the same holds under the exchange $1 \leftrightarrow 2$.

Clearly:
\begin{align*}
\left\langle \iota(\alpha),\tilde{\iota}(v)\right\rangle_\times &= \left\langle (\mathrm{pr}_2 \circ \iota)(\alpha), (\widetilde{\mathrm{pr}}_2 \circ \tilde{\iota})(v)\right\rangle - \left\langle (\mathrm{pr}_1 \circ \iota)(\alpha), (\widetilde{\mathrm{pr}}_1 \circ \tilde{\iota})(v)\right\rangle\\
&= \left\langle \mathrm{pr}_1(\alpha), \widetilde{\mathrm{pr}}_1(v)\right\rangle - \left\langle \mathrm{pr}_2(\alpha), \widetilde{\mathrm{pr}}_2(v)\right\rangle\\
& = - \left\langle \alpha , v \right\rangle_{\times}
\end{align*}
\end{proof}

\begin{proposition}
\label{prop:energy_inversion}
Let $\mathbf{L}$ be a Lagrangian satisfying  such that $\mathbf{L} \circ \tilde{\iota} = - \mathbf{L}$ then also $E_\mathbf{L} \circ \tilde{\iota} = - E_\mathbf{L}$.
\end{proposition}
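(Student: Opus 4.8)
The plan is to reduce everything to the two preceding results, Proposition~\ref{prop:inversion_commutation} and Lemma~\ref{lem:inner_prod_inversion}, by working with the invariant expression
\[
E_\mathbf{L}(v_q, V_Q) = \left\langle \mathbb{F}\mathbf{L}^{\times}(v_q, V_Q), (v_q, V_Q)\right\rangle_{\times} - \mathbf{L}(v_q, V_Q)
\]
rather than with the coordinate form $\triangle(\mathbf{L}) - \mathbf{L}$. Writing $w = (v_q, V_Q)$ for brevity, I would evaluate $E_\mathbf{L}$ at $\tilde{\iota}(w)$ and treat the two summands separately.

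For the first summand, I would use Proposition~\ref{prop:inversion_commutation} to rewrite $\mathbb{F}\mathbf{L}^{\times}(\tilde{\iota}(w)) = \iota\!\left(\mathbb{F}\mathbf{L}^{\times}(w)\right)$, so that
\[
\left\langle \mathbb{F}\mathbf{L}^{\times}(\tilde{\iota}(w)), \tilde{\iota}(w)\right\rangle_{\times} = \left\langle \iota\!\left(\mathbb{F}\mathbf{L}^{\times}(w)\right), \tilde{\iota}(w)\right\rangle_{\times}.
\]
Then I would apply Lemma~\ref{lem:inner_prod_inversion} with $\alpha = \mathbb{F}\mathbf{L}^{\times}(w)$ and $v = w$ to conclude that this equals $-\left\langle \mathbb{F}\mathbf{L}^{\times}(w), w\right\rangle_{\times}$. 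For the second summand, the hypothesis $\mathbf{L} \circ \tilde{\iota} = -\mathbf{L}$ gives directly $\mathbf{L}(\tilde{\iota}(w)) = -\mathbf{L}(w)$.

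Combining the two, $E_\mathbf{L}(\tilde{\iota}(w)) = -\left\langle \mathbb{F}\mathbf{L}^{\times}(w), w\right\rangle_{\times} + \mathbf{L}(w) = -E_\mathbf{L}(w)$, which is the claim. I do not anticipate a genuine obstacle here: the statement is essentially a bookkeeping consequence of the antisymmetry of $\mathbf{L}$ under $\tilde{\iota}$ together with the two compatibility lemmas already established. The only point requiring a little care is making sure the hypothesis $\mathbf{L} \circ \tilde{\iota} = -\mathbf{L}$ is the right one to invoke Proposition~\ref{prop:inversion_commutation} (which assumes exactly this), so that the chain of substitutions is legitimate; once that is noted, the computation closes in one line.
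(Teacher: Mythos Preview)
Your proposal is correct and follows essentially the same approach as the paper's own proof: both evaluate $E_\mathbf{L}\circ\tilde{\iota}$ via the invariant expression $E_\mathbf{L}=\langle \mathbb{F}\mathbf{L}^{\times}(\cdot),\cdot\rangle_\times-\mathbf{L}$, invoke Proposition~\ref{prop:inversion_commutation} to commute $\mathbb{F}\mathbf{L}^{\times}$ past $\tilde{\iota}$, then apply Lemma~\ref{lem:inner_prod_inversion} and the hypothesis on $\mathbf{L}$.
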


\begin{proof}
Applying the inversion to the definition of the energy given in terms of $\mathbb{F}\mathbf{L}^{\times}$ we have:
\begin{align*}
\left(E_\mathbf{L} \circ \tilde{\iota}\right)(\cdot) &= \left\langle (\mathbb{F}\mathbf{L}^{\times} \circ \tilde{\iota}) (\cdot), \tilde{\iota}(\cdot)\right\rangle_{\times} - (\mathbf{L} \circ \tilde{\iota})(\cdot)\\
&= \left\langle (\iota \circ \mathbb{F}\mathbf{L}^{\times}) (\cdot), \tilde{\iota}(\cdot)\right\rangle_{\times} - (\mathbf{L} \circ \tilde{\iota})(\cdot)
\end{align*}
Applying lemma \ref{lem:inner_prod_inversion} and the inversion property of $\mathbf{L}$ we get:
\begin{align*}
\left(E_\mathbf{L} \circ \tilde{\iota}\right)(\cdot) &= - \left\langle \mathbb{F}\mathbf{L}^{\times}(\cdot), \cdot\right\rangle_{\times} + \mathbf{L}(\cdot)\\
&= - E_\mathbf{L}(\cdot)
\end{align*}
\end{proof}

\begin{corollary}
\label{cor:inversion_Hamiltonian}
Let $\mathbf{L}$ be a regular Lagrangian satisfying the hypothesis of proposition \ref{prop:inversion_commutation}, and define its associated Lagrangian by the expression $\mathbf{H} \circ \mathbb{F}\mathbf{L}^{\times} = E_{\mathbf{L}}$. Then $\mathbf{H} \circ \iota = - \mathbf{H}$.
\end{corollary}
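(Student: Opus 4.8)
The plan is to simply string together the three facts already established: that regularity of $\mathbf{L}$ makes $\mathbb{F}\mathbf{L}^{\times}$ a local diffeomorphism (Lemma \ref{lemma-5}), so that $\mathbf{H}$ is well defined by the relation $\mathbf{H} \circ \mathbb{F}\mathbf{L}^{\times} = E_{\mathbf{L}}$; that $\iota$ intertwines $\mathbb{F}\mathbf{L}^{\times}$ with $\tilde{\iota}$ (Proposition \ref{prop:inversion_commutation}); and that the energy $E_{\mathbf{L}}$ is anti-invariant under $\tilde{\iota}$ (Proposition \ref{prop:energy_inversion}). The only real content is to compose these in the right order and then cancel $\mathbb{F}\mathbf{L}^{\times}$.

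Concretely, I would start from $\mathbf{H} \circ \iota \circ \mathbb{F}\mathbf{L}^{\times}$ and rewrite $\iota \circ \mathbb{F}\mathbf{L}^{\times} = \mathbb{F}\mathbf{L}^{\times} \circ \tilde{\iota}$ using Proposition \ref{prop:inversion_commutation}. This turns the expression into $\mathbf{H} \circ \mathbb{F}\mathbf{L}^{\times} \circ \tilde{\iota} = E_{\mathbf{L}} \circ \tilde{\iota}$ by the defining relation for $\mathbf{H}$, and then $E_{\mathbf{L}} \circ \tilde{\iota} = - E_{\mathbf{L}} = - \mathbf{H} \circ \mathbb{F}\mathbf{L}^{\times}$ by Proposition \ref{prop:energy_inversion} and again the defining relation. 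Thus $\mathbf{H} \circ \iota \circ \mathbb{F}\mathbf{L}^{\times} = - \mathbf{H} \circ \mathbb{F}\mathbf{L}^{\times}$, and since $\mathbb{F}\mathbf{L}^{\times}$ is a (local) diffeomorphism it is in particular locally surjective, so we may cancel it on the right to conclude $\mathbf{H} \circ \iota = - \mathbf{H}$ on the image of $\mathbb{F}\mathbf{L}^{\times}$.

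The one point that needs a little care — and the only place where an honest obstruction could hide — is the domain on which $\mathbf{H}$ and hence the identity are defined. If $\mathbf{L}$ is merely regular (not hyperregular), $\mathbb{F}\mathbf{L}^{\times}$ is only a local diffeomorphism, so $\mathbf{H}$ is defined on the open image $U = \mathbb{F}\mathbf{L}^{\times}(TQ \times TQ) \subseteq T^*Q \times T^*Q$, and I would note that $U$ is $\iota$-invariant: indeed by Proposition \ref{prop:inversion_commutation}, $\iota(U) = \iota(\mathbb{F}\mathbf{L}^{\times}(TQ\times TQ)) = \mathbb{F}\mathbf{L}^{\times}(\tilde{\iota}(TQ\times TQ)) = \mathbb{F}\mathbf{L}^{\times}(TQ\times TQ) = U$, since $\tilde{\iota}$ is an involution of $TQ\times TQ$. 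Hence $\mathbf{H}\circ\iota$ is defined on all of $U$ and the cancellation argument is legitimate there. In the hyperregular case $U = T^*Q\times T^*Q$ and the statement holds globally. I do not expect any genuine difficulty beyond bookkeeping the order of composition and this domain remark; no new computation is required.
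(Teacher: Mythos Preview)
Your argument is correct and follows essentially the same route as the paper's proof: compose $\mathbf{H}\circ\iota\circ\mathbb{F}\mathbf{L}^{\times}$, use Proposition~\ref{prop:inversion_commutation} to swap $\iota$ past $\mathbb{F}\mathbf{L}^{\times}$, then apply Proposition~\ref{prop:energy_inversion} and cancel $\mathbb{F}\mathbf{L}^{\times}$. Your additional remark that the image of $\mathbb{F}\mathbf{L}^{\times}$ is $\iota$-invariant is a nice piece of care that the paper leaves implicit.
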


\begin{proof}
Using proposition \ref{prop:inversion_commutation} we have that $\mathbf{H} \circ \iota \circ \mathbb{F}\mathbf{L}^{\times} = \mathbf{H} \circ \mathbb{F}\mathbf{L}^{\times} \circ \tilde{\iota} = E_{\mathbf{L}} \circ \tilde{\iota}$. Applying proposition \ref{prop:energy_inversion} the result follows immediately.
\end{proof}

Finally we can state the following result:
\begin{proposition}\label{proposition-X}
Assume $L$ is a regular Lagrangian, then the Hamiltonian vector field $X_\mathbf{L}$ associated to $\mathbf{L}$ given by
\[
\imath_{X_\mathbf{L}}\omega_\mathbf{L}=\mathrm{d}E_\mathbf{L}
\]
verifies that:
\begin{enumerate}[label=\roman*)]
\item \label{itm:proposition-X-itm1} $X_\mathbf{L}$ is tangent to $\tilde{\epsilon}(TQ)$;
\item \label{itm:proposition-X-itm2} $\left.X_\mathbf{L}\right\vert_{\tilde{\epsilon}(TQ)}=\tilde{\epsilon}_* (X_L)$.
\end{enumerate}
\end{proposition}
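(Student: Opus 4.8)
The plan is to reduce the statement to its Hamiltonian counterpart, Lemma~\ref{lemma-1}, by transporting everything through the fibre derivative $\mathbb{F}\mathbf{L}^{\times}$, and to keep a short coordinate computation in reserve both as a sanity check and as a way of sidestepping the one delicate point.

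\medskip
\noindent\emph{Reduction to the Hamiltonian side.} Since $L$ is regular, so is $\mathbf{L}$ (its Hessian is block-diagonal with the two invertible blocks $\pm W_{ij}$, cf.\ Lemma~\ref{lemma-5}), hence $\mathbb{F}\mathbf{L}^{\times}$ is a local diffeomorphism near $\tilde{\epsilon}(TQ)$ and we may define a Hamiltonian $\mathbf{H}$ on $T^*Q \times T^*Q$ by $\mathbf{H} \circ \mathbb{F}\mathbf{L}^{\times} = E_\mathbf{L}$. Because $\mathbf{L} \circ \tilde{\iota} = -\mathbf{L}$, Corollary~\ref{cor:inversion_Hamiltonian} gives $\mathbf{H} \circ \iota = -\mathbf{H}$; unwinding the definitions one moreover identifies $\mathbf{H}$ with $H \circ \mathrm{pr}_2 - H \circ \mathrm{pr}_1$, where $H = E_L \circ (\mathbb{F}L)^{-1}$ is the Hamiltonian associated with $L$, so Lemma~\ref{lemma-1} applies and yields that $X_\mathbf{H}$ is tangent to $\epsilon(T^*Q)$ with $\left.X_\mathbf{H}\right\vert_{\epsilon(T^*Q)} = \epsilon_*(X_H)$. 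Next one checks that $\mathbb{F}\mathbf{L}^{\times}$ is a local symplectomorphism from $(TQ\times TQ,\omega_\mathbf{L})$ onto $(T^*Q\times T^*Q,\Omega_{Q\times Q})$; granting this, the relations $\imath_{X_\mathbf{L}}\omega_\mathbf{L} = \mathrm{d}E_\mathbf{L}$, $\imath_{X_\mathbf{H}}\Omega_{Q\times Q} = \mathrm{d}\mathbf{H}$ and $\mathbf{H}\circ\mathbb{F}\mathbf{L}^{\times} = E_\mathbf{L}$ force $X_\mathbf{L}$ and $X_\mathbf{H}$ to be $\mathbb{F}\mathbf{L}^{\times}$-related. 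Finally, the commutative square relating $\mathbb{F}\mathbf{L}^{\times}$, $\tilde{\epsilon}$, $\mathbb{F}L$ and $\epsilon$ shows that $\mathbb{F}\mathbf{L}^{\times}$ carries $\tilde{\epsilon}(TQ)$ diffeomorphically onto $\epsilon(T^*Q)$; hence tangency transports to $X_\mathbf{L}$ being tangent to $\tilde{\epsilon}(TQ)$, which is \ref{itm:proposition-X-itm1}. For \ref{itm:proposition-X-itm2}, push $\left.X_\mathbf{H}\right\vert_{\epsilon(T^*Q)} = \epsilon_*(X_H)$ back through $(\mathbb{F}\mathbf{L}^{\times})^{-1}$ and use the classical fact that for regular $L$ the Legendre transform $\mathbb{F}L$ conjugates the Euler--Lagrange vector field $X_L$ to $X_H$; combined with $\mathbb{F}\mathbf{L}^{\times} \circ \tilde{\epsilon} = \epsilon \circ \mathbb{F}L$ this gives $\left.X_\mathbf{L}\right\vert_{\tilde{\epsilon}(TQ)} = \tilde{\epsilon}_*(X_L)$.

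\medskip
\noindent\emph{A self-contained argument for \ref{itm:proposition-X-itm1}.} Independently of the above, tangency follows directly from the inversion symmetry: $\tilde{\iota}$ is a smooth involution of $TQ \times TQ$ whose fixed-point set is exactly $\tilde{\epsilon}(TQ)$, it satisfies $\tilde{\iota}^*\omega_\mathbf{L} = -\omega_\mathbf{L}$ (clear from $\omega_\mathbf{L} = \widetilde{\mathrm{pr}}_2^*\omega_L - \widetilde{\mathrm{pr}}_1^*\omega_L$ and $\widetilde{\mathrm{pr}}_i \circ \tilde{\iota} = \widetilde{\mathrm{pr}}_{3-i}$), and $E_\mathbf{L} \circ \tilde{\iota} = -E_\mathbf{L}$ by Proposition~\ref{prop:energy_inversion}. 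Applying $\tilde{\iota}^*$ to $\imath_{X_\mathbf{L}}\omega_\mathbf{L} = \mathrm{d}E_\mathbf{L}$ and using nondegeneracy of $\omega_\mathbf{L}$ gives $\tilde{\iota}_* X_\mathbf{L} = X_\mathbf{L}$, and a vector field invariant under a smooth involution is automatically tangent to that involution's fixed-point set.

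\medskip
\noindent\emph{Coordinate shortcut and the main obstacle.} In adapted coordinates $\mathbf{L} = L(Q,V) - L(q,v)$, so $\omega_\mathbf{L}$ and $E_\mathbf{L}$ are block-diagonal in the groups $(q,v)$ and $(Q,V)$; the equation $\imath_{X_\mathbf{L}}\omega_\mathbf{L} = \mathrm{d}E_\mathbf{L}$ then decouples, and on the $(q,v)$ block the two sign changes cancel, so that $X_\mathbf{L} = X_L \oplus X_L$, one copy of the Euler--Lagrange vector field of $L$ per block. This is manifestly tangent to $\tilde{\epsilon}(TQ) = \{Q^i = q^i,\ V^i = v^i\}$ and restricts there to $\tilde{\epsilon}_*(X_L)$, proving both parts at once. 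The only genuinely delicate step in the intrinsic route is the symplectic bookkeeping invoked above: $\Omega_{Q\times Q}$ is \emph{not} the canonical symplectic form on $T^*(Q\times Q)$ but its twist by the symplectomorphism $\Psi$, while $\mathbb{F}\mathbf{L}^{\times}$ carries a sign on the first factor, so one must verify that these conspire to give $(\mathbb{F}\mathbf{L}^{\times})^*\Omega_{Q\times Q} = \omega_\mathbf{L}$ exactly and that $\mathbf{H}\circ\mathbb{F}\mathbf{L}^{\times} = E_\mathbf{L}$ reproduces precisely the Hamiltonian of Lemma~\ref{lemma-1}. I expect this sign check, rather than anything conceptual, to be the main point to get right; the coordinate computation avoids it.
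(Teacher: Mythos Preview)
Your primary argument---transport to the Hamiltonian side via $\mathbb{F}\mathbf{L}^{\times}$, identify $\mathbf{H}$ with $H\circ\mathrm{pr}_2 - H\circ\mathrm{pr}_1$, invoke Lemma~\ref{lemma-1}, and pull back using the commutative square $\mathbb{F}\mathbf{L}^{\times}\circ\tilde{\epsilon}=\epsilon\circ\mathbb{F}L$---is exactly the paper's own proof, which simply asserts $(\mathbb{F}\mathbf{L}^{\times})^*\Omega_{Q\times Q}=\omega_\mathbf{L}$ and $\mathbf{H}=E_\mathbf{L}\circ(\mathbb{F}\mathbf{L}^{\times})^{-1}$ and then says ``the results of Lemma~\ref{lemma-1} also apply to $X_\mathbf{L}$ and these results can be brought back to $TQ\times TQ$''. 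Your caution about the sign bookkeeping is well placed (the paper glosses over it), and your two supplementary arguments---the direct $\tilde{\iota}$-invariance argument for \ref{itm:proposition-X-itm1} and the block-diagonal coordinate computation---are correct and not present in the paper's proof, though they mirror, respectively, the groupoid argument of Proposition~\ref{prop:tangency_to_identities} and the coordinate remark made after Lemma~\ref{lemma-1}.
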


\begin{proof}
It is easy to check that $\left(\mathbb{F}\mathbf{L}^{\times}\right)^* \Omega_{Q \times Q} = \omega_\mathbf{L}$. Defining $H = E_{L} \circ \left(\mathbb{F}L\right)^{-1}$ we then get that $\mathbf{H} = E_{\mathbf{L}} \circ \left(\mathbb{F}\mathbf{L}^{\times}\right)^{-1}$. Thus the results of lemma \ref{lemma-1} also apply to $X_\mathbf{L}$ and these results can be brought back to $TQ \times TQ$, proving our claim.
\end{proof}

As with the Hamiltonian formulation we may also include potentials in our description. Again, let $\widetilde{K} : TQ \times TQ \to \mathbb{R}$ be a function such that $\widetilde{K} \circ \tilde{\iota} = -\widetilde{K}$, then $\widetilde{K}$ is a generalized potential and $\mathbf{L}_{\widetilde{K}} = \mathbf{L} - \widetilde{K}$ satisfies $\mathbf{L}_{\widetilde{K}} \circ \tilde{\iota} = -\mathbf{L}_{\widetilde{K}}$.

Given a forced Lagrangian system $(L, F)$ we may define the generalized potential $\widetilde{K}_F: TQ\times TQ\rightarrow {\mathbb R}$ explicitly written as:
\[
\widetilde{K}_F(v_q, V_Q) = \frac{1}{2}\langle F(V_Q), \tau(Q, q)\rangle - \frac{1}{2}\langle F(v_q), \tau(q, Q)\rangle.
\]

Note that if $\mathbf{L}_{\widetilde{K}_F}$ is regular we may obtain a Hamiltonian from its energy as $\widetilde{\mathbf{H}}_{K_F} = E_{\mathbf{L}_{\widetilde{K}_F}} \circ \left(\mathbb{F}\mathbf{L}^{\times}_{\widetilde{K}_F}\right)^{-1}$ but in general it will not be the same Hamiltonian as we defined in the previous section, i.e. 
\begin{equation*}
\widetilde{\mathbf{H}}_{K_F} \neq H \circ \mathrm{pr}_2 - H \circ \mathrm{pr}_1 + \frac{1}{2}\langle F^H \circ \mathrm{pr}_2, \tau \circ \pi_{Q \times Q} \circ \iota \rangle - \frac{1}{2}\langle F^H \circ \mathrm{pr}_1, \tau \circ \pi_{Q \times Q}\rangle
\end{equation*}
in general. This will only be equal if $F$ does not depend on velocities, which means $\mathbb{F}\mathbf{L}^{\times}_{\widetilde{K}_F} = \mathbb{F}\mathbf{L}^{\times}$. This means we cannot directly invoke the result from theorem \ref{theorem-1} to prove that the resulting Euler-Lagrange field coincides with the forced dynamics at the identities and instead we must work a bit more to get the same result. Still at the end of the section we will show that we can actually relate both dynamics obtained from $\mathbf{H}_{K_F}$ and $\widetilde{\mathbf{H}}_{K_F}$.

Let us begin with this regularity result:
\begin{proposition}
If $L$ is regular then $\mathbf{L}_{\widetilde{K}_F}$ is regular in a neighborhood of $\tilde{\epsilon}(TQ)$.
\end{proposition}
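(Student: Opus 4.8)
The strategy is to reduce regularity of $\mathbf{L}_{\widetilde{K}_F}$ to regularity of $L$ by examining the Hessian of $\mathbf{L}_{\widetilde{K}_F}$ with respect to the velocity variables, exactly as was done for $\mathbf{H}_{K_F}$ in Proposition \ref{proposition-2}. First I would write, in adapted coordinates $(q^i, v^i; Q^i, V^i)$,
\[
\mathbf{L}_{\widetilde{K}_F}(q,v,Q,V) = L(Q,V) - L(q,v) - \widetilde{K}_F(q,v,Q,V),
\]
and recall that $\widetilde{K}_F(v_q,V_Q) = \tfrac12\langle F(V_Q), \tau(Q,q)\rangle - \tfrac12\langle F(v_q), \tau(q,Q)\rangle$, i.e. in coordinates $\widetilde{K}_F = \tfrac12 F_j(Q,V)\tau^j(Q,q) - \tfrac12 F_j(q,v)\tau^j(q,Q)$. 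The key observation is that $\tau^j(q,Q)$ depends only on the base points $q,Q$ and not on the velocities, so the velocity-dependence of $\widetilde{K}_F$ enters only through the arguments of $F_j$.

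The Hessian of $\mathbf{L}_{\widetilde{K}_F}$ in the velocities decomposes into a block-diagonal part coming from $L$, namely $\mathrm{diag}\!\left(-W_{ij}(q,v),\, W_{ij}(Q,V)\right)$ with $W_{ij} = \partial^2 L/\partial v^i\partial v^j$, plus correction terms from $\widetilde{K}_F$. The crucial point — mirroring Proposition \ref{proposition-2} — is to evaluate these corrections on $\tilde\epsilon(TQ)$, where $q = Q$. There the cross terms $\partial^2\widetilde{K}_F/\partial v^i\partial V^j$ vanish identically because $\widetilde{K}_F$ splits as a sum of a function of $(q,v,Q)$ and a function of $(q,Q,V)$, with no term depending on both $v$ and $V$. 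The diagonal corrections are $-\tfrac12\,\partial^2 F_j/\partial v^i\partial v^k\,\tau^j(q,Q)$ (and analogously in $V$), which on $\tilde\epsilon(TQ)$ vanish since $\tau^j(q,q) = 0$ (recall $\tau = \sigma^{-1}$ and $\sigma(0_q) = (q, R_q(0_q)) = (q,q)$, so $\tau(q,q) = 0_q$). Hence on $\tilde\epsilon(TQ)$ the velocity Hessian of $\mathbf{L}_{\widetilde{K}_F}$ reduces exactly to the block-diagonal Hessian of $\mathbf{L}$, which is regular precisely when $L$ is (this is the content of Lemma \ref{lemma-5}).

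Having established invertibility of the Hessian at every point of $\tilde\epsilon(TQ)$, regularity on a full neighborhood follows by continuity: the determinant of the velocity Hessian is a continuous function on $TQ\times TQ$ that is nonzero on the closed-in-charts submanifold $\tilde\epsilon(TQ)$, hence nonzero on a tubular neighborhood. This is the same tubular-neighborhood argument used to close Proposition \ref{proposition-2}.

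The main obstacle, such as it is, is purely bookkeeping: carefully writing out the second velocity derivatives of $\widetilde{K}_F$ and confirming that the potentially dangerous terms — the $v$–$V$ cross block and the $F$-Hessian times $\tau$ terms — all carry a factor that vanishes on the diagonal. There is no conceptual difficulty; the content is entirely that $\widetilde{K}_F$ is "diagonal-flat to second order in velocities," a structural consequence of its definition as a sum of two pieces each linear in $F$ evaluated at a single tangent vector and each multiplied by a purely base-dependent factor $\tau$ that vanishes when the two base points coincide.
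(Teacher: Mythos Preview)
Your proposal is correct and follows exactly the approach the paper intends: the paper's own proof simply says ``essentially the same as that of Proposition~\ref{proposition-2},'' and your argument is precisely the Lagrangian transcription of that computation, showing that all second velocity derivatives of $\widetilde{K}_F$ vanish on $\tilde\epsilon(TQ)$ so that the Hessian there reduces to the block-diagonal Hessian of $\mathbf{L}$, with regularity on a neighborhood following by continuity.
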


\begin{proof}
The proof is essentially the same as that of proposition \ref{proposition-2}.
\end{proof}

Given $\mathbf{L}_{\widetilde{K}} = \mathbf{L} - {\widetilde{K}}$, with $\mathbf{L} = L \circ \widetilde{\mathrm{pr}}_2 - L \circ \widetilde{\mathrm{pr}}_1$, let us reserve $X_{\mathbf{L}}$ for the \emph{free} Euler-Lagrange field, i.e., the vector field that satisfies $$\imath_{X_{\mathbf{L}}} \omega_{\mathbf{L}} = \mathrm{d}E_{\mathbf{L}},$$ and define the vector field $Y_{\widetilde{K}}$ as the one resulting from the decomposition $X_{\mathbf{L}_{\widetilde{K}}} = X_{\mathbf{L}} - Y_{\widetilde{K}}$. Similarly, let us define $\theta_{\widetilde{K}}$ by $\theta_{\mathbf{L}_{\widetilde{K}}} = \theta_{\mathbf{L}} - \theta_{\widetilde{K}}$ and $E_{\widetilde{K}}$ by $E_{\mathbf{L}_{\widetilde{K}}} = E_{\mathbf{L}} - E_{\widetilde{K}}$. Clearly if $\omega_{\widetilde{K}} = - \mathrm{d} \theta_{\widetilde{K}}$ then $\omega_{\mathbf{L}_{\widetilde{K}}} = \omega_{\mathbf{L}} - \omega_{\widetilde{K}}$, and it is not difficult to show that:
\begin{align*}
\imath_{Y_{\widetilde{K}}} \omega_{\mathbf{L}_{\widetilde{K}}} &= \mathrm{d} E_{\widetilde{K}} - \imath_{X_{\mathbf{L}}} \omega_{\widetilde{K}}\\
&= \mathcal{L}_{X_\mathbf{L}} \theta_{\widetilde{K}} - \mathrm{d} \widetilde{K}\; .
\end{align*}
%where in the second line the fact that $\imath_{X_{\mathbf{L}}} \theta_{\widetilde{K}} = \left\langle \mathbb{F}\widetilde{K}^{\times}(v_q, V_Q), (v_q, V_Q)\right\rangle_{\times}$ and Cartan's magic formula have been used.

We can now prove the following theorem:
\begin{theorem}
\label{thm:decomposition_with_K_f}
Given $\mathbf{L}_{\widetilde{K}} = \mathbf{L} - {\widetilde{K}}$ regular in a neighbourhood of $\epsilon(TQ)$, with $\mathbf{L} = L \circ \widetilde{\mathrm{pr}}_2 - L \circ \widetilde{\mathrm{pr}}_1$, whose Hamiltonian vector field $X_{\mathbf{L}_{\widetilde{K}}} = X_{\mathbf{L}} - Y_{\widetilde{K}}$ satisfies:
\begin{equation}
\imath_{X_{\mathbf{L}_{\widetilde{K}}}} \omega_{\mathbf{L}_{\widetilde{K}}} = \mathrm{d} E_{\mathbf{L}_{\widetilde{K}}}
\end{equation}
then:
\begin{enumerate}[label=\roman*)]
\item \label{thm:decomposition_with_K_f-itm0} $X_{\mathbf{L}_{\widetilde{K}}}$ is tangent to $\tilde{\epsilon}(TQ)$;
\item \label{thm:decomposition_with_K_f-itm1} $Y_{\widetilde{K}}$ is vertical and such that $\tilde{\iota}_{*} Y_{\widetilde{K}} = Y_{\widetilde{K}}$.
\end{enumerate}
Furthermore, if $\widetilde{K} = \widetilde{K}_F$, then we have that:
\begin{enumerate}[resume,label=\roman*)]
\item \label{thm:decomposition_with_K_f-itm2} $\left.\imath_{Y_{\widetilde{K}_F}} \omega_{\widetilde{K}_F}\right\vert_{\tilde{\epsilon}(TQ)} = 0$;
\item \label{thm:decomposition_with_K_f-itm3} $\left.\imath_{Y_{\widetilde{K}_F}} \omega_{\mathbf{L}}\right\vert_{\tilde{\epsilon}(TQ)} = \left( \widetilde{\mathrm{pr}}_1^* F-\widetilde{\mathrm{pr}}_2^* F \right)\vert_{\tilde{\epsilon}(TQ)}$.
\item \label{thm:decomposition_with_K_f-itm4} $\left.X_{E_{\mathbf{L}_{\widetilde{K}}}}\right\vert_{\tilde{\epsilon}(TQ)} = \tilde{\epsilon}_* (X_{E_L} + Z^v_F)$
\end{enumerate}
\end{theorem}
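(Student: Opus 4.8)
The plan is to establish the four items essentially in the order stated, exploiting at each stage the symmetry property $\widetilde{K}\circ\tilde\iota=-\widetilde{K}$ together with the fibre-derivative intertwining of Proposition~\ref{prop:inversion_commutation}. For item~\ref{thm:decomposition_with_K_f-itm0}, I would argue exactly as in Lemma~\ref{lemma-1}/Proposition~\ref{proposition-X}: since $\mathbf{L}_{\widetilde K}\circ\tilde\iota=-\mathbf{L}_{\widetilde K}$, Corollary~\ref{cor:inversion_Hamiltonian} gives $\widetilde{\mathbf H}\circ\iota=-\widetilde{\mathbf H}$ for the associated Hamiltonian, and then Proposition~\ref{prop:tangency_to_identities} (invoked for $\mathbf{H}_{K}$ in the cotangent picture) transports back along $\mathbb{F}\mathbf{L}^\times_{\widetilde K}$ to give tangency of $X_{\mathbf{L}_{\widetilde K}}$ to $\tilde\epsilon(TQ)$; regularity near $\tilde\epsilon(TQ)$ is what makes $\mathbb{F}\mathbf{L}^\times_{\widetilde K}$ a local diffeomorphism there. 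For item~\ref{thm:decomposition_with_K_f-itm1}, verticality of $Y_{\widetilde K}$ is immediate from the formula $\imath_{Y_{\widetilde K}}\omega_{\mathbf{L}_{\widetilde K}}=\mathcal{L}_{X_\mathbf{L}}\theta_{\widetilde K}-\mathrm{d}\widetilde K$ displayed just before the theorem, once one checks that the right-hand side is a semibasic $1$-form (both $\theta_{\widetilde K}$ and $\mathrm{d}\widetilde K$ pair trivially with vertical vectors after contracting with $\omega_{\mathbf{L}_{\widetilde K}}$); alternatively, since $X_{\mathbf{L}}$ and $X_{\mathbf{L}_{\widetilde K}}$ are both SODEs for $\mathbf{L}$ and $\mathbf{L}_{\widetilde K}$, their difference is vertical. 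The invariance $\tilde\iota_*Y_{\widetilde K}=Y_{\widetilde K}$ then follows by applying $\tilde\iota^*$ to the defining equation: $\tilde\iota^*\omega_{\mathbf{L}_{\widetilde K}}=-\omega_{\mathbf{L}_{\widetilde K}}$ (from $\mathbf{L}_{\widetilde K}\circ\tilde\iota=-\mathbf{L}_{\widetilde K}$ and the naturality of the Poincar\'e--Cartan construction), while $\tilde\iota^*\mathrm{d}\widetilde K=-\mathrm{d}\widetilde K$ and $X_\mathbf{L}$ is $\tilde\iota$-related to itself, so the two signs cancel and $\tilde\iota_*Y_{\widetilde K}$ solves the same equation as $Y_{\widetilde K}$; nondegeneracy of $\omega_{\mathbf{L}_{\widetilde K}}$ forces equality.

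For items~\ref{thm:decomposition_with_K_f-itm2} and \ref{thm:decomposition_with_K_f-itm3} I would switch to the adapted coordinates $(q^i,v^i;Q^i,V^i)$ and compute along $\tilde\epsilon(TQ)$, i.e.\ set $Q^i=q^i$, $V^i=v^i$ at the end. Writing $\widetilde K_F$ out and using Lemma~\ref{lemma4} (the key input: $\tau^j(q,Q)$ vanishes on the diagonal and $\partial\tau^j/\partial q^i=-\delta^j_i$, $\partial\tau^j/\partial Q^i=\delta^j_i$ there) one gets that $\theta_{\widetilde K_F}$, hence $\omega_{\widetilde K_F}=-\mathrm{d}\theta_{\widetilde K_F}$, has all of its $vv$-, $vV$-, $VV$-type fibre components vanishing on $\tilde\epsilon(TQ)$ — this is exactly the computation already carried out in Proposition~\ref{proposition-2}/Proposition~8. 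Since $Y_{\widetilde K_F}$ is vertical (item~\ref{thm:decomposition_with_K_f-itm1}), contracting it into a $2$-form whose ``vertical-vertical'' block vanishes on the diagonal gives \ref{thm:decomposition_with_K_f-itm2}. For \ref{thm:decomposition_with_K_f-itm3}, from $\omega_{\mathbf{L}_{\widetilde K_F}}=\omega_\mathbf{L}-\omega_{\widetilde K_F}$ and item~\ref{thm:decomposition_with_K_f-itm2} we get $\imath_{Y_{\widetilde K_F}}\omega_\mathbf{L}|_{\tilde\epsilon}=\imath_{Y_{\widetilde K_F}}\omega_{\mathbf{L}_{\widetilde K_F}}|_{\tilde\epsilon}=(\mathcal L_{X_\mathbf{L}}\theta_{\widetilde K_F}-\mathrm{d}\widetilde K_F)|_{\tilde\epsilon}$; evaluating this semibasic $1$-form on the diagonal, again via Lemma~\ref{lemma4}, the $\tau$-terms and their $X_\mathbf{L}$-derivatives collapse and one is left precisely with $F_i(q,v)\,\mathrm{d}q^i-F_i(q,v)\,\mathrm{d}Q^i$, which is $(\widetilde{\mathrm{pr}}_1^*F-\widetilde{\mathrm{pr}}_2^*F)|_{\tilde\epsilon(TQ)}$ in the notation of Section~2 ($\mu_F$ pulled back along each projection). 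I expect this Lie-derivative bookkeeping — making sure that the velocity-dependence of $F$ (which is what obstructs identifying $\widetilde{\mathbf H}_{K_F}$ with $\mathbf H_{K_F}$, as the paper warns) does not leave extra terms — to be the main obstacle; the point is that whatever extra terms $\partial F/\partial v$ contributes come multiplied by $\tau^j$ or by off-diagonal derivatives of $\tau$ that vanish on $\tilde\epsilon(TQ)$.

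Finally, item~\ref{thm:decomposition_with_K_f-itm4} is assembled from the previous parts. On $\tilde\epsilon(TQ)$ we have $X_{\mathbf{L}_{\widetilde K_F}}=X_\mathbf{L}-Y_{\widetilde K_F}$, and by Proposition~\ref{proposition-X} the first term restricts to $\tilde\epsilon_*(X_{E_L})$ plus, schematically, a copy of $X_{E_L}$ in the $(Q,V)$-slot that is forced to agree with the $(q,v)$-slot by tangency. Contracting the identity $\imath_{X_{\mathbf{L}_{\widetilde K_F}}}\omega_{\mathbf{L}_{\widetilde K_F}}=\mathrm{d}E_{\mathbf{L}_{\widetilde K_F}}$ against a basis and restricting to the diagonal, and using \ref{thm:decomposition_with_K_f-itm3} to read off $\imath_{Y_{\widetilde K_F}}\omega_\mathbf{L}|_{\tilde\epsilon}=\widetilde{\mathrm{pr}}_1^*F-\widetilde{\mathrm{pr}}_2^*F$, one sees that $-Y_{\widetilde K_F}$ contributes exactly the vertical correction that solves $\imath_{Z^v_F}\omega_L=-\mu_F$ in each copy. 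Since $X_{E_L}+Z^v_F$ is by definition the forced Euler--Lagrange field on $TQ$ and its duplicated version is tangent to $\tilde\epsilon(TQ)$, uniqueness (nondegeneracy of $\omega_L$, equivalently regularity of $L$) gives $X_{E_{\mathbf{L}_{\widetilde K}}}|_{\tilde\epsilon(TQ)}=\tilde\epsilon_*(X_{E_L}+Z^v_F)$, completing the proof. The only subtlety here is checking the sign conventions in $\mathbf{L}_{\widetilde K}=\mathbf{L}-\widetilde K$ and in $X_{\mathbf{L}_{\widetilde K}}=X_\mathbf{L}-Y_{\widetilde K}$ propagate correctly so that the force enters with the $+F_i$ sign of the forced Euler--Lagrange equations rather than its negative.
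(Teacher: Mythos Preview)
Your plan is correct and tracks the paper's own proof closely: tangency via the antisymmetry $\mathbf{L}_{\widetilde K}\circ\tilde\iota=-\mathbf{L}_{\widetilde K}$ and Proposition~\ref{prop:tangency_to_identities}; verticality of $Y_{\widetilde K}$ from the fact that both $X_{\mathbf L}$ and $X_{\mathbf L_{\widetilde K}}$ are SODEs; and then coordinate computations driven by Lemma~\ref{lemma4} for items~\ref{thm:decomposition_with_K_f-itm2}--\ref{thm:decomposition_with_K_f-itm4}, assembling~\ref{thm:decomposition_with_K_f-itm4} from~\ref{thm:decomposition_with_K_f-itm3} and Proposition~\ref{proposition-X} exactly as you describe.

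One small correction on your argument for~\ref{thm:decomposition_with_K_f-itm2}: the implication ``$Y$ vertical and the vertical--vertical block of $\omega_{\widetilde K_F}$ vanishes $\Rightarrow\imath_Y\omega_{\widetilde K_F}=0$'' is false as stated, since contracting a vertical vector also picks up the mixed $(\mathrm d q,\mathrm d v)$-block. The paper sidesteps this by first using Cartan's formula together with $\imath_{Y_{\widetilde K_F}}\theta_{\widetilde K_F}=0$ (verticality against a semibasic form) to get $\imath_{Y_{\widetilde K_F}}\omega_{\widetilde K_F}=-\mathcal L_{Y_{\widetilde K_F}}\theta_{\widetilde K_F}$, which is automatically semibasic; its coefficients are then $\partial_{v^j}\theta^q_i=-\tfrac12\langle\partial^2_{v^jv^i}F,\tau(q,Q)\rangle$ (and the analogous $V$-terms), and these are what vanish on $\tilde\epsilon(TQ)$. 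That is precisely the fibre--fibre second derivative input you cite from Proposition~\ref{proposition-2}, so your ingredients are right---only the bookkeeping of which block of $\omega_{\widetilde K_F}$ carries them needs adjusting.
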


\begin{proof}
\begin{enumerate}[label=\roman*)]
\item \label{proof:decomposition_with_K_f-itm0}
Geometrically, given the regularity in a neighborhood of the identities $\omega_{\mathbf{L}_{\widetilde{K}}}$ is non-degenerate there. As $E_{\mathbf{L}_{\widetilde{K}}} \circ \tilde{\iota} = - E_{\mathbf{L}_{\widetilde{K}}}$, we are in a position to apply proposition \ref{prop:tangency_to_identities}, rendering this equivalent to theorem \ref{theorem-1}.\ref{itm:theorem-1-itm1}. Variationally the proof is a consequence of \ref{propo1}.
\item \label{proof:decomposition_with_K_f-itm1} We know that $X_{\mathbf{L}_{\widetilde{K}}}$ and $X_{\mathbf{L}}$ are second order vector fields, as they solve their respective Euler-Lagrange equations. Thus $S(X_{\mathbf{L}_{\widetilde{K}}}) = S(X_{\mathbf{L}}) = \triangle$. Then necessarily $S(Y_{\widetilde{K}}) = 0$, which means it is vertical. As by proposition \ref{proposition-X} both fields satisfy the symmetry property with respect to $\tilde{\iota}$, $Y_{\widetilde{K}}$ must necessarily satisfy it too.
\item \label{proof:decomposition_with_K_f-itm2} For a general $\widetilde{K}$, using Cartan's magic formula we get that $\imath_{Y_{\widetilde{K}}} \omega_{\widetilde{K}} = \mathrm{d}\left( \imath_{Y_{\widetilde{K}}} \theta_{\widetilde{K}}\right) - \mathcal{L}_{Y_{\widetilde{K}}} \theta_{\widetilde{K}}$, but as we have just shown in \ref{proof:decomposition_with_K_f-itm1}, $Y_{\widetilde{K}}$ is vertical, and so $\imath_{Y_{\widetilde{K}}} \theta_{\widetilde{K}}$ vanishes identically. Thus, $\imath_{Y_{\widetilde{K}}} \omega_{\widetilde{K}} = - \mathcal{L}_{Y_{\widetilde{K}}} \theta_{\widetilde{K}}$. Now, if $Y_{\widetilde{K}} = Y_v^i \partial_{v^i} + Y_V^i \partial_{V^i}$ and $\theta_{\widetilde{K}} = \theta^q_{i} \mathrm{d}q^i + \theta^Q_{i} \mathrm{d}Q^i$, then:
\begin{equation*}
\mathcal{L}_{Y_{\widetilde{K}}} \theta_{\widetilde{K}} = \left(Y_v^j \partial_{v^j} + Y_V^j \partial_{V^j}\right) \theta^q_{i} \mathrm{d}q^i + \left(Y_v^j \partial_{v^j} + Y_V^j \partial_{V^j}\right) \theta^Q_{i} \mathrm{d}Q^i
\end{equation*}

In the special case of $\widetilde{K}_F$ we have that $\partial_{V^j} \theta^q_{i} = \partial_{v^j} \theta^Q_{i}= 0$, which reduces the former expression to:
\begin{equation*}
\mathcal{L}_{Y_{\widetilde{K}_F}} \theta_{\widetilde{K}_F} = Y_v^j \partial_{v^j} \theta^q_{i} \mathrm{d}q^i + Y_V^j \partial_{V^j} \theta^Q_{i} \mathrm{d}Q^i
\end{equation*}

Furthermore, $\partial_{v^j} \theta^q_{i} = -\frac{1}{2} \left\langle \frac{\partial^2 F}{\partial v^j \partial v^i}, \tau(q,Q)\right\rangle$, and similarly with $\partial_{V^j} \theta^Q_{i}$. As $\tau \circ \tilde{\epsilon} = 0$, all these terms vanish at the identities.
\item \label{proof:decomposition_with_K_f-itm3} We know that $\imath_{Y_{\widetilde{K}_F}} \omega_{\mathbf{L}} = \mathcal{L}_{X_{\mathbf{L}_{\widetilde{K}_F}}} \theta_{\widetilde{K}_F} - \mathrm{d}\widetilde{K}_F + \imath_{Y_{\widetilde{K}_F}} \omega_{\widetilde{K}_F}$. We have just proven in \ref{proof:decomposition_with_K_f-itm2} that at the identities the last term on the right-hand side vanishes. Thus we only need to worry about the first and second terms. Proceeding as before, we can expand the Lie derivative with $X_{\mathbf{L}_{\widetilde{K}}} = v^i \partial_{q^i} + V^i \partial_{Q^i} + X_v^i \partial_{v^i} + X_V^i \partial_{V^i}$:
\begin{align*}
\mathcal{L}_{X_{\mathbf{L}_{\widetilde{K}}}} \theta_{\widetilde{K}} &= \left(v^j \partial_{q^j} + V^j \partial_{Q^j} + X_v^j \partial_{v^j} + X_V^j \partial_{V^j}\right) \theta^q_{i} \mathrm{d}q^i\\
&+ \left(v^j \partial_{q^j} + V^j \partial_{Q^j} + X_v^j \partial_{v^j} + X_V^j \partial_{V^j}\right) \theta^Q_{i} \mathrm{d}Q^i \\
&+ \theta^q_{j} \partial_{v^i} v^j \mathrm{d}v^i + \theta^Q_{j} \partial_{V^i} V^j \mathrm{d}V^i
\end{align*}
Again, for $\widetilde{K} = \widetilde{K}_F$ we have that $\partial_{V^j} \theta^q_{i} = \partial_{v^j} \theta^Q_{i}= 0$, so simplifying this expression we get:
\begin{align*}
\mathcal{L}_{X_{\mathbf{L}_{\widetilde{K}_F}}} \theta_{\widetilde{K}_F} &= \left(v^j \partial_{q^j} + V^j \partial_{Q^j} + X_v^j \partial_{v^j}\right) \theta^q_{i} \mathrm{d}q^i \\
&+ \left(v^j \partial_{q^j} + V^j \partial_{Q^j} + X_V^j \partial_{V^j}\right) \theta^Q_{i} \mathrm{d}Q^i\\
&+ \theta^q_{i} \mathrm{d}v^i + \theta^Q_{i} \mathrm{d}V^i
\end{align*}
Under the same argument as in \ref{proof:decomposition_with_K_f-itm2}, the terms with derivatives in the $v$ and $V$ variables vanish at the identities. We can also see that $v^j \partial_{q^j} \theta^q_{i} + V^j \partial_{Q^j} \theta^q_{i}$ and its $\theta^Q$ counterpart must also vanish at the identities because:
\begin{equation*}
\partial_{Q}\tau(q,q) = - \partial_{q}\tau(q,q)
\end{equation*}
Thus $\left.\mathcal{L}_{X_{\mathbf{L}_{\widetilde{K}_F}}} \theta_{\widetilde{K}_F}\right\vert_{\tilde{\epsilon}(TQ)} = \left.\left(\frac{\partial \widetilde{K}_F}{\partial v^i} \mathrm{d}v^i + \frac{\partial \widetilde{K}_F}{\partial V^i} \mathrm{d}V^i\right)\right\vert_{\tilde{\epsilon}(TQ)} = 0$. These also coincide with the $v$ and $V$ components of $\mathrm{d}\widetilde{K}$, so we only need to check what happens with the $q$ and $Q$ components.
\begin{align*}
\mathrm{d}\widetilde{K}_F &= \frac{1}{2} \left(\left\langle F(Q,V), \frac{\partial \tau}{\partial q^i}(Q,q)\right\rangle - \left\langle \frac{\partial F}{\partial q^i}(q,v), \tau(q,Q)\right\rangle\right.\\
&- \left.\left\langle F(q,v), \frac{\partial \tau}{\partial q^i}(q,Q)\right\rangle\right) \mathrm{d}q^i\\
&+ \frac{1}{2} \left(\left\langle \frac{\partial F}{\partial Q^i}(Q,V), \tau(Q,q)\right\rangle + \left\langle F(Q,V), \frac{\partial \tau}{\partial Q^i}(Q,q)\right\rangle\right.\\
&- \left.\left\langle F(q,v), \frac{\partial \tau}{\partial Q^i}(q,Q)\right\rangle\right) \mathrm{d}Q^i + ...
\end{align*}

At the identities all terms with a bare $\tau$ vanish, and using the properties of its derivatives, the remaining terms add up together forming $F_i(q,v) \mathrm{d}q^i - F_i(q,v) \mathrm{d}Q^i $.
\item \label{proof:decomposition_with_K_f-itm4} Let us develop the left-hand side of \ref{thm:decomposition_with_K_f-itm3}:
\begin{equation*}
\imath_{Y_{\widetilde{K}_F}} \omega_{\mathbf{L}} = -\frac{\partial^2 L}{\partial v^j \partial v^i} Y_v^j \mathrm{d} q^i + \frac{\partial^2 L}{\partial V^j \partial V^i} Y_V^j \mathrm{d} Q^i
\end{equation*}
Restricting to $\tilde{\epsilon}(TQ)$ and using \ref{proof:decomposition_with_K_f-itm3} we get that:
\begin{equation*}
\left.Y_{\widetilde{K}_F}\right\vert_{\tilde{\epsilon}(TQ)} = -W^{ij} F_j \partial_{v^j} - W^{i j} F_j \partial_{V^j} = \tilde{\epsilon}_*(- Z^v_F)
\end{equation*}
where $W^{i j}$ are the entries of the inverse of the Hessian matrix of $L$, as defined in section \ref{sec:forced_systems}.
\end{enumerate}
\end{proof}

Finding the integral curves of $X_{E_L}+Z^v_F$ is equivalent to solving the forced Euler-Lagrange equations
\[
\frac{\mathrm{d}}{\mathrm{d}t} \left(\frac{\partial L}{\partial \dot{q}^i}\right)- \frac{\partial L}{\partial {q}^i}=F_i(q, \dot{q}^i)\; .
\]
Then from theorem \ref{thm:decomposition_with_K_f} we have that this is also equivalent to solving the unforced Lagrangian system derived by duplication given by $\mathbf{L}_{\widetilde{K}_F}: TQ\times TQ\rightarrow {\mathbb R}$ and restricting the dynamics to $\tilde{\epsilon}(TQ)$; in other words, it is equivalent to solving this system's Euler-Lagrange equations
\begin{eqnarray*}
\frac{\mathrm{d}}{\mathrm{d}t} \left(\frac{\partial \mathbf{L}_{\widetilde{K}_F}}{\partial \dot{q}^i}\right)- \frac{\partial \mathbf{L}_{\widetilde{K}_F}}{\partial {q}^i}&=&0\\
\frac{\mathrm{d}}{\mathrm{d}t} \left(\frac{\partial \mathbf{L}_{\widetilde{K}_F}}{\partial \dot{Q}^i}\right)- \frac{\partial \mathbf{L}_{\widetilde{K}_F}}{\partial {Q}^i}&=&0
\end{eqnarray*}
when restricted to $\tilde{\epsilon}(TQ)=\{(v_q, v_q)\in TQ\times TQ\}$. 

After theorem \ref{thm:decomposition_with_K_f} the following result does not add much more, but gives us a better picture of the difference between the Hamiltonian and Lagrangian side and why the dynamics at the identities coincide:
\begin{theorem}
\label{thm:coincident_dynamics}
Let $(L, F)$ and $(H, F^H)$ be a regular forced Lagrangian system and its associated forced Hamiltonian system, and denote by $\widetilde{\mathbf{H}}_{K_F} = E_{\mathbf{L}_{\widetilde{K}_F}} \circ \left(\mathbb{F}\mathbf{L}^{\times}_{\widetilde{K}_F}\right)^{-1}$ and $\mathbf{H}_{K_F}$ the corresponding generalized Hamiltonians. Then their respective Hamiltonian vector fields $X_{\widetilde{\mathbf{H}}_{K_F}}$ and $X_{\mathbf{H}_{K_F}}$ satisfy that $\left.X_{\widetilde{\mathbf{H}}_{K_F}}\right\vert_{\epsilon(T^*Q)} = \left.X_{\mathbf{H}_{K_F}}\right\vert_{\epsilon(T^*Q)}$.
\end{theorem}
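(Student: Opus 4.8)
The plan is to compare the two Hamiltonian vector fields by comparing what they produce when restricted to the identities $\epsilon(T^*Q)$, using Theorem \ref{thm:decomposition_with_K_f} on the Lagrangian side and Theorem \ref{theorem-1} on the Hamiltonian side. First I would recall that $X_{\mathbf{H}_{K_F}}$ satisfies $\left.X_{\mathbf{H}_{K_F}}\right\vert_{\epsilon(T^*Q)} = \epsilon_*(X_H + Y^v_F)$ by Theorem \ref{theorem-1}\ref{itm:theorem-1-itm2}, so it suffices to prove the same identity for $X_{\widetilde{\mathbf{H}}_{K_F}}$, i.e., that the Lagrangian-side duplicated-and-potentialized system, pushed through $\mathbb{F}\mathbf{L}^{\times}_{\widetilde{K}_F}$, produces $\epsilon_*(X_H + Y^v_F)$ on the identities.

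The key step is the transfer between the Lagrangian and Hamiltonian pictures via the fibre derivative $\mathbb{F}\mathbf{L}^{\times}_{\widetilde{K}_F}$, which by the regularity result preceding Theorem \ref{thm:decomposition_with_K_f} is a local diffeomorphism in a neighbourhood of $\tilde\epsilon(TQ)$. As in the proof of Proposition \ref{proposition-X}, one has $\left(\mathbb{F}\mathbf{L}^{\times}_{\widetilde{K}_F}\right)^* \Omega_{Q\times Q} = \omega_{\mathbf{L}_{\widetilde{K}_F}}$ and, by construction of $\widetilde{\mathbf{H}}_{K_F}$, the relation $\widetilde{\mathbf{H}}_{K_F} \circ \mathbb{F}\mathbf{L}^{\times}_{\widetilde{K}_F} = E_{\mathbf{L}_{\widetilde{K}_F}}$; hence $X_{\mathbf{L}_{\widetilde{K}_F}}$ and $X_{\widetilde{\mathbf{H}}_{K_F}}$ are $\mathbb{F}\mathbf{L}^{\times}_{\widetilde{K}_F}$-related. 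By Proposition \ref{proposition-2}-type reasoning (the vanishing of the relevant second derivatives of $K_F$, resp.\ $\widetilde{K}_F$, along the identities) the fibre derivative $\mathbb{F}\mathbf{L}^{\times}_{\widetilde{K}_F}$ agrees with $\mathbb{F}\mathbf{L}^{\times}$ to the order needed, so it maps $\tilde\epsilon(TQ)$ into $\epsilon(T^*Q)$ and, restricted there, intertwines $\tilde\epsilon$ with $\epsilon$ via $\mathbb{F}L$ (as in the commuting diagram following Lemma \ref{lemma-5}). Then Theorem \ref{thm:decomposition_with_K_f}\ref{thm:decomposition_with_K_f-itm4} gives $\left.X_{E_{\mathbf{L}_{\widetilde{K}_F}}}\right\vert_{\tilde\epsilon(TQ)} = \tilde\epsilon_*(X_{E_L} + Z^v_F)$, and pushing this forward through $\mathbb{F}\mathbf{L}^{\times}_{\widetilde{K}_F}$ yields $\left.X_{\widetilde{\mathbf{H}}_{K_F}}\right\vert_{\epsilon(T^*Q)} = \epsilon_*(\mathbb{F}L)_*(X_{E_L} + Z^v_F)$; since $(\mathbb{F}L)_* X_{E_L} = X_H$ and $(\mathbb{F}L)_* Z^v_F = Y^v_F$ by the standard Legendre correspondence between the forced Euler--Lagrange and forced Hamilton vector fields, this equals $\epsilon_*(X_H + Y^v_F)$.

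Putting the two computations together, both $\left.X_{\widetilde{\mathbf{H}}_{K_F}}\right\vert_{\epsilon(T^*Q)}$ and $\left.X_{\mathbf{H}_{K_F}}\right\vert_{\epsilon(T^*Q)}$ equal $\epsilon_*(X_H + Y^v_F)$, which establishes the claim. The main obstacle I expect is the bookkeeping around the fact that $\mathbb{F}\mathbf{L}^{\times}_{\widetilde{K}_F} \neq \mathbb{F}\mathbf{L}^{\times}$ in general (as the paragraph before the proposition emphasizes): one must argue carefully that, even though the two Hamiltonians $\widetilde{\mathbf{H}}_{K_F}$ and $\mathbf{H}_{K_F}$ differ away from the identities, their Hamiltonian vector fields agree \emph{on} $\epsilon(T^*Q)$ — this is exactly where the velocity-independence of the offending correction terms along $\tilde\epsilon(TQ)$ (already exploited in parts \ref{proof:decomposition_with_K_f-itm2}--\ref{proof:decomposition_with_K_f-itm3} of the proof of Theorem \ref{thm:decomposition_with_K_f}, via $\tau\circ\tilde\epsilon = 0$ and $\partial_Q\tau(q,q) = -\partial_q\tau(q,q)$) does the work. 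Alternatively, and perhaps more cleanly, one can bypass the explicit comparison by noting that both vector fields are tangent to the respective identity submanifold and restrict there to the image under $\epsilon$ of the unique forced dynamics on $T^*Q$, so uniqueness of the forced Hamilton vector field forces them to coincide.
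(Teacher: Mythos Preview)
Your argument is correct, but it takes a different route from the paper's. You invoke Theorems~\ref{theorem-1}\ref{itm:theorem-1-itm2} and~\ref{thm:decomposition_with_K_f}\ref{thm:decomposition_with_K_f-itm4} as black boxes, identify each of $X_{\mathbf{H}_{K_F}}$ and $X_{\widetilde{\mathbf{H}}_{K_F}}$ at the identities with $\epsilon_*(X_H+Y^v_F)$, and conclude. The paper instead pulls both Hamiltonian vector fields back to the Lagrangian side: it observes that $\mathbf{H}_{K_F}\circ\mathbb{F}\mathbf{L}^{\times}=E_{\mathbf{L}}+\widetilde{K}_F$, so $X_{\mathbf{H}_{K_F}}$ corresponds (via the \emph{free} Legendre transform $\mathbb{F}\mathbf{L}^{\times}$) to a vector field $\widehat{X}_{\mathbf{L}_{\widetilde{K}_F}}$ solving $\imath_{\widehat{X}}\omega_{\mathbf{L}}=\mathrm{d}(E_{\mathbf{L}}+\widetilde{K}_F)$, while $X_{\widetilde{\mathbf{H}}_{K_F}}$ corresponds (via $\mathbb{F}\mathbf{L}^{\times}_{\widetilde{K}_F}$) to $X_{\mathbf{L}_{\widetilde{K}_F}}$. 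Both are decomposed as $X_{\mathbf{L}}-Y$ and $X_{\mathbf{L}}-\widehat{Y}$, and the paper shows $Y=\widehat{Y}$ on $\tilde\epsilon(TQ)$ by re-using the intermediate vanishing results \ref{thm:decomposition_with_K_f}\ref{thm:decomposition_with_K_f-itm2}--\ref{thm:decomposition_with_K_f-itm3} (rather than the final conclusion \ref{thm:decomposition_with_K_f-itm4}); the coincidence of the two Legendre transforms at the identities then finishes. Your route is more modular and arguably cleaner once Theorems~\ref{theorem-1} and~\ref{thm:decomposition_with_K_f} are in hand; the paper's route has the virtue of making explicit \emph{why} the two dynamics differ off the identities (namely by $\imath_{Y_{\widetilde{K}_F}}\omega_{\widetilde{K}_F}$ and $\mathcal{L}_{X_{\mathbf{L}}}\theta_{\widetilde{K}_F}$), which feeds into the remark immediately following the proof. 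One small point worth tightening in your write-up: the pushforward step only needs that $X_{\mathbf{L}_{\widetilde{K}_F}}$ is tangent to $\tilde\epsilon(TQ)$ and that $\mathbb{F}\mathbf{L}^{\times}_{\widetilde{K}_F}\big|_{\tilde\epsilon(TQ)}=\mathbb{F}\mathbf{L}^{\times}\big|_{\tilde\epsilon(TQ)}$ as maps of submanifolds, not any higher-order agreement of the ambient fibre derivatives --- so your ``to the order needed'' caveat can be replaced by a simple tangency argument.
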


\begin{proof}
Working on the Lagrangian side we know that $\mathbf{H}_{K_F} \circ \mathbb{F}\mathbf{L}^{\times} = E_{\mathbf{L}} + \widetilde{K}_F$, where $\mathbb{F}\mathbf{L}^{\times}$ is the Legendre transformed induced by the free Lagrangian. This means that we have the following two concurrent dynamics:
\begin{align}
\imath_{X_{\mathbf{L}_{\widetilde{K}_F}}} \omega_{\mathbf{L}_{\widetilde{K}_F}} &= \mathrm{d} E_{\mathbf{L}_{\widetilde{K}_F}}\\
\imath_{\widehat{X}_{\mathbf{L}_{\widetilde{K}_F}}} \omega_{\mathbf{L}} &= \mathrm{d} \left( E_{\mathbf{L}} + \widetilde{K}_F\right)
\end{align}

The respective transformed versions of the vector fields $X_{\mathbf{L}_{\widetilde{K}_F}}$ and $\widehat{X}_{\mathbf{L}_{\widetilde{K}_F}}$ are $X_{\widetilde{\mathbf{H}}_{K_F}}$ and $X_{\mathbf{H}_{K_F}}$. Clearly both $X_{\mathbf{L}_{\widetilde{K}_F}}$ and $\widehat{X}_{\mathbf{L}_{\widetilde{K}_F}}$ can be decomposed into $X_{\mathbf{L}} - Y_{\widetilde{K}_F}$ and $X_{\mathbf{L}} - \widehat{Y}_{\widetilde{K}_F}$ respectively. We are left with:
\begin{align}
\imath_{Y_{\widetilde{K}_F}} \omega_{\mathbf{L}_{\widetilde{K}_F}} &= \mathcal{L}_{X_\mathbf{L}} \theta_{\widetilde{K}_F} - \mathrm{d} \widetilde{K}_F\\
\imath_{\widehat{Y}_{\widetilde{K}_F}} \omega_{\mathbf{L}} &= - \mathrm{d} \widetilde{K}_F
\end{align}

As we saw in proposition \ref{thm:decomposition_with_K_f}.\ref{thm:decomposition_with_K_f-itm2}, $\left.\imath_{Y_{\widetilde{K}_F}} \omega_{\widetilde{K}_F}\right\vert_{\tilde{\epsilon}(TQ)} = 0$, so we are left with:
\begin{align}
\imath_{Y_{\widetilde{K}_F}} \omega_{\mathbf{L}} &= \mathcal{L}_{X_\mathbf{L}} \theta_{\widetilde{K}_F} - \mathrm{d} \widetilde{K}_F
\end{align}
and we saw in \ref{thm:decomposition_with_K_f}.\ref{thm:decomposition_with_K_f-itm3} $\left.\mathcal{L}_{X_\mathbf{L}} \theta_{\widetilde{K}_F}\right\vert_{\tilde{\epsilon}(TQ)} = 0$, thus $\left.Y_{\widetilde{K}_F}\right\vert_{\tilde{\epsilon}(TQ)} = \left.\widehat{Y}_{\widetilde{K}_F}\right\vert_{\tilde{\epsilon}(TQ)}$. As both $\mathbb{F}\mathbf{L}^{\times}_{\widetilde{K}_F}$ and $\mathbb{F}\mathbf{L}^{\times}$ coincide at the identities, then so will $X_{\widetilde{\mathbf{H}}_{K_F}}$ and $X_{\mathbf{H}_{K_F}}$, proving our claim.
\end{proof}

The two terms that must vanish at the identities for both dynamics to coincide, $\imath_{Y_{\widetilde{K}_F}} \omega_{\widetilde{K}_F}$ and $\mathcal{L}_{X_\mathbf{L}} \theta_{\widetilde{K}_F}$, amount to the condition that $\mathcal{L}_{X_{\mathbf{L}_{\widetilde{K}_F}}} \theta_{\widetilde{K}_F}$ vanishes at the identities. This is still true for any $\widetilde{K}$ such that $K \circ \mathbb{F}\mathbf{L}^{\times} = \widetilde{K}$.

\section{Variational order for forced discrete Lagrangian systems}
\subsection{Introduction to discrete mechanics}

We will consider $Q\times Q$ as a discrete version of $TQ$ and therefore $Q\times Q\times Q \times Q$ as a discrete analogue of $TQ \times TQ$, see \cite{marsden-west}. Instead of curves on $Q$, the solutions are replaced by sequences of points on $Q$. If we fix some $N\in \mathbb{N}$ then we use the notation 
\begin{equation*}
\mathcal{C}_d(Q)=\left\{ q_d:\left\{ k \right\}_{k=0}^N \longrightarrow Q \right\}
\end{equation*}
for the set of possible solutions, which can be identified with the manifold $Q\times \stackrel{(N+1)}{\cdots} \times Q$.
Define a functional, the discrete action map, on the space of sequences $\mathcal{C}_d(Q)$ by
\begin{equation*}
S_d (q_d)=\sum_{k=0}^{N-1}L_d(q_k,q_{k+1}), \quad q_d\in \mathcal{C}_d(Q) .
\end{equation*}
If we consider variations of $q_d$ with fixed end points $q_0$ and $q_N$ and extremize $S_d$ over $q_1,\ldots,q_{N-1}$, we obtain the discrete Euler-Lagrange equations (DEL)
\begin{equation}
\label{Eq:DEL}
D_1 L_d (q_k,q_{k+1})+D_2L_d(q_{k-1},q_k)=0 \quad \mbox{for all} \quad k=1,\ldots,N-1 \, ,
\end{equation}
where $D_1L_d(q_{k-1},q_k)\in T^*_{q_{k-1}}Q$ and $D_2L_d(q_{k-1},q_k)\in T^*_{q_k}Q$ correspond to 
$\mathrm{d}L_d(q_{k-1},q_k)$ under the identification $T^*_{(q_{k-1},q_k)}(Q\times Q)\cong T^*_{q_{k-1}}Q\times T^*_{q_k}Q$.

If $L_d$ is regular, that is, $D_{12}L_d$ is regular, then we obtain a well defined discrete Lagrangian map
\begin{equation*}
\begin{array}{cccc}
F_{L_d}: & Q\times Q & \longrightarrow & Q \times Q \\
& (q_{k-1},q_k) & \longmapsto & (q_k,q_{k+1}(q_{k-1},q_k)) \, ,
\end{array}
\end{equation*}
where $q_{k+1}$ is the unique solution of~\eqref{Eq:DEL} for the given pair $(q_{k-1},q_k)$. We can further assure that the discrete Lagrangian map is invertible so that it is possible to write $q_{k-1}=q_{k-1}(q_k,q_{k+1})$, see \cite[Theorem 1.5.1]{marsden-west}.

In this setting we can define two discrete Legendre transformations
\begin{equation*}
\mathbb{F}^{+}L_{d},\mathbb{F}^{-}L_{d}:Q\times Q\longrightarrow T^*Q  ,
\end{equation*}
since each projection is equally eligible for the base point. They can be defined as
\begin{eqnarray*}
&& \mathbb{F}^{+}L_{d}(q_{k-1},q_{k})= (q_{k},D_2L_d(q_{k-1},q_{k})) \, ,\\
&& \mathbb{F}^{-}L_{d}(q_{k-1},q_{k}) = (q_{k-1},-D_1L_d(q_{k-1},q_{k})) \, .
\end{eqnarray*}
It holds that $(\mathbb{F}^{+}L_{d})^*\omega_Q=(\mathbb{F}^{-}L_{d})^*\omega_Q=:\Omega_{L_d}$, with local expression
\begin{equation*}
\Omega_{L_d}(q_{k-1},q_k)=\frac{\partial^2 L_d}{\partial q_{k-1}^i \partial q_k^j}\mathrm{d}q_{k-1}^i \wedge \mathrm{d}q_k^j .
\end{equation*}

We can also define the evolution of the discrete system on the Hamiltonian side, $\widetilde{F}_{L_d}:T^*Q \longrightarrow T^*Q$, by any of the formulas
$$
\widetilde{F}_{L_d}=\mathbb{F}^{+}L_{d}\circ (\mathbb{F}^{-}L_{d})^{-1}=\mathbb{F}^{+}L_{d}\circ F_{L_d} \circ (\mathbb{F}^{+}L_{d})^{-1}=\mathbb{F}^{-}L_{d}\circ F_{L_d} \circ (\mathbb{F}^{-}L_{d})^{-1} \, ,
$$
because of the commutativity of the following diagram:
$$
\xymatrix{ 
(q_{k-1},q_k) \ar[rr]^{F_{L_d}} \ar[dr]_{\mathbb{F}^{+}L_{d}} & & (q_{k},q_{k+1}) \ar[ld]_{\mathbb{F}^{-}L_{d}} \ar[rd]_{\mathbb{F}^{+}L_{d}} \ar[rr]^{F_{L_d}} & &  (q_{k+1},q_{k+2}) \ar[ld]_{\mathbb{F}^{-}L_{d}} \\
& (q_k,p_k) \ar[rr]_{\widetilde{F}_{L_d}} & & (q_{k+1},p_{k+1}) &
}
$$
The discrete Hamiltonian map $\widetilde{F}_{L_d}:(T^*Q,\omega_Q) \longrightarrow (T^*Q,\omega_Q)$ is symplectic. Therefore the submanifold 
\begin{eqnarray*}
\left( q_k, p_k, q_{k+1}, p_{k+1} \right) &=& \left( q_k, \mathbb{F}^{-}L_{d}(q_k,q_{k+1}), q_{k+1}, \mathbb{F}^{-}L_{d}(q_{k+1},q_{k+2}) \right) \\
&=& \left( q_k, \mathbb{F}^{+}L_{d}(q_{k-1},q_{k}), q_{k+1}, \mathbb{F}^{+}L_{d}(q_{k},q_{k+1}) \right)
\end{eqnarray*}
is Lagrangian in $(T^*Q\times T^*Q,\Omega_Q)$, where $\Omega_{Q}:=\beta_{T^*Q}^{*}\omega_{Q}-\alpha_{T^*Q}^{*}\omega_{Q}$ is a symplectic form and $\alpha_{T^*Q},\beta_{T^*Q}:T^*Q\times T^*Q \longrightarrow T^*Q$ denote the projections onto the first and second factor respectively.

So far we have taken as the starting point a discrete Lagrangian $L_d : Q\times Q \longrightarrow \mathbb{R}$. However, if we start with a continuous Lagrangian and take an appropriate discrete Lagrangian then the DEL equations become a geometric integrator for the continuous Euler-Lagrange system, known as a variational integrator. 
Hence, given a regular Lagrangian function $L:TQ \longrightarrow \mathbb{R}$, we define a discrete Lagrangian $L_d: Q\times Q \times \mathbb{R} \longrightarrow \mathbb{R}$ as an approximation to the action of the continuous Lagrangian. More precisely, for a regular Lagrangian $L$, and appropriate $h,q_0,q_1$, we can define the exact discrete Lagrangian as
\begin{equation*}
L_d^e(q_0,q_1,h) = \int_0^h L(q_{0,1}(t),\dot{q}_{0,1}(t))\mathrm{d}t,
\end{equation*}
where $q_{0,1}(t)$ is the unique solution of the Euler-Lagrange equations for $L$ satisfying $q_{0,1}(0)=q_0$ and $q_{0,1}(h)=q_1$, see \cite{hartman,MMM3}. 
Then for a sufficiently small $h$, the solutions of the DEL equations for $L_d^e$ lie on the solutions of the Euler-Lagrange equations for $L$, see \cite[Theorem 1.6.4]{marsden-west}.

In practice, $L_d^e(q_0,q_1,h)$ will not be explicitly given. Therefore we will take 
$$
L_d(q_0,q_1,h) \approx L_d^e(q_0,q_1,h)\, ,
$$
using some quadrature rule. We obtain symplectic integrators in this way, see \cite{PatrickCuell}.

Now we recall the result of \cite{PatrickCuell} and 
\cite{marsden-west} for a discrete Lagrangian
$L_d\colon Q\times Q\rightarrow {\mathbb R}$.

\begin{definition}
Let $L_{d}\colon Q\times Q\rightarrow {\mathbb R}$ be a discrete Lagrangian. We say that
$L_{d}$ is a discretization of order $r$ if there exist an
open subset $U_{1}\subset TQ$ with compact closure and
constants $C_1>0$, $h_1>0$ so that

\begin{equation*}%\label{orderlagrangian}
\lVert L_{d}(q(0),q(h))-L_{d}^{e}(q(0),q(h))\rVert\leq C_{1}h^{r+1}
\end{equation*} for all solutions $q(t)$ of the second-order Euler--Lagrange equations with initial conditions $(q_0,\dot{q}_0)\in U_1$ and for all $h\leq h_1$.
\end{definition}

Following \cite{marsden-west, PatrickCuell}, we have the next result about the order of a variational integrator.

\begin{theorem}
If $\widetilde{F}_{L_d}$ is the evolution map of an order $r$
discretization $L_d\colon Q\times Q\rightarrow {\mathbb R}$ of the exact discrete
Lagrangian $L_d^{e}\colon Q\times Q\rightarrow {\mathbb R}$, then
\[\widetilde{F}_{L_d}=\widetilde{F}_{L_{d}^{e}}+\mathcal{O}(h^{r+1}).\]
In other words, $\widetilde{F}_{L_d}$ gives an integrator of order
$r$ for $\widetilde{F}_{L_{d}^{e}}=F_{H}^{h}$.
\end{theorem}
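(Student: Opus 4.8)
The plan is to reduce the statement to a comparison of the two discrete Hamiltonian maps $\widetilde{F}_{L_d}$ and $\widetilde{F}_{L_d^e}$ through the commutative diagram relating the discrete Lagrangian map to the discrete Legendre transforms. Since $\widetilde{F}_{L_d} = \mathbb{F}^+ L_d \circ (\mathbb{F}^- L_d)^{-1}$ (and similarly for $L_d^e$), the key observation is that the discrete Legendre transforms $\mathbb{F}^{\pm} L_d$ are built purely out of the first partial derivatives $D_1 L_d$, $D_2 L_d$. So if $L_d - L_d^e = \mathcal{O}(h^{r+1})$ uniformly on the relevant compact set, I first need to upgrade this to an estimate on the derivatives: $D_i L_d - D_i L_d^e = \mathcal{O}(h^{r+1})$ as well. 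This is not automatic from a $C^0$ bound, so the first real step is to argue that the order-$r$ hypothesis, when applied along the family of exact solutions parametrized by their initial conditions in $U_1$, together with smooth dependence of those solutions on initial data, yields the derivative bound — essentially this is the content of \cite{PatrickCuell}'s careful reformulation, and I would cite it rather than reprove it.

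Granting the derivative estimate, the next step is bookkeeping. One has $\mathbb{F}^- L_d(q_0,q_1) = (q_0, -D_1 L_d(q_0,q_1))$ and $\mathbb{F}^- L_d^e(q_0,q_1) = (q_0, -D_1 L_d^e(q_0,q_1))$, so these two maps differ by $\mathcal{O}(h^{r+1})$ on the compact set; moreover $\mathbb{F}^- L_d^e$ is a diffeomorphism onto its image (regularity of $L_d^e$), so its inverse is Lipschitz on a slightly shrunken neighborhood, and hence $(\mathbb{F}^- L_d)^{-1}$ differs from $(\mathbb{F}^- L_d^e)^{-1}$ by $\mathcal{O}(h^{r+1})$ wherever both are defined. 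Composing with $\mathbb{F}^+ L_d$ (again $\mathcal{O}(h^{r+1})$-close to $\mathbb{F}^+ L_d^e$, and Lipschitz with constants uniform in $h$ on the compact set because its derivatives converge as $h\to 0$ to the identity-like leading behavior), the triangle inequality gives $\widetilde{F}_{L_d} = \widetilde{F}_{L_d^e} + \mathcal{O}(h^{r+1})$. Finally one invokes the identification $\widetilde{F}_{L_d^e} = F_H^h$, the time-$h$ flow of the Hamiltonian vector field, which is \cite[Theorem 1.6.4]{marsden-west}; this is what makes $\widetilde{F}_{L_d}$ an integrator of order $r$ in the usual sense (local error $\mathcal{O}(h^{r+1})$).

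The main obstacle is precisely the passage from the $C^0$ bound on $L_d - L_d^e$ to a bound on the difference of their derivatives, with constants uniform over the compact set $U_1$ and over $h \le h_1$. The subtlety, already emphasized in the introduction of this paper and in \cite{PatrickCuell}, is that naively differentiating an $\mathcal{O}(h^{r+1})$ estimate does not preserve the order; one must instead exploit that the bound holds along \emph{all} exact solutions with initial data in an open set, so that differentiating in the initial conditions (which smoothly parametrize the endpoints $(q(0),q(h))$ for fixed $h$) stays within the hypothesis. Making this rigorous requires a careful choice of coordinates/parametrization near the diagonal — exactly the gap that \cite{marsden-west} left and \cite{PatrickCuell} filled — and in a self-contained write-up I would either reproduce that argument or, more economically, state it as a lemma with attribution and then run the elementary composition estimate above.
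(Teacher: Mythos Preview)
The paper does not supply its own proof of this theorem: it is stated as a quoted result, prefaced by ``Following \cite{marsden-west, PatrickCuell}, we have the next result\ldots'', with no proof environment. So there is nothing to compare against directly.

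That said, your outline is the correct one and is precisely the Patrick--Cuell argument the paper is invoking. You have also correctly located the only nontrivial step: upgrading the $C^0$ estimate $\lVert L_d - L_d^e\rVert = \mathcal{O}(h^{r+1})$ to the same order on $D_1(L_d - L_d^e)$ and $D_2(L_d - L_d^e)$, uniformly on the compact set, by differentiating in the initial-condition parametrization of the exact solutions rather than naively in $(q_0,q_1)$. Once that is in hand, your Lipschitz-composition bookkeeping for $\mathbb{F}^+ L_d \circ (\mathbb{F}^- L_d)^{-1}$ versus $\mathbb{F}^+ L_d^e \circ (\mathbb{F}^- L_d^e)^{-1}$ is routine, and the identification $\widetilde{F}_{L_d^e} = F_H^h$ from \cite[Theorem~1.6.4]{marsden-west} closes the argument. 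Your plan to state the derivative-upgrade as a lemma with attribution to \cite{PatrickCuell} is exactly what the paper itself does implicitly by citation.
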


Note that given a discrete Lagrangian $L_{d}\colon Q\times Q\to\R$ its
order can be calculated by expanding the expressions for
$L_d(q(0),q(h))$ in a Taylor series in $h$
and comparing this to the same expansions for the exact Lagrangian.
If the series agree up to $r$ terms, then the discrete Lagrangian is
of order $r$.

\section{Discrete Lagrangian dynamics obtained by duplication}
We have a regular system defined by $\mathbf{L}_{\widetilde{K}}: TQ\times TQ\rightarrow {\mathbb R}$, now we consider a discretization of this Lagrangian 
\[
\mathbf{L}^d_{\widetilde{K}}: Q\times Q\times Q\times Q\rightarrow {\mathbb R}
\]
such that $\mathbf{L}^d_{\widetilde{K}}=-\mathbf{L}^d_{\widetilde{K}}\circ \tilde{\iota}_d$ where $\tilde{\iota}_d:  Q\times Q\times Q\times Q \rightarrow Q\times Q\times Q\times Q$ is the inversion defined by
\[
\tilde{\iota}_d(q_k, q_{k+1}, Q_k, Q_{k+1})=(Q_k, Q_{k+1}, q_k, q_{k+1})\; .
\]

Additionally define the identity map $\tilde{\epsilon}_d: Q\times Q\rightarrow Q\times Q\times Q\times Q$ by
\[
\tilde{\epsilon}_d (q_k, q_{k+1})=(q_k, q_{k+1}, q_{k}, q_{k+1})\; .
\]

\begin{theorem}
The flow $F_{\mathbf{L}^d_{\widetilde{K}}}:Q\times Q\times Q\times Q\rightarrow Q\times Q\times Q\times Q$ defined by a discrete Lagrangian $\mathbf{L}^d_{\widetilde{K}}: Q\times Q\times Q\times Q\rightarrow {\mathbb R}$ verifying that $\mathbf{L}^d_{\widetilde{K}}=-\mathbf{L}^d_{\widetilde{K}}\circ \tilde{\iota}_d$ restricts to $\tilde{\epsilon}_d(Q\times Q)$, that is, 
\[
F_{\mathbf{L}^d_{\widetilde{K}}}\circ \tilde{\epsilon}_d(Q\times Q)\in \tilde{\epsilon}_d(Q\times Q)\; .
\]
\end{theorem}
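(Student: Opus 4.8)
The plan is to exploit the antisymmetry hypothesis $\mathbf{L}^d_{\widetilde{K}}=-\mathbf{L}^d_{\widetilde{K}}\circ\tilde{\iota}_d$ together with uniqueness of the discrete flow: starting from a point of $\tilde{\epsilon}_d(Q\times Q)$, I would show that the \emph{swapped} image is also a legitimate image of the same point, so that uniqueness of $F_{\mathbf{L}^d_{\widetilde{K}}}$ collapses the two and pins the image to the diagonal. First I would write the discrete action $S_d(\{(q_k,Q_k)\})=\sum_k\mathbf{L}^d_{\widetilde{K}}(q_k,q_{k+1},Q_k,Q_{k+1})$ and record the discrete Euler--Lagrange equations obtained by varying $q_k$ and $Q_k$ independently, namely
\[
D_1\mathbf{L}^d_{\widetilde{K}}(q_k,q_{k+1},Q_k,Q_{k+1})+D_2\mathbf{L}^d_{\widetilde{K}}(q_{k-1},q_k,Q_{k-1},Q_k)=0,
\]
\[
D_3\mathbf{L}^d_{\widetilde{K}}(q_k,q_{k+1},Q_k,Q_{k+1})+D_4\mathbf{L}^d_{\widetilde{K}}(q_{k-1},q_k,Q_{k-1},Q_k)=0,
\]
where $D_i$ is the partial derivative with respect to the $i$-th argument; the flow $F_{\mathbf{L}^d_{\widetilde{K}}}$ sends $(q_{k-1},q_k,Q_{k-1},Q_k)$ to the unique $(q_k,q_{k+1},Q_k,Q_{k+1})$ solving this pair, existence and uniqueness being guaranteed by the regularity of $\mathbf{L}^d_{\widetilde{K}}$.

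Next I would differentiate the identity $\mathbf{L}^d_{\widetilde{K}}(a,b,c,d)=-\mathbf{L}^d_{\widetilde{K}}(c,d,a,b)$ in each of its four arguments, obtaining $D_1\mathbf{L}^d_{\widetilde{K}}(a,b,c,d)=-D_3\mathbf{L}^d_{\widetilde{K}}(c,d,a,b)$, $D_2\mathbf{L}^d_{\widetilde{K}}(a,b,c,d)=-D_4\mathbf{L}^d_{\widetilde{K}}(c,d,a,b)$, and the two companions under $1\leftrightarrow3$, $2\leftrightarrow4$. Evaluating on the diagonal yields in particular $D_4\mathbf{L}^d_{\widetilde{K}}(a,b,a,b)=-D_2\mathbf{L}^d_{\widetilde{K}}(a,b,a,b)$.

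Then comes the core step. Take an input $\tilde{\epsilon}_d(q_{k-1},q_k)=(q_{k-1},q_k,q_{k-1},q_k)$ and let $(q_{k+1},Q_{k+1})$ be its image under $F_{\mathbf{L}^d_{\widetilde{K}}}$, so that the two DEL equations hold with $q_k=Q_k=q_k$, $q_{k-1}=Q_{k-1}$. I would substitute the swapped pair $(Q_{k+1},q_{k+1})$ in place of $(q_{k+1},Q_{k+1})$ and, using the differentiated relations above together with the diagonal identity $D_4\mathbf{L}^d_{\widetilde{K}}(q_{k-1},q_k,q_{k-1},q_k)=-D_2\mathbf{L}^d_{\widetilde{K}}(q_{k-1},q_k,q_{k-1},q_k)$, check that the first DEL equation for the swapped pair is exactly the negative of the second DEL equation for the original pair, and the second for the swap is the negative of the first for the original; since both right-hand sides vanish, the swapped pair also solves the DEL system with the same input. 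Uniqueness of the discrete flow then forces $(q_{k+1},Q_{k+1})=(Q_{k+1},q_{k+1})$, i.e.\ $q_{k+1}=Q_{k+1}$, so $F_{\mathbf{L}^d_{\widetilde{K}}}\big(\tilde{\epsilon}_d(q_{k-1},q_k)\big)=(q_k,q_{k+1},q_k,q_{k+1})\in\tilde{\epsilon}_d(Q\times Q)$.

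The step I expect to require the most care is the uniqueness invoked at the end: one must make sure the discrete flow genuinely selects a single solution of the DEL equations for a given initial condition, which rests on the regularity of $\mathbf{L}^d_{\widetilde{K}}$ (inherited, as in the continuous case, from the regularity of $\mathbf{L}_{\widetilde{K}}$ near $\tilde{\epsilon}(TQ)$). In a neighbourhood of the diagonal the implicit function theorem yields local uniqueness of the solution, and since the swap of a near-diagonal point is again near-diagonal it lies in the same neighbourhood, so the comparison is legitimate; everything else is a routine book-keeping of signs in the differentiated antisymmetry relations.
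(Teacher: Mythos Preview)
Your proposal is correct and follows essentially the same route as the paper. The paper packages the argument as a general proposition (their Proposition~B.1, \textit{propo2}): if a regular discrete Lagrangian satisfies $L_d\circ(\varphi_d\times\varphi_d)=\pm L_d$ for a diffeomorphism $\varphi_d$, then the fixed-point set $M_{\varphi_d}$ is invariant under the discrete flow, because the symmetry sends solutions of the DEL equations to solutions and uniqueness (from regularity) pins a solution starting in $M_{\varphi_d}$ to its own image. Your argument is the explicit instance of this with $\varphi_d$ the swap $(q,Q)\mapsto(Q,q)$: you differentiate the antisymmetry to obtain the relations $D_1\leftrightarrow -D_3$, $D_2\leftrightarrow -D_4$, use them to show that the swapped output solves the same DEL system, and then invoke uniqueness---exactly the same mechanism, only spelled out by hand rather than stated abstractly.
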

\begin{proof}
The proof is a consequence of Proposition \ref{propo2}.
\end{proof}

\section[Variational order for forced Lagrangian systems]{Main result. Variational order for forced Lagrangian systems}

Now, we are in a position to state the main result of this paper. 
\begin{theorem}\label{main-theorem}
Let $(L,F)$ be a forced Lagrangian system. Derive from it the extended regular Lagrangian $\mathbf{L}_{\widetilde{K}_F}: TQ\times TQ\rightarrow {\mathbb R}$ and consider an order $r$ discretization $\mathbf{L}^d_{\widetilde{K}_F}: TQ\times TQ\rightarrow {\mathbb R}$ of the exact Lagrangian 
\[
\mathbf{L}^e_{\widetilde{K}_F} (q_0, Q_0, q_1, Q_1)=\int_0^h\mathbf{L}_{\widetilde{K}_F} (q_{0,1}(t), \dot{q}_{0,1}(t), Q_{0,1}(t), \dot{Q}_{0,1}(t))\; dt
\]
where $t\rightarrow (q_{0,1}(t), \dot{q}_{0,1}(t), Q_{0,1}(t))$ the unique solution of the Euler-Lagrange equations for $\mathbf{L}_{\widetilde{K}_F}$ satisfying 
${q}_{0,1}(0)=q_0, Q_{0,1}(0)=Q_0, {q}_{0,1}(h)=q_1, Q_{0,1}(h)=Q_1$ and satisfying additionally that $\mathbf{L}_{\widetilde{K}_F}^d\circ \tilde{i}_d=-\mathbf{L}_{\widetilde{K}_F}^d$. 
Then, the discrete Euler-Lagrange equations of $\mathbf{L}^d_{\widetilde{K}_F}$ restricted to $\tilde{\epsilon}_d (Q\times Q)$ give us a numerical integrator of order $r$ for the flow of the forced Lagrangian system $(L,F)$. 
\end{theorem}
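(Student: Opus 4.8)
The plan is to combine three ingredients already assembled in the paper: (i) the variational error theorem of Marsden--West and Patrick--Cuell for \emph{free} Lagrangian systems, applied to the duplicated Lagrangian $\mathbf{L}_{\widetilde{K}_F}$; (ii) the fact that the exact and discrete flows of $\mathbf{L}_{\widetilde{K}_F}$ restrict to the identity submanifold $\tilde{\epsilon}_d(Q\times Q)$, which follows from the inversion symmetry $\mathbf{L}^d_{\widetilde{K}_F}\circ\tilde{\iota}_d = -\mathbf{L}^d_{\widetilde{K}_F}$ together with the theorem of the previous section (and, on the continuous side, Theorem \ref{thm:decomposition_with_K_f}); and (iii) the identification of the restricted continuous dynamics with the forced Euler--Lagrange flow, i.e. $\left.X_{E_{\mathbf{L}_{\widetilde{K}_F}}}\right\vert_{\tilde{\epsilon}(TQ)} = \tilde{\epsilon}_*(X_{E_L}+Z^v_F)$, which is exactly Theorem \ref{thm:decomposition_with_K_f}.\ref{thm:decomposition_with_K_f-itm4}.

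First I would apply the known order theorem to the free system $\mathbf{L}_{\widetilde{K}_F}$ on $TQ\times TQ$: since $\mathbf{L}^d_{\widetilde{K}_F}$ is an order $r$ discretization of the exact discrete Lagrangian $\mathbf{L}^e_{\widetilde{K}_F}$, the associated discrete Hamiltonian map $\widetilde{F}_{\mathbf{L}^d_{\widetilde{K}_F}}$ on $T^*Q\times T^*Q$ satisfies $\widetilde{F}_{\mathbf{L}^d_{\widetilde{K}_F}} = \widetilde{F}_{\mathbf{L}^e_{\widetilde{K}_F}} + \mathcal{O}(h^{r+1})$, and $\widetilde{F}_{\mathbf{L}^e_{\widetilde{K}_F}}$ is the time-$h$ flow of $X_{\mathbf{H}_{K_F}}$ (or equivalently $X_{\widetilde{\mathbf{H}}_{K_F}}$; by Theorem \ref{thm:coincident_dynamics} they agree on $\epsilon(T^*Q)$). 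Crucially, one must check that the open set with compact closure in the hypothesis of the order definition can be taken to be a neighborhood of the identity submanifold $\tilde{\epsilon}(TQ)$ inside $TQ\times TQ$, so that all estimates hold there; this is legitimate because regularity of $\mathbf{L}_{\widetilde{K}_F}$ was established only on such a neighborhood (the proposition preceding Theorem \ref{thm:decomposition_with_K_f}), and the error theorem is local in nature.

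Next I would restrict everything to the identity submanifold. Both the exact flow and the order-$r$ flow map $\tilde{\epsilon}_d(Q\times Q)$ into itself: for the discrete flow this is the theorem of the preceding section applied to $\mathbf{L}^d_{\widetilde{K}_F}$, and for the exact flow it follows from Theorem \ref{thm:decomposition_with_K_f}.\ref{thm:decomposition_with_K_f-itm0} (tangency of $X_{E_{\mathbf{L}_{\widetilde{K}_F}}}$ to $\tilde{\epsilon}(TQ)$), which also guarantees that $\mathbf{L}^e_{\widetilde{K}_F}$ restricted to $\tilde{\epsilon}_d(Q\times Q)$ is exactly the exact discrete Lagrangian of the \emph{forced} system $(L,F)$ — this is the content of writing $\mathbf{L}^e_{\widetilde{K}_F}(q_0,q_0,q_1,q_1) = \int_0^h \mathbf{L}_{\widetilde{K}_F}(q_{0,1},\dot q_{0,1},q_{0,1},\dot q_{0,1})\,dt$ and using that along the diagonal the integrand reduces, via $\tau\circ\tilde{\epsilon}=0$, to a quantity whose extremals are the forced Euler--Lagrange trajectories by Theorem \ref{thm:decomposition_with_K_f}.\ref{thm:decomposition_with_K_f-itm4}. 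Since the $\mathcal{O}(h^{r+1})$ estimate on $\widetilde{F}_{\mathbf{L}^d_{\widetilde{K}_F}} - \widetilde{F}_{\mathbf{L}^e_{\widetilde{K}_F}}$ holds in a neighborhood of $\epsilon(T^*Q)$ and both maps preserve $\epsilon(T^*Q)$, the same estimate holds for their restrictions, and the restriction of $\widetilde{F}_{\mathbf{L}^e_{\widetilde{K}_F}}$ is precisely the time-$h$ flow of the forced Hamiltonian system $(H,F^H)$, equivalently (pulling back by the Legendre transform) of the forced Lagrangian system. Hence the restricted discrete Euler--Lagrange map is an order $r$ integrator for $(L,F)$.

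The main obstacle I expect is the interplay between the two identifications near the diagonal: one must verify carefully that restricting the discrete Hamiltonian map $\widetilde{F}_{\mathbf{L}^d_{\widetilde{K}_F}}$ to $\epsilon(T^*Q)$ yields the same thing as first restricting the discrete Lagrangian $\mathbf{L}^d_{\widetilde{K}_F}$ to $\tilde{\epsilon}_d(Q\times Q)$ and then forming \emph{its} discrete Hamiltonian map. This requires checking that the discrete Legendre transforms $\mathbb{F}^\pm\mathbf{L}^d_{\widetilde{K}_F}$ intertwine $\tilde{\epsilon}_d$ and $\epsilon$ appropriately — the analogue at the discrete level of the commuting square relating $\mathbb{F}\mathbf{L}^{\times}_{\widetilde{K}_F}$, $\tilde{\epsilon}$ and $\epsilon$ — which in turn rests on the inversion symmetry of $\mathbf{L}^d_{\widetilde{K}_F}$ (an $\iota_d$-version of Proposition \ref{prop:inversion_commutation}). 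Once this compatibility is in place, the order statement is just transport of the free-system error theorem through these identifications, and no new analytic estimate is needed. One should also note that a velocity-dependent force makes $\mathbb{F}\mathbf{L}^{\times}_{\widetilde{K}_F}\neq\mathbb{F}\mathbf{L}^{\times}$ away from the diagonal, so it is essential that all comparisons are made \emph{on} the identity submanifold, where Theorem \ref{thm:coincident_dynamics} guarantees the two generalized Hamiltonians produce the same dynamics.
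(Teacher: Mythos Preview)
Your proposal is correct and matches the paper's approach exactly: the paper does not supply an explicit proof of this theorem, presenting it instead as the immediate consequence of the three ingredients you identify --- the Patrick--Cuell variational error theorem applied to the free duplicated Lagrangian $\mathbf{L}_{\widetilde{K}_F}$, the invariance of $\tilde{\epsilon}_d(Q\times Q)$ under the discrete flow (the theorem of the preceding section), and the identification $\left.X_{E_{\mathbf{L}_{\widetilde{K}_F}}}\right\vert_{\tilde{\epsilon}(TQ)} = \tilde{\epsilon}_*(X_{E_L}+Z^v_F)$ from Theorem~\ref{thm:decomposition_with_K_f}. Your discussion of the compatibility of the discrete Legendre transforms with $\tilde{\epsilon}_d$ and $\epsilon$ is a legitimate technical detail that the paper leaves entirely implicit.
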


\begin{example}
{\rm 
As an example consider a Lagrangian $L: {\mathbb R}^{2n}\rightarrow {\mathbb R}$: 
\[
L(q, v)=\frac{1}{2} v^T M v-\frac{1}{2}q^T K q\\; ,
\]

and a dissipation force $F(q^i, v^i)=(q^i, -D_{i j} v^j)$

The forced Euler-Lagrange equations are: 
\[
M\ddot{q}(t)+D\dot{q}(t)+Kq(t)=0\; .
\]

We will derive the extended Lagrangian $\mathbf{L}_{\widetilde{K}_{F}}: {\mathbb R}^{4n}\rightarrow {\mathbb R}$. For that, we consider the function: 
\[
\widetilde{K}_{F}(q, v, Q, V)= - \frac{1}{2} D V\cdot (q-Q) + \frac{1}{2} D v\cdot (Q-q) = \left[\frac{D}{2}\left(V+v\right)\right]^T(Q-q)
\]
Then
\begin{align*}
\mathbf{L}_{\widetilde{K}_{F}} (q, v, Q, V) &= L(Q, V) - L(q, v)  - \widetilde{K}_{F}(q, v, Q, V)\\
&=\frac{1}{2} V^T M V-\frac{1}{2}Q^T K Q-\frac{1}{2} v^T M v+\frac{1}{2}q^T K q - \left[\frac{D}{2}\left(V+v\right)\right]^T(Q-q)
\end{align*}
By construction $\mathbf{L}_{\widetilde{K}_{F}}(q, v, Q, V) = -\mathbf{L}_{\widetilde{K}_{F}}(Q, V, q, v)$. 

Let us discretize by using the so-called ``midpoint rule'"
\[
q\approx \frac{q_0+q_1}{2}\, ,\ \dot{q}\approx \frac{q_1-q_0}{h}
\]
which leads to:
\begin{eqnarray*}
\mathbf{L}_{\widetilde{K}_{F}}^d &=&
\frac{h}{2} \left(\frac{Q_{k+1}-Q_k}{h}\right)^T M \left(\frac{Q_{k+1}-Q_k}{h}\right)
-\frac{h}{2}\left(\frac{Q_{k}+Q_{k+1}}{2}\right)^T K \left(\frac{Q_{k}+Q_{k+1}}{2}\right)\\
&&-\frac{h}{2} \left(\frac{q_{k+1}-q_k}{h}\right)^T M \left(\frac{q_{k+1}-q_k}{h}\right)
+\frac{h}{2}\left(\frac{q_{k}+q_{k+1}}{2}\right)^T K \left(\frac{q_{k}+q_{k+1}}{2}\right)\\
&& -\frac{h}{2}\left[D \left(\frac{Q_{k+1}-Q_k}{h}+\frac{q_{k+1} - q_k}{h}\right)\right]^T \left(\frac{Q_{k}+Q_{k+1}}{2}-\frac{q_{k}+q_{k+1}}{2}\right)
\end{eqnarray*}
Observe that 
\[
\mathbf{L}_{\widetilde{K}_{F}}^d(q_k, Q_k, q_{k+1}, Q_{k+1})=-\mathbf{L}_{\widetilde{K}_{F}}^d(Q_k, q_k, Q_{k+1}, q_{k+1})
\]
Therefore, from theorem \ref{main-theorem} this leads to a second order method restricting the discrete Euler-Lagrange equations to 
$\tilde{\epsilon}_d(Q\times Q)$. The resulting equations are not very surprising:
\[
M \left(\frac{Q_{k+2}-2Q_{k+1}+Q_k}{h^2}\right) + D\left(\frac{Q_{k+2}-Q_k}{2h}\right) + K \left(\frac{Q_{k+2}+2Q_{k+1}+Q_k}{4}\right) =0\; .
\]
}
\end{example}

The following results provide a purely variational base for the exact discrete forcing offered by Marsden and West, and show that usual Runge-Kutta type discretization schemes provide the same results as in their article.

\begin{proposition}
\label{prop:exact_discrete_lagrangian_contribution}
The exact discrete Lagrangian defined by $\mathbf{L}^e_{\widetilde{K}_F}(u_q, v_{q'})$ at the identities is equivalent to two copies of the one defined in \cite[eq.(3.2.7)]{marsden-west}.
\end{proposition}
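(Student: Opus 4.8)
The plan is to unravel both sides explicitly and show they agree once restricted to the diagonal $\tilde{\epsilon}_d(Q\times Q)$. First I would write out the left-hand side: by definition
\[
\mathbf{L}^e_{\widetilde{K}_F}(q_0,Q_0,q_1,Q_1)=\int_0^h \mathbf{L}_{\widetilde{K}_F}(q_{0,1}(t),\dot q_{0,1}(t),Q_{0,1}(t),\dot Q_{0,1}(t))\,dt,
\]
where $(q_{0,1},Q_{0,1})$ solves the Euler--Lagrange equations of $\mathbf{L}_{\widetilde{K}_F}$ with the indicated boundary data. Restricting to $\tilde{\epsilon}_d(Q\times Q)$ means setting $Q_0=q_0$ and $Q_1=q_1$; by Theorem \ref{thm:decomposition_with_K_f}\ref{thm:decomposition_with_K_f-itm0} the corresponding extremal stays on $\tilde{\epsilon}(TQ)$, so $Q_{0,1}(t)=q_{0,1}(t)$ for all $t$, and by the same theorem this common curve is precisely the solution of the \emph{forced} Euler--Lagrange equations for $(L,F)$ with endpoints $q_0,q_1$.

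Next I would substitute $\mathbf{L}_{\widetilde{K}_F}=L\circ\widetilde{\mathrm{pr}}_2-L\circ\widetilde{\mathrm{pr}}_1-\widetilde{K}_F$ into the integrand along this diagonal curve. The two copies of $L$ cancel pointwise, while $\widetilde{K}_F(v_q,v_q)$ involves $\tau(q,q)$, which vanishes since $\tau\circ\tilde\epsilon=0$ (used already in the proof of Theorem \ref{thm:decomposition_with_K_f}); hence the naive integrand would vanish. The point is that the correct contribution does not come from evaluating $\widetilde{K}_F$ on the diagonal but from its differential, exactly as in Marsden--West: the forcing term in the exact discrete setting arises as $\int_0^h F(q(t),\dot q(t))\cdot(\text{variation})\,dt$ split into a ``$-$'' and a ``$+$'' piece using $\tau$. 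So the real computation is to expand $\widetilde{K}_F(v_q,V_Q)$ to first order off the diagonal, using Lemma \ref{lemma4} ($T_{(q,q)}\tau=\bigl(\begin{smallmatrix}I&0\\-I&I\end{smallmatrix}\bigr)$) to identify $\tau(q,Q)\approx q-Q$ and $\tau(Q,q)\approx Q-q$ to leading order, and thereby recognize $\widetilde{K}_F$ as the symmetrized discrete forcing $\tfrac12\langle F^-,\cdot\rangle+\tfrac12\langle F^+,\cdot\rangle$ of \cite[eq.(3.2.7)]{marsden-west}. Because the duplicated system carries two sets of variables $(q_0,q_1)$ and $(Q_0,Q_1)$, the restriction to the diagonal produces two identical copies of that expression, one associated to each factor — which is the assertion ``two copies''.

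Concretely the steps are: (1) recall from Theorem \ref{thm:decomposition_with_K_f} that on $\tilde{\epsilon}(TQ)$ the extremal of $\mathbf{L}_{\widetilde{K}_F}$ coincides with the forced extremal of $L$; (2) write $\mathbf{L}^e_{\widetilde{K}_F}\vert_{\tilde\epsilon_d}$ and cancel the $\pm L$ contributions; (3) expand the $\widetilde{K}_F$ contribution, using the structure $\widetilde{K}_F(v_q,V_Q)=\tfrac12\langle F(V_Q),\tau(Q,q)\rangle-\tfrac12\langle F(v_q),\tau(q,Q)\rangle$ and the derivative of $\tau$ at the diagonal, to match it termwise with the discrete forces $f_d^{\pm}$ and hence with \cite[eq.(3.2.7)]{marsden-west}; (4) observe the two-factor structure and conclude. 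The main obstacle I anticipate is bookkeeping in step (3): one must be careful about which argument of $\tau$ is differentiated, about the factor $\tfrac12$ combining correctly with the two halves of the Marsden--West symmetric discretization, and about the sign conventions relating $F$, $\mu_F$, and $Z^v_F$ fixed in Section \ref{sec:forced_systems}; getting these consistent is exactly what makes the two presentations literally coincide rather than merely agree up to convention. A minor additional subtlety is that ``equivalent'' here should be read as: the variational principle obtained by adding $\mathbf{L}^e_{\widetilde{K}_F}$ along the diagonal yields the same discrete variational equations as adding the Marsden--West exact discrete force, so one may alternatively argue at the level of the varied action $\delta\int$ rather than pointwise on the integrand, which sidesteps the vanishing-on-the-diagonal issue entirely.
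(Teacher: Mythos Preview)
Your closing remark --- that ``equivalent'' must be read at the level of the discrete variational equations, so one should differentiate the action with respect to the boundary data and only then restrict to the identities --- is exactly the paper's approach, but your main plan (steps (2)--(3)) does not implement it and would not yield a proof as written. Restricting $\mathbf{L}^e_{\widetilde{K}_F}$ to $\tilde\epsilon_d(Q\times Q)$ as a \emph{function value} and then attempting to recover the forcing by a ``first-order expansion of $\widetilde{K}_F$ off the diagonal'' via $\tau(q,Q)\approx Q-q$ is not the right computation: what is needed is not $\widetilde{K}_F\vert_{\tilde\epsilon}$ (which vanishes) nor a Taylor expansion of the integrand in $Q-q$, but the partial derivatives of the \emph{integral} $\widetilde{K}^e_{F,d}(q_0,Q_0,q_1,Q_1)=\int_0^h\widetilde{K}_F\,dt$ with respect to the boundary data, evaluated afterwards at the identities. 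These are different operations because differentiating in $q_0$ brings in the sensitivity factors $\partial q(t)/\partial q_0$ of the extremal, and those factors are precisely what appear in Marsden--West's definition of $f_d^{e\pm}$; a naive off-diagonal expansion of $\widetilde{K}_F$ never produces them.

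Concretely, the paper writes $D_1\widetilde{K}^e_{F,d}=\int_0^h\bigl[D_1\widetilde{K}_F\cdot\tfrac{\partial q}{\partial q_0}+D_2\widetilde{K}_F\cdot\tfrac{\partial v}{\partial q_0}+D_3\widetilde{K}_F\cdot\tfrac{\partial Q}{\partial q_0}+D_4\widetilde{K}_F\cdot\tfrac{\partial V}{\partial q_0}\bigr]dt$, computes the four $D_i\widetilde{K}_F$ explicitly from the formula for $\widetilde{K}_F$, and \emph{then} restricts to the identities using $\tau(q,q)=0$ and $\tilde\epsilon^*D_2\tau=-\tilde\epsilon^*D_1\tau=\mathrm{id}_{T_qQ}$ (your Lemma~\ref{lemma4}). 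This kills $D_2\widetilde{K}_F,D_4\widetilde{K}_F$ and collapses $D_1\widetilde{K}_F,D_3\widetilde{K}_F$ to $\pm F$, giving $\tilde\epsilon_d^{\,*}D_1\widetilde{K}^e_{F,d}=\int_0^hF(q,\dot q)\cdot\tfrac{\partial q}{\partial q_0}\,dt=f_d^{e-}$ and, from the $q_1$-derivative, $f_d^{e+}$; the $Q_0,Q_1$-derivatives give the same with the opposite sign, whence the ``two copies''. Note also that the $L$-parts do \emph{not} cancel in the relevant sense: as function values on the diagonal they do, but their $q_0,q_1$- and $Q_0,Q_1$-derivatives are independent and give $\pm L_d^e$ trivially, which is why the paper dismisses that part in one line. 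Your step~(1) is correct and is used implicitly to identify the diagonal extremal (and its sensitivities) with those of the forced problem.
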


\begin{proof}
The corresponding parts for $L$ need not be checked as they correspond trivially to those of eq.(3.2.7a) with the adequate change of notation. It remains to show that ${\widetilde{K}_F}$ generates the \emph{exact discrete forces} $f_d^{e+}$, $f_d^{e-}$.

The contribution of some $K$ to the exact discrete Lagrangian is:
\begin{equation}
\widetilde{K}_d^e(q_0,Q_0,q_1,Q_1) = \int_{0}^h \widetilde{K}(q(t),v(t),Q(t),V(t))\mathrm{d}t
\end{equation}
where $t\rightarrow (q(t), Q(t))\in Q\times Q$ is the unique solution for ${\mathbf L}_{\widetilde{K}_F}$ with boundary conditions $q(0)=q_0, Q(0)=Q_0, q(h)=q_1, Q(h)=Q_1$.

In the case where $\widetilde{K} = \widetilde{K}_F$, differentiating $\widetilde{K}_d$ with respect to $q_0$ we get:
\begin{align}
&D_1 {\widetilde{K}_{F,d}}^e(q_0,Q_0,q_1,Q_1)\\
&= \int_{0}^h \left[D_1 {\widetilde{K}_F} \cdot \frac{\partial q(t)}{\partial q_0} + D_2 {\widetilde{K}_F} \cdot \frac{\partial v(t)}{\partial q_0} + D_3 {\widetilde{K}_F} \cdot \frac{\partial Q(t)}{\partial q_0} + D_4 {\widetilde{K}_F} \cdot \frac{\partial V(t)}{\partial q_0}\right]\mathrm{d}t\nonumber
\end{align}
where:
\begin{align*}
D_1 {\widetilde{K}_F} &= \frac{1}{2} \left[\left\langle F(Q,V), D_2 \tau(Q,q)\right\rangle - \left\langle D_1 F(q,v), \tau(q,Q)\right\rangle - \left\langle F(q,v), D_1 \tau(q,Q)\right\rangle \right]\\
D_2 {\widetilde{K}_F} &= - \frac{1}{2} \left\langle D_2 F(q,v), \tau(q,Q)\right\rangle\\
D_3 {\widetilde{K}_F} &= \frac{1}{2} \left[\left\langle D_1 F(Q,V), \tau(Q,q)\right\rangle + \left\langle F(Q,V), D_1 \tau(Q,q)\right\rangle - \left\langle F(q,v), D_2 \tau(q,Q)\right\rangle\right]\\
D_4 {\widetilde{K}_F} &= \frac{1}{2} \left\langle D_2 F(Q,V), \tau(Q,q)\right\rangle
\end{align*}

Similar expressions are found after differentiation with respect to $q_0'$, $q_1$ and $q_1'$. Now, when restricted to the identities, we find that:
\begin{align*}
\tilde{\epsilon}^* D_1 {\widetilde{K}_F} &= \left\langle F(q, v), \mathrm{id}_{T_q Q}\right\rangle\\
\tilde{\epsilon}^* D_2 {\widetilde{K}_F} &= 0\\
\tilde{\epsilon}^* D_3 {\widetilde{K}_F} &= -\left\langle F(q, v), \mathrm{id}_{T_q Q}\right\rangle\\
\tilde{\epsilon}^* D_4 {\widetilde{K}_F} &= 0
\end{align*}
where we used the fact that $\tau(q,q) = 0_q$ and $\tilde{\epsilon}^* D_2 \tau = - \tilde{\epsilon}^* D_1 \tau = - \mathrm{id}_{T_q Q}$. This leads to:
\begin{align}
\tilde{\epsilon}_d^{\,*} D_1 {\widetilde{K}_{F,d}}^e &= - \tilde{\epsilon}_d^{\,*} D_2 {\widetilde{K}_{F,d}}^e = \int_0^h F(q(t),u(t)) \cdot \frac{\partial q(t)}{\partial q_0} \mathrm{d}t = f_d^{e-}\\
\tilde{\epsilon}_d^{\,*} D_3 {\widetilde{K}_{F,d}}^e &= - \tilde{\epsilon}_d^{\,*} D_4 {\widetilde{K}_{F,d}}^e = \int_0^h F(q(t),u(t)) \cdot \frac{\partial q(t)}{\partial q_1} \mathrm{d}t = f_d^{e+}
\end{align}

Putting everything together we find two copies of the forced discrete equations with opposite sign, which is what we set to prove.
\end{proof}

%%%%%%%%%%%%%%%%%%%%%%%%%%%%%%%%%%%%%%%%%%%%%%%%%%%%%%%%%%%%%%%%%%%%%%%%%%%%%%%%%%%%%%%%%%%%%%%%%%

\begin{proposition}
\label{prop:discretisation_contribution}
Let $\gamma^i : \mathbb{R} \times Q \times Q \to TQ$, $i = 1, ..., s$, be differentiable discretisation functions, and let us use for convenience the notation $\tau_Q \circ \gamma^i(t,q_0, q_1) = (f^i(t,q_0,q_1), g^i(t,q_0,q_1))$. Let also $(b_i, c_i)$ be some quadrature coefficients such that the conservative discrete Lagrangian is approximated as:
\begin{equation}
L_d(q_0,q_1) = h \sum_{i=1}^s b_i L \circ \gamma^i(c_i,q_0,q_1).
\end{equation}
Then the contribution of $\widetilde{K}_F$, as defined in Proposition \ref{prop:exact_discrete_lagrangian_contribution}, to the discrete Lagrangian $\mathbf{L}_{\widetilde{K}_F}$ at the identities becomes:
\begin{align}
\tilde{\epsilon}_d^{\,*} \frac{\partial {\widetilde{K}_F^d}}{\partial q_0} &= -h \sum_{i=1}^s b_i \left\langle F \circ \gamma^i(c_i,q_0,q_1), \frac{\partial f^i(c_i,q_0,q_1)}{\partial q_0}\right\rangle\\
\tilde{\epsilon}_d^{\,*} \frac{\partial {\widetilde{K}_F^d}}{\partial Q_0} &= h \sum_{i=1}^s b_i \left\langle F \circ \gamma^i(c_i,q_0,q_1), \frac{\partial f^i(c_i,q_0,q_1)}{\partial q_0}\right\rangle\\
\tilde{\epsilon}_d^{\,*} \frac{\partial {\widetilde{K}_F^d}}{\partial q_1} &= -h \sum_{i=1}^s b_i \left\langle F \circ \gamma^i(c_i,q_0,q_1), \frac{\partial f^i(c_i,q_0,q_1)}{\partial q_1}\right\rangle\\
\tilde{\epsilon}_d^{\,*} \frac{\partial {\widetilde{K}_F^d}}{\partial Q_1} &= h \sum_{i=1}^s b_i \left\langle F \circ \gamma^i(c_i,q_0,q_1), \frac{\partial f^i(c_i,q_0,q_1)}{\partial q_1}\right\rangle
\end{align}
\end{proposition}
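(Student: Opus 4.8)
The plan is to reduce the statement to the chain-rule computation already carried out in the proof of Proposition~\ref{prop:exact_discrete_lagrangian_contribution}, with the exact solution curve replaced by the discretisation functions $\gamma^i$ and the time integral $\int_0^h(\cdot)\,\mathrm{d}t$ replaced by the quadrature $h\sum_{i=1}^s b_i\,(\cdot)(c_i)$. First I would write down the contribution of $\widetilde{K}_F$ to the discrete Lagrangian $\mathbf{L}^d_{\widetilde{K}_F}$: since each of the two curve segments in the duplicated configuration space $Q\times Q$ is discretised by its own copy of the $\gamma^i$ (this is precisely what makes $\mathbf{L}^d_{\widetilde{K}_F}\circ\tilde{\iota}_d=-\mathbf{L}^d_{\widetilde{K}_F}$, as required by Theorem~\ref{main-theorem}), this contribution is, up to the sign fixed by the convention $\mathbf{L}_{\widetilde{K}}=\mathbf{L}-\widetilde{K}$,
\[
\widetilde{K}_F^d(q_0,Q_0,q_1,Q_1)=h\sum_{i=1}^s b_i\,\widetilde{K}_F\!\left(\gamma^i(c_i,q_0,q_1),\,\gamma^i(c_i,Q_0,Q_1)\right),
\]
where $\widetilde{K}_F$ is viewed as a function on $TQ\times TQ$ and, in the coordinates of the statement, $\gamma^i(c_i,q_0,q_1)=(f^i(c_i,q_0,q_1),g^i(c_i,q_0,q_1))$ is the $q$-copy while $\gamma^i(c_i,Q_0,Q_1)$ is the $Q$-copy. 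Observe that $\tilde{\epsilon}_d$ imposes $Q_0=q_0$ and $Q_1=q_1$, so on the image of $\tilde{\epsilon}_d$ the argument of $\widetilde{K}_F$ sits on the diagonal $\tilde{\epsilon}(TQ)\subset TQ\times TQ$ for every node $c_i$.

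Next I would differentiate by the chain rule and restrict. Because the $Q$-copy does not depend on $q_0$, only the first and second slots of $\widetilde{K}_F$ enter $\partial\widetilde{K}_F^d/\partial q_0$, contracted respectively with $\partial f^i/\partial q_0$ and $\partial g^i/\partial q_0$; symmetrically $\partial\widetilde{K}_F^d/\partial Q_0$ picks up only the third and fourth slots, and $\partial/\partial q_1$, $\partial/\partial Q_1$ bring in $\partial f^i/\partial q_1$, $\partial g^i/\partial q_1$ in the analogous manner. On $\tilde{\epsilon}_d(Q\times Q)$ I then invoke the facts already established in the proof of Proposition~\ref{prop:exact_discrete_lagrangian_contribution}: the second and fourth slots of $\mathrm{d}\widetilde{K}_F$ carry a bare $\tau$, which vanishes on the diagonal since $\tau(q,q)=0_q$, so $\tilde{\epsilon}^*D_2\widetilde{K}_F=\tilde{\epsilon}^*D_4\widetilde{K}_F=0$; while $\tilde{\epsilon}^*D_1\widetilde{K}_F=\langle F(\cdot),\mathrm{id}\rangle$ and $\tilde{\epsilon}^*D_3\widetilde{K}_F=-\langle F(\cdot),\mathrm{id}\rangle$, the opposite signs coming from the opposite signs of $\tilde{\epsilon}^*D_1\tau$ and $\tilde{\epsilon}^*D_2\tau$ in Lemma~\ref{lemma4}. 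Hence the velocity-slot terms drop out, $\widetilde{K}_F$ differentiated in its first (resp.\ third) slot at the node $\gamma^i(c_i,q_0,q_1)$ becomes $\pm F\circ\gamma^i(c_i,q_0,q_1)$, and what remains is exactly $\mp h\sum_i b_i\langle F\circ\gamma^i(c_i,q_0,q_1),\partial f^i(c_i,q_0,q_1)/\partial q_0\rangle$ for the $q_0$ and $Q_0$ derivatives, and the same expression with $\partial f^i/\partial q_1$ for the $q_1$ and $Q_1$ derivatives. These are precisely the four displayed identities.

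The only point needing care is the sign bookkeeping: one must keep straight (i) the sign with which $\widetilde{K}_F^d$ enters $\mathbf{L}^d_{\widetilde{K}_F}$ under the convention $\mathbf{L}_{\widetilde{K}}=\mathbf{L}-\widetilde{K}$, and (ii) the opposite signs of $\tilde{\epsilon}^*D_1\tau$ and $\tilde{\epsilon}^*D_2\tau$ from Lemma~\ref{lemma4}, which is what makes the $q_0$-slot and the $Q_0$-slot (and likewise $q_1$ and $Q_1$) contribute with opposite signs and so produces the sign pattern of the four formulas. Everything else is the routine chain-rule expansion already performed for the exact discrete Lagrangian in Proposition~\ref{prop:exact_discrete_lagrangian_contribution}, now with the exact curve replaced by the $\gamma^i$ and $\int_0^h(\cdot)\,\mathrm{d}t$ replaced by the quadrature; no new estimate or geometric input is required.
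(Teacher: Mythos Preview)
Your proposal is correct and follows essentially the same approach as the paper: write $\widetilde{K}_F^d(q_0,Q_0,q_1,Q_1)=h\sum_i b_i\,\widetilde{K}_F(\gamma^i(c_i,q_0,q_1),\gamma^i(c_i,Q_0,Q_1))$, differentiate by the chain rule so that only the $D_1,D_2$ slots (resp.\ $D_3,D_4$) appear in the $q$-derivatives (resp.\ $Q$-derivatives), and then invoke the values $\tilde{\epsilon}^*D_j\widetilde{K}_F$ computed in Proposition~\ref{prop:exact_discrete_lagrangian_contribution} to kill the velocity-slot terms and produce $\pm F$. Your added discussion of why the two copies decouple under $\partial/\partial q_0$ versus $\partial/\partial Q_0$ and of the sign bookkeeping is a welcome clarification, but the argument is the paper's.
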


\begin{proof}
For the contribution of ${\widetilde{K}_{F}}$ to the discrete Lagrangian we have:
\begin{equation}
{\widetilde{K}_F^d}(q_0,Q_0,q_1,Q_1) = h \sum_{i=1}^s b_i \widetilde{K}(\gamma^i(c_i,q_0,q_1),\gamma^i(c_i,Q_0,Q_1))
\end{equation}

Differentiating with respect to $q_0$, $q_0'$, $q_1$ and $q_1'$ we have:
\begin{align}
\frac{\partial {\widetilde{K}_F^d}}{\partial q_0} &= h \sum_{i=1}^s b_i \left[ D_1 K \cdot \frac{\partial f^i(c_i,q_0,q_1)}{\partial q_0} + D_2 K \cdot \frac{\partial g^i(c_i,q_0,q_1)}{\partial q_0}\right]\\
\frac{\partial {\widetilde{K}_F^d}}{\partial Q_0} &= h \sum_{i=1}^s b_i \left[ D_3 K \cdot \frac{\partial f^i(c_i,Q_0,Q_1)}{\partial Q_0} + D_4 K \cdot \frac{\partial g^i(c_i,Q_0,Q_1)}{\partial Q_0}\right]\\
\frac{\partial {\widetilde{K}_F^d}}{\partial q_1} &= h \sum_{i=1}^s b_i \left[ D_1 K \cdot \frac{\partial f^i(c_i,q_0,q_1)}{\partial q_1} + D_2 K \cdot \frac{\partial g^i(c_i,q_0,q_1)}{\partial q_1}\right]\\
\frac{\partial {\widetilde{K}_F^d}}{\partial Q_1} &= h \sum_{i=1}^s b_i \left[ D_3 K \cdot \frac{\partial f^i(c_i,Q_0,Q_1)}{\partial Q_1} + D_4 K \cdot \frac{\partial g^i(c_i,Q_0,Q_1)}{\partial Q_1}\right]
\end{align}
where $D_i K$ are the same as those of proposition \ref{prop:exact_discrete_lagrangian_contribution} with $q^i(t) = f^i(t,q_0,q_1)$, $v^i(t) = g^i(t,q_0,q_1)$,  $\left(Q\right)^i(t) = f^i(t,Q_0,Q_1)$, $V^i(t) = g^i(t,Q_0,Q_1)$. Restriction to the identities proves our claim.
\end{proof}

\begin{example} Let us choose our discretisation to be:
\begin{equation}
L_d^{\alpha}(q_0,q_1) = h L\left((1-\alpha)q_0 + \alpha q_1, \frac{q_1 - q_0}{h}\right)
\end{equation}
as in \cite[example 3.2.2]{marsden-west}. This results in:
\begin{align}
\tilde{\epsilon}_d^{\,*} \frac{\partial {\widetilde{K}_{F,d}}^{\alpha}}{\partial q_0} &= h (1 - \alpha) F\left((1-\alpha)q_0 + \alpha q_1, \frac{q_1 - q_0}{h}\right)\\
\tilde{\epsilon}_d^{\,*} \frac{\partial {\widetilde{K}_{F,d}}^{\alpha}}{\partial Q_0} &= - h (1 - \alpha) F\left((1-\alpha)q_0 + \alpha q_1, \frac{q_1 - q_0}{h}\right)\\
\tilde{\epsilon}_d^{\,*} \frac{\partial {\widetilde{K}_{F,d}}^{\alpha}}{\partial q_1} &= h \alpha F\left((1-\alpha)q_0 + \alpha q_1, \frac{q_1 - q_0}{h}\right)\\
\tilde{\epsilon}_d^{\,*} \frac{\partial {\widetilde{K}_{F,d}}^{\alpha}}{\partial Q_1} &= - h \alpha F\left((1-\alpha)q_0 + \alpha q_1, \frac{q_1 - q_0}{h}\right)
\end{align}
which coincides with their result.
\end{example}

\subsection{Numerical tests}
For our numerical tests we have chosen a well-known system composed of two coupled van der Pol oscillators (cf. \cite[eq.(6.38)]{Scheck}). Remember that a single dimensionless van der Pol oscillator is described by the differential equation:
\begin{equation*}
\ddot{q} - \left(\epsilon - q^2\right) \dot{q} + q = 0
\end{equation*}
where $\epsilon$ is a parameter related to the damping of the system.

The dimensionless system we are going to study can be thought to be composed of two coupled harmonic oscillators with slightly differing natural frequencies under the action of non-linear forcing. Its configuration manifold is $\mathbb{T} \times \mathbb{T} = \mathbb{T}^2$, with velocity phase space $T\mathbb{T}^2$ where we will use local coordinates $(q_1, q_2, v_1, v_2)$, and the Lagrangian describing the non-forced part $L: T\mathbb{T}^2 \to \mathbb{R}$ is:
\begin{equation*}
L = \frac{1}{2}\left(v_1^2 + v_2^2\right) - \frac{1}{2}\left[q_1^2 + \left(1 + \rho\right)q_2^2\right] - \lambda \left(q_1 - q_2\right)^2
\end{equation*}
where $\rho$ accounts for the deviation of $q_2$ from the natural frequency of $q_1$, and $\lambda$ measures the intensity of the coupling between both oscillators. The van der Pol force acting on this system is $F = \left(\epsilon - q_1^2\right) v_1 \mathrm{d} q_1 + \left(\epsilon - q_2^2\right) v_2 \mathrm{d} q_2$. As our configuration space is flat, $\tau(q,Q) = Q - q$, and the generalized potential $K$ is:
\begin{equation*}
\widetilde{K}_F = \frac{1}{2} \sum_{i = 1}^2 \left[(\epsilon - q_i^2) v_i + (\epsilon - Q_i^2) V_i\right] \left(q_i-Q_i\right)
\end{equation*}
Note that for such an $L$ and $\widetilde{K}_F$, at the identities we have that $v_i = p_i$, $i = 1,2$, so they are interchangeable.

\begin{figure}[h!]
  \centering
  \begin{subfigure}[b]{0.45\textwidth}
  	\includegraphics[scale=0.45]{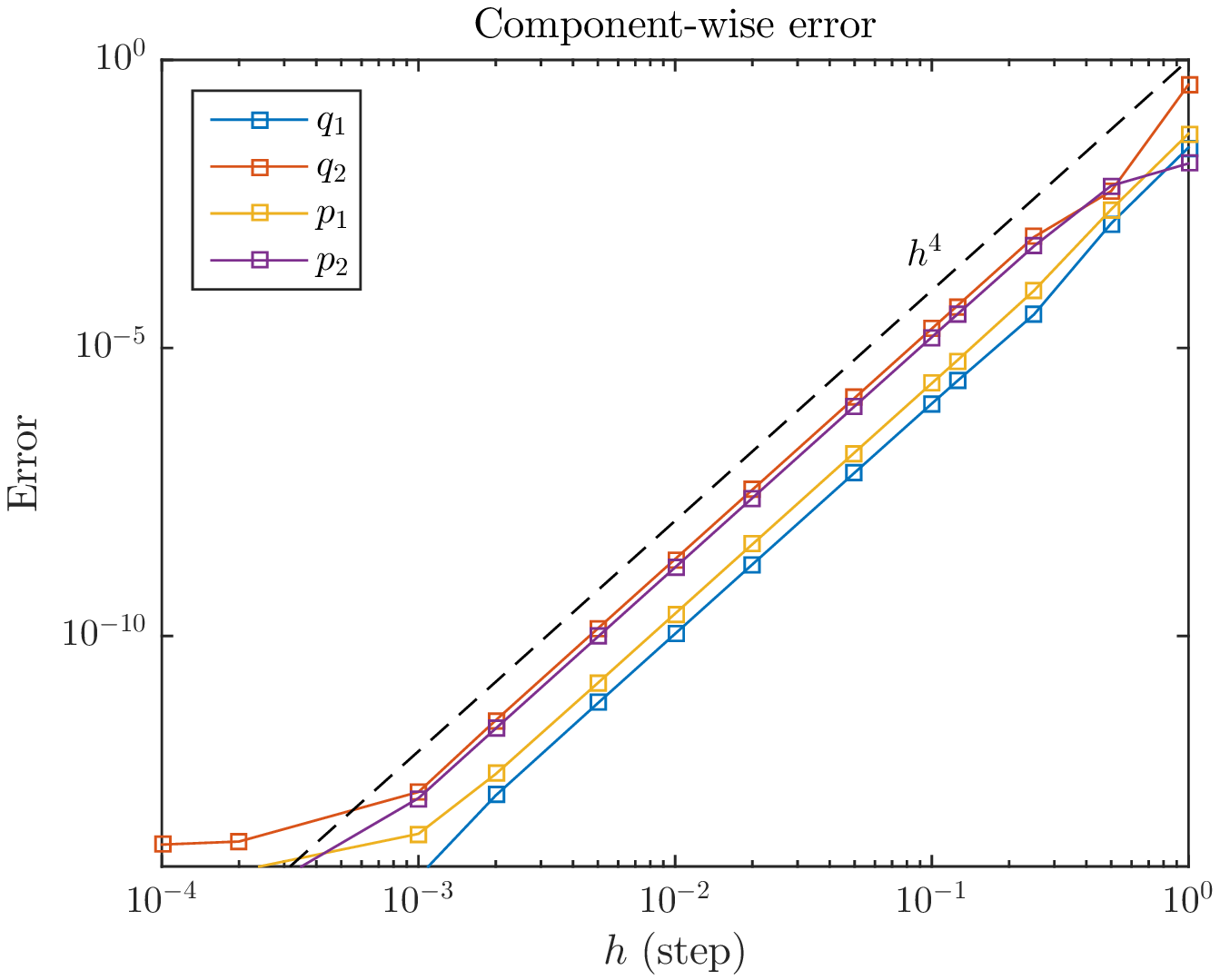}
  \end{subfigure}
  \hspace{0.5cm}
  \begin{subfigure}[b]{0.45\textwidth}
  	\includegraphics[scale=0.45]{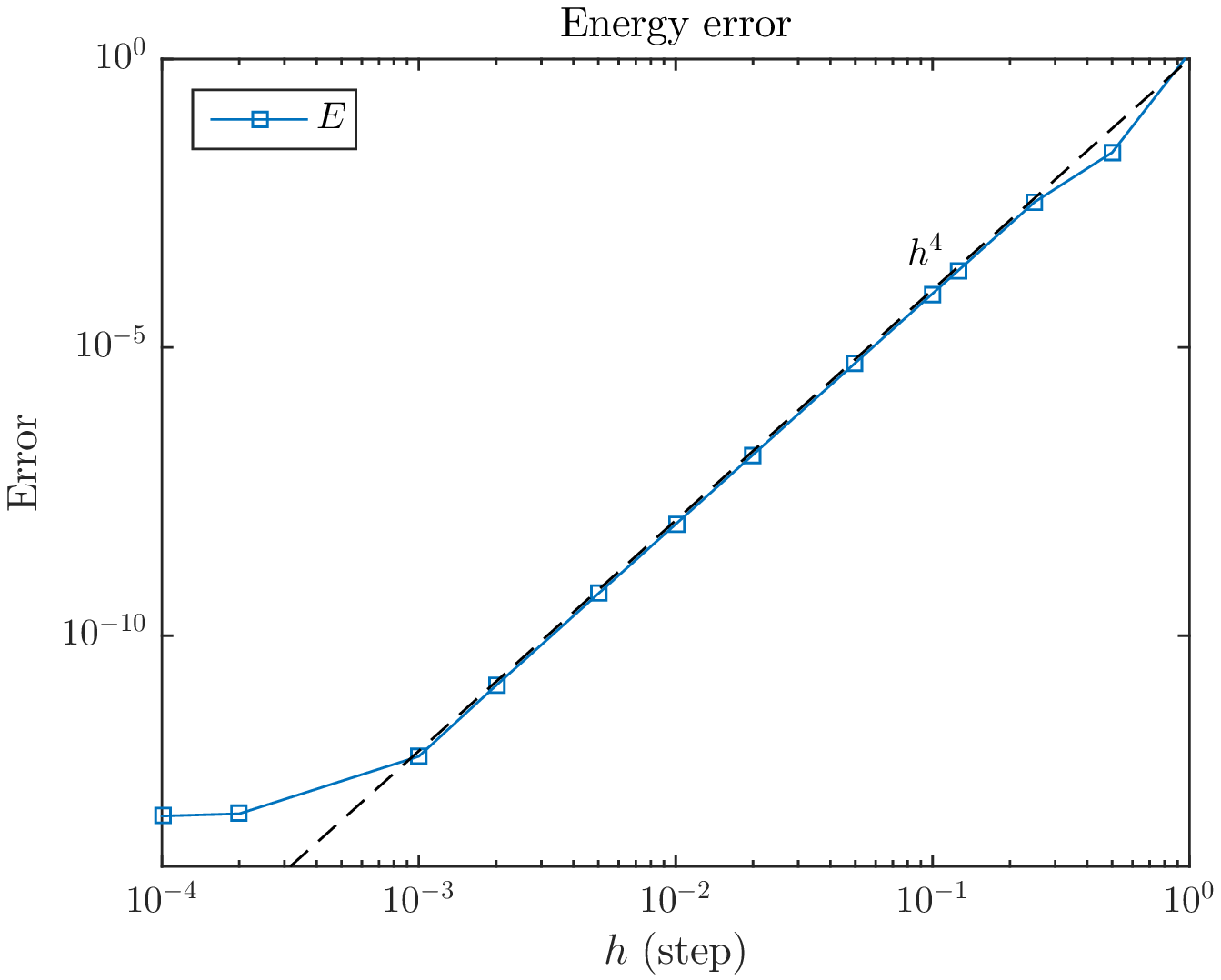}
  \end{subfigure}
  \caption{Numerical error on each separate component (left) and on the energy (right) for the Lobatto 3 method in a single simulation.}
  \label{fig:Lobatto_3}
\end{figure}

\begin{figure}[h!]
  \centering
  \begin{subfigure}[b]{0.45\textwidth}
  	\includegraphics[scale=0.45]{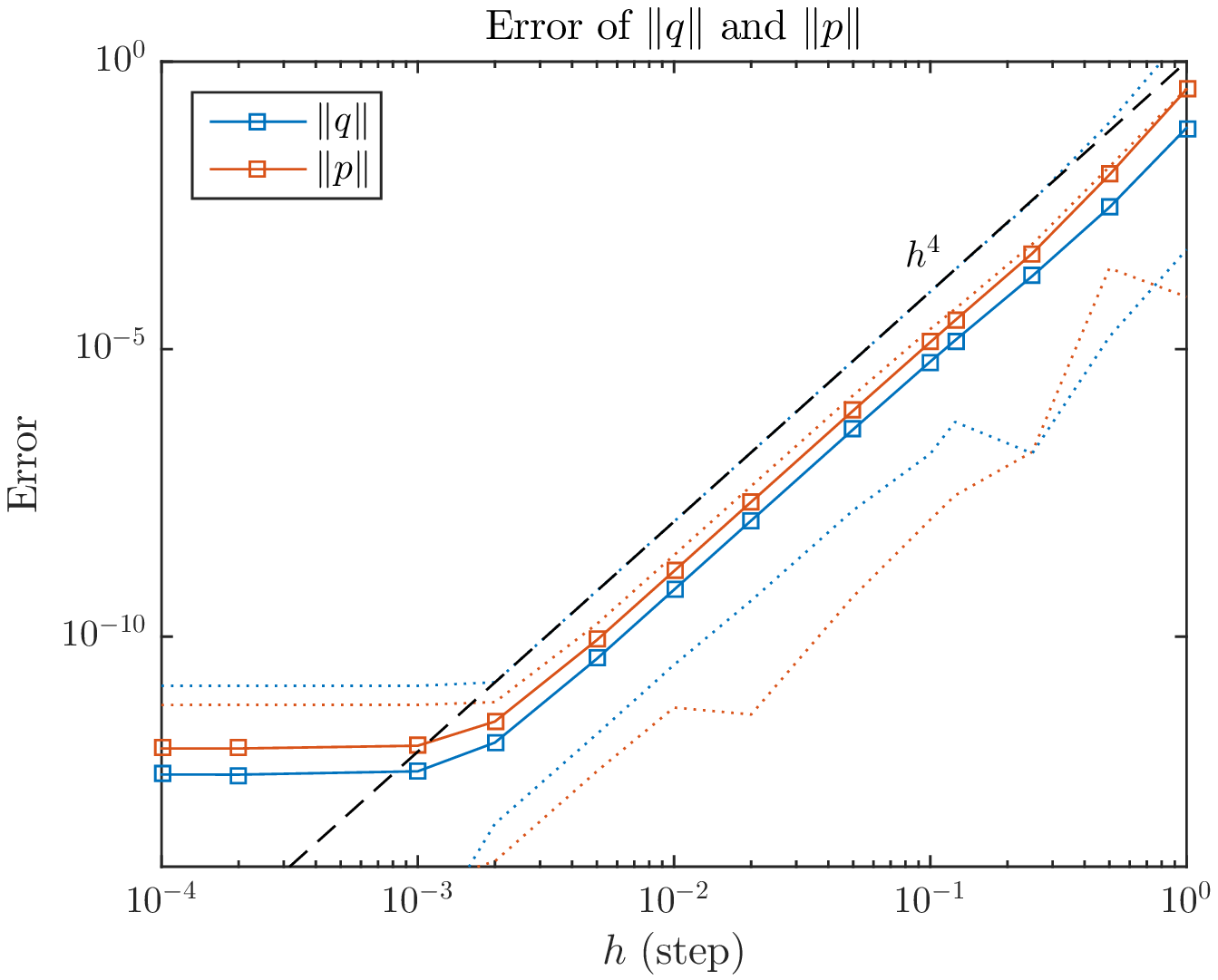}
  \end{subfigure}
  \hspace{0.5cm}
  \begin{subfigure}[b]{0.45\textwidth}
  	\includegraphics[scale=0.45]{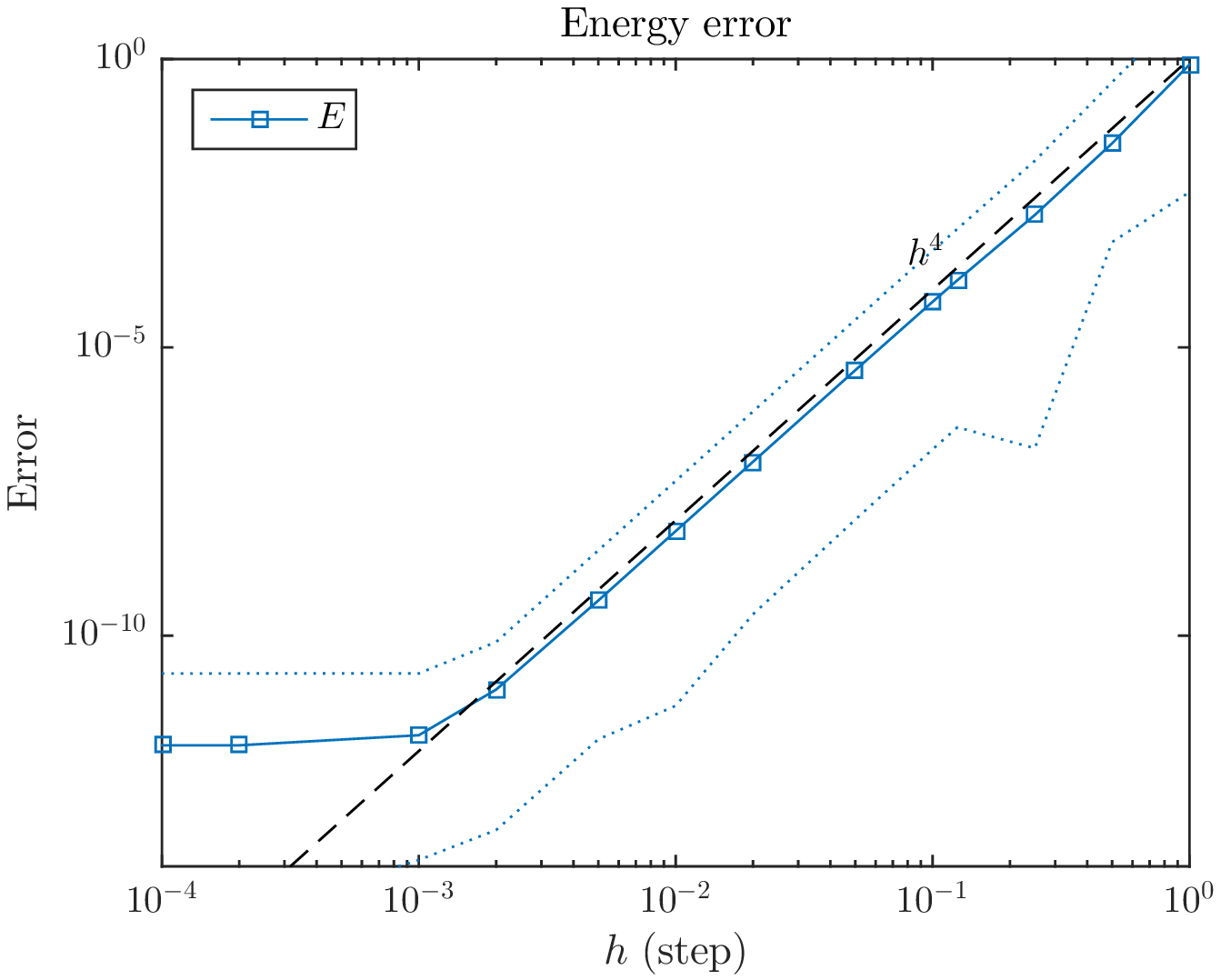}
  \end{subfigure}
  \caption{Numerical error of an ensemble for the Lobatto 3 method. Error in the norm of $q$ and $p$ (left) and on the energy (right). Dotted lines represent maximum and minimum of ensemble.}
  \label{fig:Lobatto_3_averages}
\end{figure}

\begin{figure}[h!]
  \centering
  \includegraphics[scale=0.45]{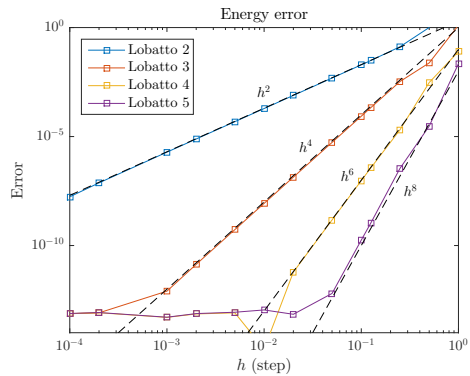}
  \caption{Errors in the energy for the different Lobatto methods.}
  \label{fig:Lobatto_composite}
\end{figure}

We chose to discretize the corresponding generalized Lagrangian, $\mathbf{L}_{\widetilde{K}_F}$, using Lobatto schemes of $2$, $3$, $4$ and $5$ stages. The order of an $s$-stage Lobatto method is $p = 2 s - 2$ so the resulting numerical methods are of order $2$, $4$, $6$ and $8$ respectively. The parameters used for the numerical simulations shown here are $(\epsilon,\rho,\lambda) = (0.5, 0.02, 0.8)$, for no particular reason. The other choices of parameters that were tested showed essentially the same behaviour. We run each simulation for a total of $1$ unit of simulation time with several different choices of step-size $h$ ranging between $5 \cdot 10^{-5}$ and $1$ and measure numerical error as the difference between the final value of the magnitude in study found for a reference simulation and the corresponding one for that we want to study. In this case our reference is taken as the simulation with the finest step-size. The initial values chosen for the results on diagrams \ref{fig:Lobatto_3} and \ref{fig:Lobatto_composite} are $(q_1, q_2, v_1, v_2) = (-1/2, -1/4, 0, 4)$. The results shown on diagram \ref{fig:Lobatto_3_averages} were found as the average from an ensemble of 25 random initial values in the square $[-4,4]\times[-4,4]$ and the pointed lines represent the maximum and minimum values found in said ensemble.

For the resolution of the resulting non-linear system of equations derived for each method, we used MATLAB's \verb|fsolve| with \verb|TolX=1e-12| and \verb|TolX=1e-14|, which explains the flat tails.

Diagram \ref{fig:Lobatto_composite} is a composite plot showing the error in the energy for the different Lobatto methods tested. The results are in agreement with the result of theorem \ref{main-theorem}. We have chosen to show only the energy to avoid clutter, but the same holds for each of the components of the system.

\section{Geometry of the method of duplication of variables}\label{sec:geometry}
In this section, we analyze the interesting geometry related with the proposed method of duplication of variables that is related with some results about symplectic groupoids. Additionally, this section will allow us in the future extend our results for reduced systems, and in general for Hamiltonian systems defined on Lie algebroids. 
We will review the definition of Lie groupoid and its associated Lie algebroid and then we will introduce the notion of symplectic Lie algebroid.

\subsection{Lie groupoids and algebroids}\label{algebroide-grupoide}

First of all, we will recall some definitions related with Lie groupoid and Lie algebroids.
 (for more details, see \cite{mackenzie87}).

\begin{definition}
A groupoid over a set $Q$ is a set $G$ together with the
following structural maps:
\begin{itemize}
\item A pair of maps $\alpha: G \to Q$, the {\sl source}, and
$\beta: G \to Q$, the {\sl target}. Thus, we can think an element $g \in G$ an arrow from $x= \alpha(g)$ to $y = \beta(g)$ in $Q$
$$
\xymatrix{*=0{\stackrel{\bullet}{\mbox{\tiny
 $x=\alpha(g)$}}}{\ar@/^1pc/@<1ex>[rrr]_g}&&&*=0{\stackrel{\bullet}{\mbox{\tiny
$y=\beta(g)$}}}}
$$
The source and target mappings define the set of composable pairs
$$
G_{2}=\{(g,h) \in G \times G / \beta(g)=\alpha(h)\}.
$$
\item A {\sl multiplication} on composable elements $\mu: G_{2} \to G$, denoted simply by $\mu(g,h)=gh$, such that
\begin{itemize}
\item $\alpha(gh)=\alpha(g)$ and $\beta(gh)=\beta(h)$.
\item $g(hk)=(gh)k$.
\end{itemize}
If $g$ is an arrow from $x = \alpha(g)$ to $y = \beta(g) =
\alpha(h)$ and $h$ is an arrow from $y$ to $z = \beta(h)$ then
$gh$ is the composite arrow from $x$ to $z$
$$\xymatrix{*=0{\stackrel{\bullet}{\mbox{\tiny
 $x=\alpha(g)=\alpha(gh)$}}}{\ar@/^2pc/@<2ex>[rrrrrr]_{gh}}{\ar@/^1pc/@<2ex>[rrr]_g}&&&*=0{\stackrel{\bullet}{\mbox{\tiny
 $y=\beta(g)=\alpha(h)$}}}{\ar@/^1pc/@<2ex>[rrr]_h}&&&*=0{\stackrel{\bullet}{\mbox{\tiny
 $z=\beta(h)=\beta(gh)$}}}}$$
\item An identity section $\epsilon: Q \to G$ of $\alpha$ and $\beta$, such that
\begin{itemize}
\item $\epsilon(\alpha(g))g=g$ and $g\epsilon(\beta(g))=g$.
\end{itemize}
\item An {\sl inversion map} $\iota: G \to G$, to be denoted simply by $\iota(g)=g^{-1}$, such that
\begin{itemize}
\item $g^{-1}g=\epsilon(\beta(g))$ and $gg^{-1}=\epsilon(\alpha(g))$.
\end{itemize}
$$\xymatrix{*=0{\stackrel{\bullet}{\mbox{\tiny
 $x=\alpha(g)=\beta(g^{-1})$}}}{\ar@/^1pc/@<2ex>[rrr]_g}&&&*=0{\stackrel{\bullet}{\mbox{\tiny
 $y=\beta(g)=\alpha(g^{-1})$}}}{\ar@/^1pc/@<2ex>[lll]_{g^{-1}}}}$$
\end{itemize}
\end{definition}
A groupoid $G$ over a set $Q$ will be denoted simply by the symbol
$G \rightrightarrows Q$.

The groupoid $G \rightrightarrows Q$ is said to be a {\sl Lie
groupoid} if $G$ and $Q$ are differentiable manifolds and all the structural maps
are differentiable with $\alpha$ and $\beta$ differentiable
submersions. If $G \rightrightarrows M$ is a Lie groupoid then $\mu$
is a submersion, $\epsilon$ is an immersion and $\iota$ is a
diffeomorphism. Moreover, if $x \in M$, $\alpha^{-1}(x)$ (resp.,
$\beta^{-1}(x)$) will be said the $\alpha$-fiber (resp.,
the $\beta$-fiber) of $x$.

Typical examples of Lie groupoids are: the pair or banal groupoid $Q \times Q$ over $Q$ (the example that we have used along all this paper), a Lie group $G$ (as a Lie groupoid over a single point), the Atiyah groupoid $(Q \times Q)/G$ (over $Q/G$) associated with a free and proper action of a Lie group $G$ on $Q$ , etc.

\begin{definition}
 If $G \rightrightarrows Q$ is a Lie groupoid and $g \in G$ then the left-translation by
$g \in G$ and the right-translation by $g$ are the
diffeomorphisms
$$
\begin{array}{lll}
l_{g}: \alpha^{-1}(\beta(g)) \longrightarrow
\alpha^{-1}(\alpha(g))&; \; \;& h \longrightarrow
l_{g}(h) = gh, \\
r_{g}: \beta^{-1}(\alpha(g)) \longrightarrow
\beta^{-1}(\beta(g))&; \; \;& h \longrightarrow r_{g}(h) = hg.
\end{array}
$$
\end{definition}
Note that $l_{g}^{-1} = l_{g^{-1}}$ and $r_{g}^{-1} = r_{g^{-1}}$.

\begin{definition}
A vector field $\xi\in {\mathfrak X}(G)$ is said to be
{\sl left-invariant} (resp., right-invariant) if it is
tangent to the fibers of $\alpha$ (resp., $\beta$) and
${\xi}(gh) = (T_{h}l_{g})({\xi}_{h})$ (resp.,
${\xi}(gh) = (T_{g}r_{h})({\xi }(g)))$, for $(g,h) \in
G_{2}$.
\end{definition}

The infinitesimal version of a Lie groupoid is a Lie algebroid which is defined as follows.

\begin{definition}
A {\sl Lie algebroid} is a real vector bundle $A\rightarrow Q$ equipped with a Lie bracket 
$\lcf \cdot ,\cdot \rcf$ on its sections $\Gamma(A)$ and a bundle map $\rho: A\rightarrow TQ$ called the {\sl anchor map} such that 
the homomorphism of $C^{\infty}(Q)$-modules induced by the anchor map, that we also denote by $\rho: \Gamma(A)\rightarrow {\mathfrak X}(Q)$, verifies 
\[
\lcf X, f Y\rcf =f\lcf X, Y\rcf +\rho(X)(f) Y,
\]
for $X, Y\in \Gamma (A)$ and $f\in C^{\infty}(Q)$.
\end{definition}

With this definition the anchor map $\rho: \Gamma(A)\rightarrow {\mathfrak X}(Q)$
is a Lie algebra homomorphism, where ${\mathfrak X}(Q)$ is endowed with the usual Lie bracket of vector field $[\cdot, \cdot]$. 

\begin{definition}
Given a Lie groupoid $G \rightrightarrows Q$, the {\sl associated Lie algebroid} $AG\rightarrow Q$ is given by its fibers 
 $A_{q}G = V_{\epsilon(q)}\alpha = Ker
(T_{\epsilon(q)}\alpha)$.
There is a bijection between the space $\Gamma (AG)$ and the set of
left-invariant vector fields on $G$. If
$X$ is a section of $\tau: AG \to Q$, the corresponding
left-invariant vector field on $G$ will
be denoted $\lvec{X}$ (resp., $\rvec{X}$), where
\begin{equation}\label{linv}
\lvec{X}(g) = (T_{\epsilon(\beta(g))}l_{g})(X(\beta(g))),
\end{equation}
for $g \in G$. Using the above facts, one may introduce a 
bracket $\lcf\cdot , \cdot\rcf$ on the space of sections $\Gamma(AG)$ and a bundle map $\rho: AG \to TQ$, which
are defined by
\begin{equation}\label{LA}
\lvec{\lcf X, Y\rcf} = [\lvec{X}, \lvec{Y}], \makebox[.3cm]{}
\rho(X)(q) = (T_{\epsilon(x)}\beta)(X(q)),
\end{equation}
for $X, Y \in \Gamma(AG)$ and $q \in Q$. 
\end{definition}
Using that $[\cdot, \cdot]$ induces a Lie algebra structure on the space of vector fields on $G$, it is easy to prove that $\lcf \cdot, \cdot \rcf$ also defines a Lie algebra structure on $\Gamma(AG)$. In addition, it follows that
\[
\lcf X, f Y\rcf = f \lcf X, Y\rcf + \rho(X)(f) Y,
\]
for $X, Y \in \Gamma(AG)$ and $f \in C^{\infty}(Q)$.

One can also stablish a bijection between sections $X\in\Gamma(AG)$ and right invariant vector fields $\rvec{X}\in {\mathfrak X}(G)$ defined by 
\begin{equation}\label{rinv}
\rvec{X}(g) = -(T_{\epsilon(\alpha(g))}r_{g})((T_{\epsilon
(\alpha(g))}\iota)( X(\alpha(g)))),
\end{equation}
which yields the Lie bracket relation
\[
\rvec{\lcf X, Y\rcf} =- [\rvec{X}, \rvec{Y}]\; .
\]

The following proposition will be useful for the results in this paper. 

\begin{proposition}\label{cft}
Let $G \rightrightarrows Q$ be a Lie groupoid and $Z\in \mathfrak{X}(G)$ a vector field invariant by the inversion, that is, 
\[
T_g \iota(Z(g))=Z(g^{-1}), \ \hbox{for all } g\in G\; .
\] 
Then, for all $q\in Q$, 
\[
Z(\epsilon(q))\in T_{\epsilon(q)}\epsilon (Q)\; .
\]
\end{proposition}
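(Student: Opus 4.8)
The plan is to exploit the inversion symmetry of $Z$ to show that its value at an identity $\epsilon(q)$ lies in the tangent space to the identity section. First I would recall that the inversion map $\iota: G \to G$ fixes every identity element, i.e. $\iota(\epsilon(q)) = \epsilon(q)$ for all $q \in Q$, which is immediate from the groupoid axiom $g^{-1}g = \epsilon(\beta(g))$ applied to $g = \epsilon(q)$ (so that $\epsilon(q)$ is its own inverse). Consequently $\epsilon(Q)$ is contained in the fixed-point set of $\iota$. Evaluating the invariance hypothesis $T_g\iota(Z(g)) = Z(g^{-1})$ at $g = \epsilon(q)$ gives
\[
T_{\epsilon(q)}\iota\bigl(Z(\epsilon(q))\bigr) = Z(\epsilon(q))\; ,
\]
so $Z(\epsilon(q))$ is a fixed vector of the linear involution $T_{\epsilon(q)}\iota$ acting on $T_{\epsilon(q)}G$.

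The next step is to identify the $+1$-eigenspace of $T_{\epsilon(q)}\iota$ with $T_{\epsilon(q)}\epsilon(Q)$. Since $\iota$ is a diffeomorphism with $\iota^2 = \mathrm{id}_G$, the tangent map $T_{\epsilon(q)}\iota$ is a linear involution of $T_{\epsilon(q)}G$, hence diagonalizable with eigenvalues $\pm 1$, and $T_{\epsilon(q)}G$ splits as the direct sum of the $+1$ and $-1$ eigenspaces. Because $\iota$ restricts to the identity on $\epsilon(Q)$, every vector tangent to $\epsilon(Q)$ lies in the $+1$-eigenspace, so $T_{\epsilon(q)}\epsilon(Q) \subseteq \ker(T_{\epsilon(q)}\iota - \mathrm{id})$. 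To get equality I would do a dimension count: using the standard decomposition $T_{\epsilon(q)}G \cong T_{\epsilon(q)}\epsilon(Q) \oplus A_qG$ where $A_qG = \ker(T_{\epsilon(q)}\alpha)$ is the algebroid fiber, and recalling from \eqref{rinv} that the inversion acts on the $A_qG$ part essentially by $-\mathrm{id}$ (this is exactly the content of the sign in the definition of right-invariant vector fields, and can be checked in any bisection chart adapted to $\epsilon$), one sees that the $-1$-eigenspace is precisely $A_qG$ and the $+1$-eigenspace is exactly $T_{\epsilon(q)}\epsilon(Q)$. Combining with the previous paragraph yields $Z(\epsilon(q)) \in T_{\epsilon(q)}\epsilon(Q)$, as claimed.

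Alternatively, and perhaps more cleanly, I would avoid the eigenspace bookkeeping by working in local coordinates: choose a chart of $G$ adapted to the identity section in which $\epsilon(Q)$ is the zero set of fiber coordinates and $\iota$ sends the fiber coordinates to their negatives (such a chart exists because $\iota$ is an involution fixing $\epsilon(Q)$ pointwise — this is the local normal form for an involution near its fixed submanifold, Bochner linearization). In such coordinates the identity $T_{\epsilon(q)}\iota(Z(\epsilon(q))) = Z(\epsilon(q))$ forces the fiber components of $Z(\epsilon(q))$ to vanish, which is exactly the statement that $Z(\epsilon(q))$ is tangent to $\epsilon(Q)$.

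The main obstacle is the second step — establishing that the $+1$-eigenspace of $T_{\epsilon(q)}\iota$ is \emph{exactly} $T_{\epsilon(q)}\epsilon(Q)$ rather than merely containing it. The inclusion is trivial; the reverse inclusion requires knowing how $\iota$ behaves transversally to $\epsilon(Q)$, which is where the groupoid structure (specifically the interaction between $\iota$, $\alpha$, $\beta$ and the algebroid) genuinely enters. Both the Bochner-linearization argument and the eigenspace/algebroid-fiber computation handle this, but one must be careful that the adapted chart can be chosen so that $\iota$ is \emph{linear} (not just fixing $\epsilon(Q)$) in those coordinates; this is legitimate by averaging a chart over the $\mathbb{Z}/2$-action generated by $\iota$, and I would include that averaging step explicitly.
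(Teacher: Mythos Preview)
Your overall strategy---reduce to showing that the $+1$-eigenspace of the involution $T_{\epsilon(q)}\iota$ is exactly $T_{\epsilon(q)}\epsilon(Q)$---is sound and is a cleaner conceptual packaging than what the paper does. However, there is a concrete error: the claim that ``the $-1$-eigenspace is precisely $A_qG$'' is false. Already for the pair groupoid $G = Q\times Q$ one has $\iota(x,y)=(y,x)$, so at $(q,q)$ the $-1$-eigenspace is the antidiagonal $\{(a,-a)\}$, whereas $A_qG=\ker T\alpha=\{(0,b)\}$. The two subspaces are complementary to $T_{\epsilon(q)}\epsilon(Q)$ but not equal. The Bochner alternative does not sidestep this: linearizing $\iota$ tells you the local fixed set is the $+1$-eigenspace, but you still need an independent reason why that eigenspace is no larger than $T_{\epsilon(q)}\epsilon(Q)$, and ``$\iota$ sends fiber coordinates to their negatives'' presupposes exactly what is at stake.

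The fix is short and is essentially the computation the paper carries out. Differentiating $\mu(g^{-1}(t),g(t))=\epsilon(\beta(g(t)))$ along an $\alpha$-vertical curve gives, for $v\in A_qG$,
\[
T_{\epsilon(q)}\iota(v)=-v+T_{\epsilon(q)}(\epsilon\circ\beta)(v)\,.
\]
Thus in the splitting $T_{\epsilon(q)}G=T_{\epsilon(q)}\epsilon(Q)\oplus A_qG$ the matrix of $T_{\epsilon(q)}\iota$ is upper-triangular with blocks $I$ and $-I$ on the diagonal, so the $+1$-eigenspace has dimension exactly $\dim Q$ and hence equals $T_{\epsilon(q)}\epsilon(Q)$. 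Your eigenspace argument then finishes the proof. The paper instead uses this same formula directly: it writes $Z(\epsilon(q))$ as its $\epsilon\circ\alpha$-projection plus an $A_qG$-part, applies the formula to the latter, and reads off $2Z(\epsilon(q))\in T_{\epsilon(q)}\epsilon(Q)$ from the hypothesis $T_{\epsilon(q)}\iota(Z(\epsilon(q)))=Z(\epsilon(q))$. So the two routes share the same key identity; yours adds an (illuminating) eigenspace interpretation on top, but you must replace the incorrect identification of the $-1$-eigenspace by the triangular-form dimension count above.
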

\begin{proof}
For all $v_q\in A_qG$ consider an $\alpha$-vertical curve $g: I\rightarrow G$ such that $v=\frac{\mathrm{d}g}{\mathrm{d}t}(0)$. Then
\[
T_{(\epsilon(q), \epsilon(q))}\mu(0_q, v_q)=\frac{\mathrm{d}}{\mathrm{d}t}\mu(\epsilon(q), g(t))\Big|_{t=0}=\frac{\mathrm{d}g}{\mathrm{d}t}(0)=v
\]
Also, for the $\beta$-vertical curve $g^{-1}: I\rightarrow G$ we have 
\[
T_{(\epsilon(q), \epsilon(q))}\mu(T_{\epsilon(q)}\iota(v), 0_q)=\frac{\mathrm{d}}{\mathrm{d}t}\mu(\iota(g(t)), \epsilon(q))\Big|_{t=0}=T_{\epsilon(q)}\iota(v)\; .
\]
Therefore, 
\[
T_{(\epsilon(q), \epsilon(q))}\mu(T_{\epsilon(q)}\iota(v), v)=v+T_{\epsilon(q)}\iota(v)\; .
\]
Since $\mu(g^{-1}(t), g(t))=\epsilon(\beta(g(t)))$, then
\begin{equation}\label{qwe-1}
(T_{\epsilon(q)}\iota)(v)=-v+T_{\epsilon(q)}(\epsilon\circ \beta)(v)\; .
\end{equation}
Using that
\[
Z(\epsilon(q))-T_{\epsilon(q)}(\epsilon\circ \alpha)(Z(\epsilon(q)))\in A_qG\; ,
\]
then from expression (\ref{qwe-1}):
\[
T_{\epsilon(q)}\iota(Z(\epsilon(q)))-T_{\epsilon(q)}(\epsilon\circ \alpha)(Z(\epsilon(q)))+Z(\epsilon(q))\in T_{\epsilon(q)}\epsilon(Q)\; .
\]
but from the hypothesis about $Z$, we have that
\[
T_{\epsilon(q)}\iota(Z(\epsilon(q)))=Z(\epsilon(q))\; .
\]
Therefore
\[
Z(\epsilon(q))\in T_{\epsilon(q)}\epsilon(Q)\; .
\]
\end{proof}
\subsection{Symplectic groupoid}

\begin{definition}
A {\sl symplectic groupoid} is a Lie groupoid $G \rightrightarrows Q$, such that
\begin{enumerate}
\item $(G, \omega)$ is a symplectic manifold,
\item the graph of $\mu: G_2\rightarrow G$ is a Lagrangian submanifold of $G^-\times G^-\times G$, where $G^-=(G, -\omega)$ has the negative symplectic structure.
\end{enumerate}
\end{definition}

If $G \rightrightarrows Q$ is a symplectic groupoid with symplectic form $\omega$ on $G$ then one may prove that $(\ker T_g\alpha)^{\omega}=\ker T_g\beta$, for $g\in G$, where 
\[
(\ker T_g\alpha)^{\omega}=\{v\in T_gG\, |\, \omega(v, u)=0, \hbox{for all  } u\in \ker T_g\alpha\}\; ,
\]
that is, the symplectic orthogonal of $\ker T_g\alpha$. Moreover there exists a unique Poisson structure on $Q$ such that $\alpha: G\rightarrow Q$ (respectively, 
$\beta: G\rightarrow Q$) is a Poisson (respectively, anti-Poisson) morphism. Moreover, the inversion map is an antisymplectomorphism, that is, 
$
\iota^*\omega=-\omega\; .
$

\begin{example}
{\rm
Let $G \rightrightarrows Q$ be a Lie groupoid, an let $A^*G\rightarrow Q$ be the dual vector bundle of the associated Lie algebroid $AG$. Then, the {\sl cotangent groupoid} $T^*G \rightrightarrows A^*G$ is a symplectic groupoid with the canonical symplectic form $\omega_G$. Given $\mu\in T_g^*G$, the source and target mappings are defined 
\[
\langle \tilde{\alpha}(\mu), X(\alpha(g))\rangle=\langle \mu, \rvec{X}(g)\rangle , \quad \langle \tilde{\beta}(\mu), X(\beta(g))\rangle=\langle \mu, \lvec{X}(g)\rangle\; .
\]
for all $X\in \Gamma(AG)$.
(See \cite{coste,marle1,MaMaSt} for more details and the definition of the remaining structure maps of this Lie groupoid).
}
\end{example}

\begin{proposition}
\label{prop:tangency_to_identities}
Let $G \rightrightarrows Q$ be a symplectic groupoid with symplectic form $\omega_G$ symplectic groupoid and $E: G\rightarrow {\mathbb R}$ a function such that $E\circ \iota=-E$. Then, the corresponding Hamiltonian vector field $X_E$
\[
\imath_{X_E}\omega=\mathrm{d}E\; ,
\]
verifies that $X_E(\epsilon(q))\in T_{\epsilon(q)}\epsilon (Q)$ for all $q\in Q$.
\end{proposition}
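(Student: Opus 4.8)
The plan is to combine the general groupoid fact of Proposition \ref{cft} with the symplectic-groupoid structure on $G$, using the hypothesis $E \circ \iota = -E$ to verify that the Hamiltonian vector field $X_E$ is invariant under the inversion $\iota$. Once we know $T_g\iota(X_E(g)) = X_E(g^{-1})$ for all $g \in G$, Proposition \ref{cft} applies verbatim and gives $X_E(\epsilon(q)) \in T_{\epsilon(q)}\epsilon(Q)$, which is exactly the conclusion.

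So the core of the argument is to show that $\iota$-antisymplecticity of $\omega_G$ together with $\iota$-antiinvariance of $E$ forces $\iota$-invariance of $X_E$. First I would recall from the properties of a symplectic groupoid (stated in the excerpt just before Proposition \ref{prop:tangency_to_identities}) that $\iota^*\omega_G = -\omega_G$. Then I would compute: for any $g \in G$ and any tangent vector $w \in T_{g^{-1}}G$, write $w = T_g\iota(u)$ for a unique $u \in T_gG$ since $\iota$ is a diffeomorphism with $\iota \circ \iota = \mathrm{id}$. Using the defining equation $\imath_{X_E}\omega_G = \mathrm{d}E$ at the point $g^{-1}$, evaluate
\[
\omega_G\big(X_E(g^{-1}), T_g\iota(u)\big) = \mathrm{d}E_{g^{-1}}\big(T_g\iota(u)\big) = \mathrm{d}(E\circ\iota)_g(u) = -\mathrm{d}E_g(u) = -\omega_G\big(X_E(g), u\big).
\]
On the other hand, pulling back by $\iota$ and using $\iota^*\omega_G = -\omega_G$,
\[
\omega_G\big(T_g\iota(X_E(g)), T_g\iota(u)\big) = (\iota^*\omega_G)_g\big(X_E(g), u\big) = -\omega_G\big(X_E(g), u\big).
\]
Comparing the two displays, $\omega_G\big(X_E(g^{-1}) - T_g\iota(X_E(g)), T_g\iota(u)\big) = 0$ for all $u$, and since $T_g\iota$ is an isomorphism and $\omega_G$ is nondegenerate, we conclude $X_E(g^{-1}) = T_g\iota(X_E(g))$, i.e. $X_E$ is invariant under the inversion in the sense required by Proposition \ref{cft}.

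Finally, invoking Proposition \ref{cft} with $Z = X_E$ yields $X_E(\epsilon(q)) \in T_{\epsilon(q)}\epsilon(Q)$ for all $q \in Q$, completing the proof. I do not expect a genuine obstacle here; the only point requiring a little care is the bookkeeping with base points (the identity $E \circ \iota = -E$ must be differentiated as an identity of functions, so that $\mathrm{d}E_{g^{-1}} \circ T_g\iota = -\mathrm{d}E_g$ at the correct points) and the fact that $T_g\iota : T_gG \to T_{g^{-1}}G$ is an isomorphism, which is immediate since $\iota$ is a diffeomorphism. All other ingredients — nondegeneracy of $\omega_G$, $\iota^*\omega_G = -\omega_G$, and Proposition \ref{cft} — are available from the text.
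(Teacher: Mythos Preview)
Your proposal is correct and follows essentially the same approach as the paper: use $\iota^*\omega_G=-\omega_G$ together with $E\circ\iota=-E$ to show that $X_E$ is $\iota$-invariant, then invoke Proposition~\ref{cft}. The paper writes the computation globally with vector fields $Y\in\mathfrak{X}(G)$ rather than pointwise with $u\in T_gG$, but the two computations are the same in content.
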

\begin{proof}
Since $\iota^*\omega=-\omega$ then for all $Y\in {\mathfrak X}(G)$
\[
\langle \mathrm{d}E, Y\rangle=\omega(X_E, Y)=-\iota^*\omega(X_E, Y)=-\omega( \iota_*X_E, \iota_*(Y))\; .
\]
but form the hypothesis we have that
\[
\langle \mathrm{d}E, Y\rangle=-\langle \mathrm{d}(E\circ \iota), Y\rangle=-\langle \mathrm{d}E, \iota_*(Y)\rangle=-\omega(X_E, \iota_*(Y))\; .
\]
Therefore, from Proposition \ref{cft} we deduce that $X_E (\epsilon (q))\in T_{\epsilon(q)}\epsilon(Q)$.
\end{proof}

\section{Conclusions and future work}

The main contributions of this paper are:
\begin{enumerate}
\item Using the duplication of variables we have rigorously deduced the error analysis of forced lagrangian systems in terms of variational error.
\item  With this technique it is possible to design efficient numerical methods for forced lagrangian systems  using previous results for variational integrators including high-order methods.
 \item We have completely elucidated the geometry of  the procedure of duplication of variables connecting with the concept of symplectic groupoid.
 \item Moreover, we have separately study the hamiltonian and lagrangian formalism and stablished the relation between both. 
 \end{enumerate}
 
 Future work includes:
\begin{enumerate}
\item Let $G \rightrightarrows M$ be a Lie groupoid with structural maps
\[
\alpha, \beta: G \to M, \; \; \epsilon: M \to G, \; \; \iota: G \to G,
\; \; \mu: G_{2} \to G.
\]
Suppose that $\tau: AG \to M$ is the Lie algebroid of $G$ and that
${\mathcal P}^{\tau}G $ is the prolongation of $G$ over the
fibration $\tau: AG \to M$, that is,
\[
{\mathcal P}^{\tau}G = AG \mbox{$\;$}_\tau \kern-3pt\times_\alpha
G \mbox{$\;$}_\beta \kern-3pt\times_\tau AG.
\]
It is clear that ${\mathcal P}^{\tau}G$ is equipped with a  Lie groupoid structure over $AG$ but also the vector bundle $\pi^\tau:{\mathcal
P}^{\tau}G \to G$ admits an integrable Lie algebroid structure (see \cite{MMM06Grupoides}). This is the corresponding version for reduced systems of the space with ``duplicated variables". We will check in a future paper how apply this methodology to analyze the order 
of geometric integrators for forced systems using  this method. 
\item An example interesting will be the case of Euler-Poincar\'e equations and double bracket dissipation \cite{BKMR}. We will study the possibility of constructing geometric integrators preserving some of the geometric structure. For instance, it can be checked that in particular examples,  the energy is dissipated but the angular momentum is not. 
\end{enumerate}

\section*{Acknowledgements}
The authors have been partially supported by Ministerio de Econom\'ia, Industria y Competitividad (MINEICO, Spain) under grants MTM 2013-42870-P, MTM 2015-64166-C2-2P, MTM2016-76702-P and ``Severo Ochoa Programme for Centres of Excellence'' in R\&D (SEV-2015-0554). R. Sato has been financially supported by a FPI scholarship from MINEICO, Spain. 

%This work has been partially supported by grants MTM 2012-34478, MTM 2013-42 870-P, MTM 2015-64166- C2-2P (MINECO), the European project IRSES-project ``Geomech-246981'' and the ICMAT Severo Ochoa projects SEV-2011-0087 and SEV-2015-0554 (MINECO). JCM acknowledges the partial support from IUMA (University of Zaragoza) for a stay at the University of Zaragoza where this work was started.

\begin{appendices}
\section[Invariant sets from discrete symmetries of a Lagrangian]{Invariant sets defined from discrete symmetries of a Lagrangian function}

The Lagrangian $\mathbf{L}_K: TQ\times TQ\rightarrow {\mathbb R}$ verifies from construction that $\mathbf{L}_K\circ \tilde{\iota}=-\mathbf{L}_K$ 
where 
 $\tilde{\iota}: TQ\times TQ\rightarrow TQ\times TQ$ is the inversion mapping given by $
\tilde{\iota}(u_q, v_{q'})=(v_{q'}, u_{q}).
$
Then the identity set $\tilde{\epsilon}(TQ)$ is invariant set for the flow of $X_{E_{\mathbf{L}_K}}$ as a consequence of the following Proposition. Observe that $\tilde{\iota}$ is the tangent lift of the map 
$Q\times Q\rightarrow Q\times Q$ given by $(q_k, q_{k+1})\mapsto (q_{k+1}, q_k)$.
\begin{proposition}\label{propo1}
Let $L: TQ\rightarrow {\mathbb R}$ be a regular Lagrangian and $\varphi: Q\rightarrow Q$ a diffeomorphism verifying that 
$L\circ \varphi_*=\pm L$. Denote by $M_\varphi=\{v_q\in Q\; |\; \varphi_*(v_q)=v_q\}$. Then $M_\varphi$ is an invariant set for any solution of the Euler-Lagrange equations. 
\end{proposition}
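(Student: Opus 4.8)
The plan is to show that the tangent lift $\varphi_{*}\colon TQ\to TQ$ is, up to the sign $\varepsilon:=\pm1$ appearing in $L\circ\varphi_{*}=\varepsilon L$, a symmetry of the entire package $(\omega_L,E_L)$, and then to invoke the elementary fact that the fixed-point set of a diffeomorphism commuting with a flow is preserved by that flow. Since $M_\varphi$ sits inside $TQ$ and (by regularity of $L$) $X_{E_L}$ is a second-order vector field whose integral curves are exactly the velocity curves $t\mapsto\dot c(t)$ of solutions $c$ of the Euler--Lagrange equations, ``$M_\varphi$ invariant for every solution'' is literally the statement ``$M_\varphi$ is invariant under the flow of $X_{E_L}$''.

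First I would record the naturality of the canonical structures of $TQ$ under the tangent lift. Because $\varphi_{*}$ is a fibrewise-linear bundle map over $\varphi$, its own tangent map satisfies $T\varphi_{*}\circ S=S\circ T\varphi_{*}$ and $T\varphi_{*}\circ\Delta=\Delta\circ\varphi_{*}$. Combining this with $L\circ\varphi_{*}=\varepsilon L$ and the definitions $\theta_L=S^{*}(\mathrm{d}L)$, $E_L=\Delta(L)-L$ gives $(\varphi_{*})^{*}\theta_L=\theta_{L\circ\varphi_{*}}=\varepsilon\,\theta_L$, hence $(\varphi_{*})^{*}\omega_L=-\mathrm{d}\big((\varphi_{*})^{*}\theta_L\big)=\varepsilon\,\omega_L$, and $E_L\circ\varphi_{*}=\Delta(L\circ\varphi_{*})-L\circ\varphi_{*}=\varepsilon E_L$. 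In local coordinates all three identities are immediate from $\theta_L=\frac{\partial L}{\partial v^{i}}\,\mathrm{d}q^{i}$ and $E_L=v^{i}\frac{\partial L}{\partial v^{i}}-L$.

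Next I would deduce that $X_{E_L}$ is $\varphi_{*}$-related to itself. Pulling back $\imath_{X_{E_L}}\omega_L=\mathrm{d}E_L$ by $\varphi_{*}$ and using the three identities above together with $\varepsilon^{-1}=\varepsilon$, one finds $\imath_{(\varphi_{*})^{*}X_{E_L}}\omega_L=\mathrm{d}E_L=\imath_{X_{E_L}}\omega_L$; since $L$ is regular, $\omega_L$ is symplectic, so non-degeneracy forces $(\varphi_{*})^{*}X_{E_L}=X_{E_L}$, i.e. $T\varphi_{*}\circ X_{E_L}=X_{E_L}\circ\varphi_{*}$. Consequently the (local) flow $\Phi_t$ of $X_{E_L}$ commutes with $\varphi_{*}$, and therefore for $v_q\in M_\varphi$ we get $\varphi_{*}(\Phi_t(v_q))=\Phi_t(\varphi_{*}(v_q))=\Phi_t(v_q)$, so $\Phi_t(v_q)\in M_\varphi$; this is exactly the asserted invariance.

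I do not expect a genuine obstacle; the only points requiring care are propagating the sign $\varepsilon$ consistently through $\theta_L$, $\omega_L$ and $E_L$, and using regularity twice --- once for the non-degeneracy of $\omega_L$ that makes $X_{E_L}$ the \emph{unique} solution of $\imath_{X}\omega_L=\mathrm{d}E_L$, and once for $X_{E_L}$ being a bona fide second-order field. As a cross-check one can argue purely variationally: if $c$ solves the Euler--Lagrange equations then so does $\varphi\circ c$, since its velocity is $\varphi_{*}\circ\dot c$ and $\int (L\circ\varphi_{*})(\dot c(t))\,\mathrm{d}t=\varepsilon\int L(c(t),\dot c(t))\,\mathrm{d}t$ has the same critical curves; and if $\dot c(0)\in M_\varphi$ then $\varphi(c(0))=c(0)$ and $\varphi_{*}(\dot c(0))=\dot c(0)$, so $\varphi\circ c$ and $c$ have the same initial data and coincide by uniqueness, giving $\varphi_{*}(\dot c(t))=\dot c(t)$ for all $t$.
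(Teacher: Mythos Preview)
Your proof is correct. Your primary argument is geometric --- you show that $\varphi_*$ scales $\theta_L$, $\omega_L$ and $E_L$ by the same sign $\varepsilon$, deduce $(\varphi_*)_*X_{E_L}=X_{E_L}$ from non-degeneracy, and conclude that the fixed-point set of $\varphi_*$ is invariant under the flow. The paper instead argues purely at the level of the action functional: it shows $\mathrm{d}\mathcal{J}_L(c)(X)=\pm\,\mathrm{d}\mathcal{J}_L(\varphi\circ c)(\varphi_*\circ X)$, so $c$ is critical iff $\varphi\circ c$ is, and then invokes uniqueness of solutions from the same initial data in $M_\varphi$ to identify $c$ with $\varphi\circ c$. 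Your ``cross-check'' paragraph is essentially the paper's proof. Your symplectic route fits naturally with the $(\omega_L,E_L)$ machinery the paper has already set up and yields the stronger pointwise statement $T\varphi_*\circ X_{E_L}=X_{E_L}\circ\varphi_*$; the paper's variational route is more elementary, avoids checking naturality of $S$ and $\Delta$, and transports verbatim to the discrete setting (as the paper indeed does in the companion Proposition~\ref{propo2}).
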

\begin{proof}
Consider the action sum 
\[
\begin{array}{rrcl}
 {\mathcal J}_L:& C^2(q_0, q_1, [a,b])&\longrightarrow& {\mathbb R}\\
   & c&\longmapsto& \displaystyle \int_a^b L(c(t), \dot{c}(t))\; \mathrm{d}t
\end{array}
\]
where 
$C^2(q_0, q_1, [a,b])=\{c: [a, b]\rightarrow Q\; |\; c \hbox{ is} \ C^2, c(a)=q_0, c(b)=q_1\}$.
Its tangent space is
\[
T_c C^2(q_0, q_1, [a,b])=\{ X: [0, T]\rightarrow TQ\; |\;  X \hbox{ is}\ C^1, \tau_Q(X(t))=c(t), X(a)=0, X(b)=0\} 
\]
Typically, we express $X\in T_c C^2(q_0, q_1, [a,b])$ as the tangent vector to a curve at $s=0$ in $C^2(q_0, q_1, [a,b])$,
\[
s\in (-\epsilon, \epsilon)\subset {\mathbb R}\longmapsto c_s\in  C^2(q_0, q_1, [a,b])
\]
with $c_0=c$. That is, 
\[
X=\frac{\mathrm{d} c_s}{\mathrm{d}s}\Big|_{s=0} \]

We then have
\begin{eqnarray*}
\mathrm{d}{\mathcal J}_L(c)(X)&=&\frac{\mathrm{d}}{\mathrm{d}s}\Big|_{s=0} ({\mathcal J}_L(c_s))\\
&=&\int_a^b \frac{\mathrm{d}}{\mathrm{d}s}\Big|_{s=0} L(c_s(t), \dot{c}_s(t))\; dt
\end{eqnarray*}
Using that 
$L\circ \varphi=\pm L$
\begin{eqnarray*}
\mathrm{d}{\mathcal J}_L(c)(X)&=&\pm\int_0^T \frac{\mathrm{d}}{\mathrm{d}s}\Big|_{s=0} L(\varphi(c_s(t)), \frac{\mathrm{d}}{\mathrm{d}t}(\varphi(c_s(t))))\; dt\\
&=&\pm\frac{\mathrm{d}}{\mathrm{d}s}\Big|_{s=0} ({\mathcal J}_L(\varphi\circ c_s))\\
&=&\pm \mathrm{d}{\mathcal J}_L(\varphi \circ c)(\varphi_*\circ X)
\end{eqnarray*}
Observe that $\varphi_*\circ X\in T_{\varphi\circ c} C^2(\varphi(q_0), \varphi(q_1), [a,b])$. Since $\varphi$ is a diffeomorphism then 
$c$ is a critical point of ${\mathcal J}_L$ iff $\varphi\circ c$ is a critical point of ${\mathcal J}_L$.

Now, if $c: [a, b]\rightarrow Q$ is a solution of the Euler-Lagrange equations ($d{\mathcal J}_L(c)=0$) with $\dot{c}(a)\in M_\varphi$ then also $\varphi\circ c$ is a solution of the Euler-Lagrange equations. Observe that 
\[
\frac{\mathrm{d} (\varphi\circ c)}{\mathrm{d}t}(a)=\varphi_*(\dot{c}(a))=\dot{c}(a)
\]
Then, $\varphi\circ c$ and $c$ are solutions of the Euler-Lagrange equations with the same initial conditions. Since $L$ is regular, it implies that $c=\varphi\circ c$. 
and $\dot{c}(t)\in M_\varphi$, for all $t$.

\end{proof}

\section[Invariant sets from discrete symmetries of a discrete Lagrangian]{Invariant sets defined from discrete symmetries of a discrete Lagrangian function}

The discrete Lagrangian $\mathbf{L}^d_K: TQ\times TQ\rightarrow {\mathbb R}$ verifies that $\mathbf{L}^d_K\circ \tilde{\i}_d=-\mathbf{L}^d_K$ . The following proposition gives the required result as a particular case.

\begin{proposition}\label{propo2}
Let $L_d: Q\times Q\rightarrow {\mathbb R}$ be a regular discrete Lagrangian and $\varphi_d: Q\rightarrow Q$ a diffeomorphism verifying that 
$L_d\circ (\varphi_d\times \varphi_d)=\pm L_d$. Denote by $M_{\varphi_d}=\{(q, q')\in Q\; |\; \ \varphi_d(q)=q, \varphi_d(q')=q'\}$. Then $F_{\varphi_d}$ is an invariant set for any solution of the discrete Euler-Lagrange equations. 
\end{proposition}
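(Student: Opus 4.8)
The plan is to transcribe the proof of Proposition~\ref{propo1} to the discrete setting, with the discrete action sum $S_d(q_d)=\sum_{k=0}^{N-1}L_d(q_k,q_{k+1})$ in place of the action integral and the discrete Euler--Lagrange equations \eqref{Eq:DEL} in place of the Euler--Lagrange equations. Write $\varepsilon=\pm 1$ for the sign in $L_d\circ(\varphi_d\times\varphi_d)=\varepsilon L_d$, and for a sequence $q_d=(q_k)_{k=0}^N$ set $\varphi_d\cdot q_d:=(\varphi_d(q_k))_{k=0}^N$, which again lies in $\mathcal{C}_d(Q)$. The first step is to show that $q_d$ solves \eqref{Eq:DEL} if and only if $\varphi_d\cdot q_d$ does. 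One can argue exactly as in Proposition~\ref{propo1} — for a variation $X=(X_k)$ of $q_d$ vanishing at the endpoints, $((T_{q_k}\varphi_d)X_k)$ is such a variation of $\varphi_d\cdot q_d$, the symmetry of $L_d$ gives $\mathrm{d}S_d(\varphi_d\cdot q_d)(\varphi_d\cdot X)=\varepsilon\,\mathrm{d}S_d(q_d)(X)$, and since $\varphi_d$ is a diffeomorphism these $\varphi_d\cdot X$ exhaust all admissible variations — or, more directly, differentiate the identity $L_d(\varphi_d(q_{k-1}),\varphi_d(q_k))=\varepsilon L_d(q_{k-1},q_k)$ in each slot to get $(T_{q_k}\varphi_d)^*D_2L_d(\varphi_d(q_{k-1}),\varphi_d(q_k))=\varepsilon D_2L_d(q_{k-1},q_k)$ and $(T_{q_k}\varphi_d)^*D_1L_d(\varphi_d(q_k),\varphi_d(q_{k+1}))=\varepsilon D_1L_d(q_k,q_{k+1})$, so that $(T_{q_k}\varphi_d)^*$ sends the left-hand side of \eqref{Eq:DEL} for $\varphi_d\cdot q_d$ to $\varepsilon$ times that for $q_d$; being an isomorphism it forces the two to vanish simultaneously.

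The second step uses regularity. Since $D_{12}L_d$ is regular, \eqref{Eq:DEL} determines the discrete Lagrangian map $F_{L_d}(q_{k-1},q_k)=(q_k,q_{k+1})$ with $q_{k+1}$ \emph{uniquely} characterised by $D_1L_d(q_k,q_{k+1})+D_2L_d(q_{k-1},q_k)=0$. Suppose $(q_{k-1},q_k)\in M_{\varphi_d}$, i.e. $\varphi_d(q_{k-1})=q_{k-1}$ and $\varphi_d(q_k)=q_k$, and let $(q_k,q_{k+1})=F_{L_d}(q_{k-1},q_k)$. By the first step the triple $(\varphi_d(q_{k-1}),\varphi_d(q_k),\varphi_d(q_{k+1}))=(q_{k-1},q_k,\varphi_d(q_{k+1}))$ again satisfies \eqref{Eq:DEL}, so $\varphi_d(q_{k+1})$ and $q_{k+1}$ both solve $D_1L_d(q_k,\,\cdot\,)=-D_2L_d(q_{k-1},q_k)$; uniqueness gives $\varphi_d(q_{k+1})=q_{k+1}$, hence $(q_k,q_{k+1})\in M_{\varphi_d}$. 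Thus $F_{L_d}(M_{\varphi_d})\subseteq M_{\varphi_d}$, and an induction on $k$ shows that every discrete trajectory issuing from a point of $M_{\varphi_d}$ remains in $M_{\varphi_d}$, which is the asserted invariance (with $M_{\varphi_d}$ the set called $F_{\varphi_d}$ in the statement).

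I do not expect a genuine obstacle: this is a discrete copy of Proposition~\ref{propo1}. The only delicate point is bookkeeping — the covectors $D_1L_d$, $D_2L_d$ read off at $q_d$ and at $\varphi_d\cdot q_d$ live in different cotangent spaces and must be transported by the pullback $(T\varphi_d)^*$ before being compared; it is exactly the diffeomorphism hypothesis on $\varphi_d$, making this pullback an isomorphism, that turns the ``$\varepsilon$-equivariance'' of \eqref{Eq:DEL} into genuine preservation of the solution set and lets uniqueness close the argument. Specialising $\varphi_d$ to the swap $(q_k,q_{k+1})\mapsto(q_{k+1},q_k)$ on $Q\times Q$ then yields the invariance of $\tilde{\epsilon}_d(Q\times Q)$ invoked earlier.
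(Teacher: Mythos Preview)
Your proof is correct and follows essentially the same route as the paper's: show that $\{q_k\}$ satisfies the discrete Euler--Lagrange equations iff $\{\varphi_d(q_k)\}$ does, then invoke regularity of $L_d$ so that the discrete Lagrangian map $F_{L_d}$ is well defined and uniqueness forces $\varphi_d(q_{k+1})=q_{k+1}$ whenever $(q_{k-1},q_k)\in M_{\varphi_d}$. Your version is somewhat more explicit than the paper's (the direct differentiation of $L_d\circ(\varphi_d\times\varphi_d)=\varepsilon L_d$ and the remark that the pullback $(T\varphi_d)^*$ is needed to compare covectors living in different fibres), but the underlying argument is the same.
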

\begin{proof}
The proof is similar to that of Proposition \ref{propo1}. 
Consider the space
\[
{\mathcal C}_d(q_0, q_N)=\{q_d: {k}_{k=0}^N\longrightarrow Q\; |\; q_0,q_N \;\text{fixed}\}
\]
and the discrete action sum 
\[
\begin{array}{rrcl}
S_d:& {\mathcal C}_d(q_0, q_N)&\longrightarrow& {\mathbb R}\\
   & q_d&\longmapsto& \sum_{k=0}^{N-1}L_d(q_k, q_{k+1})
\end{array}
\]
The extremals are characterized as the solutions of  the discrete Euler-Lagrange equations:
\[
D_1L_d(q_k, q_{k+1})+D_2 L_d(q_{k-1}, q_k)=0, \ k=1, \ldots, N-1\; .
\]
Then, it is clear that if $\{q_{k}\}_{k=0, \ldots,  N}$ is a solution of the discrete Euler-Lagrange equations, then from the invariance of $L_d$ we easily derive that 
$\{\varphi_d(q_{k})\}_{k=0, \ldots,  N}$ is also a solution with boundary conditions $\varphi_d(q_0)$ and $\varphi_d(q_N)$. 

Therefore, if $L_d$ is regular we have defined its discrete flow  or discrete Lagrangian map: 
$$
\begin{array}{cccc}
F_{L_d}: & Q\times Q & \longrightarrow & Q \times Q \\
& (q_{k-1},q_k) & \longmapsto & (q_k,q_{k+1}) \, ,
\end{array}
$$
observe that also $F_{L_d}(\varphi_d(q_{k-1}), \varphi_d(q_{k}))=(\varphi_d(q_{k}), \varphi_d(q_{k+1})$.
Now starting from initial conditions $(q_{0}, q_1)\in M_{\varphi_d}$, that is, $\varphi_d(q_0)=q_1, \varphi_d(q_1)=q_1$ from the unicity of solutions of the discrete Euler-Lagrange equations we obtain that  $(q_{k-1}, q_k)\in M_{\varphi_d}$, $k=1, \ldots, N$ and, as a consequence, $M_{\varphi_d}$ is an invariant set of the discrete Euler-Lagrange equations.

\end{proof}
\end{appendices}

 \bibliography{References}

\end{document}